\def\denseformat{
\setlength{\textheight}{9in}
\setlength{\textwidth}{6.9in}
\setlength{\evensidemargin}{-0.2in}
\setlength{\oddsidemargin}{-0.2in}
\setlength{\headsep}{10pt}
\setlength{\topmargin}{-0.3in}
\setlength{\columnsep}{0.375in}
\setlength{\itemsep}{0pt}
}
\newtheorem{theorem}{Theorem}[section]
\newtheorem{claim}[theorem]{Claim}
\newtheorem{lemma}[theorem]{Lemma}
\newtheorem{corollary}[theorem]{Corollary}
\newtheorem{comment}[theorem]{Comment}
\newtheorem{remark}[theorem]{Remark}
\def\boldhead#1:{\par\vskip 7pt\noindent{\bf #1:}\hskip 10pt}
\def\ithead#1:{\par\vskip 7pt\noindent{\it #1:}\hskip 10pt}
\def\inline#1:{\par\vskip 7pt\noindent{\bf #1:}\hskip 10pt}
\def\midinline#1:{\par\noindent{\bf #1:}\hskip 10pt}
\def\dnsinline#1:{\par\vskip -7pt\noindent{\bf #1:}\hskip 10pt}
\def\ddnsinline#1:{\newline{\bf #1:}\hskip 10pt}
\def\largeinline#1:{\par\vskip 7pt\noindent{\large\bf #1:}\hskip 10pt}
\long\def\comment #1\commentend{}
\long\def\commhide #1\commhideend{}
\long\def\commfull #1\commend{#1}
\long\def\commabs #1\commenda{}
\long\def\commtim #1\commendt{#1}
\long\def\commb #1\commbend{}
\long\def\commedit #1\commeditend{} 
\long\def\commB #1\commBend{}       
\long\def\commex #1\commexend{}     
\long\def\commsiena #1\commsienaend{}  
\long\def\commBI #1\commBIend{}  
\long\def\CProof #1\CQED{}
\def\blackslug{\hbox{\hskip 1pt \vrule width 4pt height 8pt
    depth 1.5pt \hskip 1pt}}
\def\QED{\quad\blackslug\lower 8.5pt\null\par}
\def\Proof{\par\noindent{\bf Proof:~}}
\def\proof{\Proof}
\long\def\PPP#1{\noindent{\bf Proof:}{ #1}{\quad\blackslug\lower 8.5pt\null}}
\long\def\denspar #1\densend
\newif\ifnotesw\noteswtrue
\ifnotesw\marginpar[\hfill\(\top\)]{\(\top\)}\fi}%
\ifnotesw\marginpar[\hfill\(\bot\)]{\(\bot\)}\fi}
\newcommand{\mnote}[1]%
    {\ifnotesw\marginpar%
        [{\scriptsize\it\begin{minipage}[t]{\marginparwidth}
        \raggedleft#1%
                        \end{minipage}}]%
        {\scriptsize\it\begin{minipage}[t]{\marginparwidth}
        \raggedright#1%
                        \end{minipage}}%
    \fi}
\def\cB{{\cal B}}
\def\cC{{\cal C}}
\def\cL{{\cal L}}
\def\cP{{\cal P}}
\def\cS{{\cal S}}
\def\cW{{\cal W}}
\def\hC{{\hat C}}
\def\hL{{\hat L}}
\def\hR{{\hat R}}
\def\tE{{\tilde E}}
\def\tG{{\tilde G}}
\def\tO{{\tilde O}}
\def\tV{{\tilde V}}
\def\MathF{\hbox{\rm I\kern-2pt F}}
\def\MathP{\hbox{\rm I\kern-2pt P}}
\def\MathR{\hbox{\rm I\kern-2pt R}}
\def\MathZ{\hbox{\sf Z\kern-4pt Z}}
\def\MathN{\hbox{\rm I\kern-2pt I\kern-3.1pt N}}
\def\MathC{\hbox{\rm \kern0.7pt\raise0.8pt\hbox{\footnotesize I}
\kern-4.2pt C}}
\def\MathQ{\hbox{\rm I\kern-6pt Q}}
\def\MathE{\hbox{{\rm I}\hskip -2pt {\rm E}}} 
\newsavebox{\ttop}\newsavebox{\bbot}
\def\eps{\epsilon}
\def\polylog{\mbox{polylog}}
\def\poly{\mbox{poly}}
\def\setmns{\setminus}
\newcommand{\half}{{1\over2}}
\def\nin{{~\not \in~}}
\def\emset{\emptyset}
\newcommand{\Prob}{\MathP}
\newcommand{\Expect}{\MathE}
\def\etal{\emph{et~al.}}
\newcommand{\alert}[1]{\textbf{\color{red}
[[[#1]]]}\marginpar{\textbf{\color{red}**}}\typeout{ALERT:
\the\inputlineno: #1}}
\newcommand{\namedref}[2]{\hyperref[#2]{#1~\ref*{#2}}}
\newcommand{\sectionref}[1]{\namedref{Section}{#1}}
\newcommand{\theoremref}[1]{\namedref{Theorem}{#1}}
\newcommand{\claimref}[1]{\namedref{Claim}{#1}}
\newcommand{\lemmaref}[1]{\namedref{Lemma}{#1}}
\newcommand{\remarkref}[1]{\namedref{Remark}{#1}}
\newcommand{\corollaryref}[1]{\namedref{Corollary}{#1}}
\begin{document}
\def\hpi{\hat{\pi}}
\def\rt{\mathit{rt}}
\def\hd{\hat{d}}
\def\chS{\hat{\cal S}}
\def\chP{\hat{\cal P}}
\def\chU{\hat{\cal U}}
\def\exp{\mathit{exp}}
\def\Rt{\mathit{Roots}}
\def\Lab{\mathit{Label}}
\def\Id{\mathit{Id}}
\def\Lev{\mathit{Level}}
\def\URt{\mathit{URoots}}
\def\Ball{\mathit{Ball}}
\def\wmax{{w_{max}}}
\def\tO{\tilde{O}}
\def\nin{\not \in}
\def\emset{\emptyset}
\def\setmns{\setminus}
\def\etal{{et al.~}}
\def\Pairs{\mathit{Pairs}}
\def\Paths{\mathit{Paths}}
\def\pred{\mathit{pred}}
\def\Rad{\mathit{Rad}}
\def\succ{\mathit{succ}}
\def\NULL{\mathit{NULL}}
\def\exp{\mathit{exp}}
\def\Ball{\mathit{Ball}}
\def\tPi{\tilde{\Pi}}
\def\deg{\mathit{deg}}
\def\TB{\cB^{{1/3}}}
\def\Branch{\mathit{Branch}}
\def\uzero{u^{(0)}}
\def\uone{u^{(1)}}
\def\uj{u^{(j)}}
\def\vzero{v^{(0)}}
\def\vone{v^{(1)}}
\def\vj{v^{(j)}}
\def\dzero{d^{(0)}}
\def\done{d^{(1)}}
\def\dj{d^{(j)}}
\def\dG{d_G}
\def\third{{1 \over 3}}
\def\stretchexp{{\log_{4/3} 7}}
\def\tLambda{\tilde{\Lambda}}
\def\tomega{\tilde{\omega}}
\def\HKNfactor{ 2^{\tO(\sqrt{\log \max \{n,\Lambda\}})}}
\def\ourfactor{ (1/\eps)^{O(\sqrt{{\log n} \over {\log\log n}})} \cdot 2^{O(\sqrt{\log n \cdot \log\log n})} }

\newcommand{\Patrascu}{P\v{a}tra\c{s}cu{~}}
\newcommand{\Lists}{{\rm Lists}}
\newcommand{\td}{{\tilde{d}}}

\title{Hopsets with Constant Hopbound, and Applications to Approximate Shortest Paths}
\author[1]{Michael Elkin\thanks{This research was supported by the ISF grant 724/15.}}
\author[1]{Ofer Neiman\thanks{Supported in part by ISF grant No. (523/12) and by the European Union Seventh Framework
Programme (FP7/2007-2013) under grant agreement $n^\circ 303809$.}}

\affil[1]{Department of Computer Science, Ben-Gurion University of the Negev,
Beer-Sheva, Israel. Email: \texttt{\{elkinm,neimano\}@cs.bgu.ac.il}}

\date{}
\maketitle

\begin{abstract}
A $(\beta,\eps)$-hopset for a weighted undirected $n$-vertex graph $G=(V,E)$ is a set of edges, whose addition to the graph guarantees that every pair of vertices has a path between them that contains at most $\beta$ edges, whose length is within $1+\eps$ of the shortest path. In her seminal paper, Cohen \cite[JACM 2000]{C00} introduced the notion of hopsets in the context of parallel computation of approximate shortest paths, and since then it has found numerous applications in various other settings, such as dynamic graph algorithms, distributed computing, and the streaming model.

Cohen \cite{C00} devised efficient algorithms for constructing hopsets with {\em polylogarithmic} in  $n$ number of hops. Her constructions remain the state-of-the--art since the publication of her paper in STOC'94, i.e., for more than two decades.

In this paper we exhibit the first construction of sparse hopsets with a {\em constant number of hops}. We also find efficient algorithms for hopsets in various computational settings, improving the best known constructions.
Generally, our hopsets strictly outperform the hopsets of \cite{C00}, both in terms of their parameters, and in terms of the resources required to construct them.

We demonstrate the applicability of our results for the fundamental problem of computing approximate shortest paths from $s$ sources. Our results improve the running time for this problem in the parallel, distributed and streaming models, for a vast range of $s$.
\end{abstract}


\thispagestyle{empty}
\newpage
\setcounter{page}{1}

\section{Introduction}

\subsection{Hopsets, Setting and Main Results}

We are given an $n$-vertex weighted undirected graph $G = (V,E,\omega)$. Consider another graph $G_H = (V,H,\omega_H)$ on the same vertex set $V$.
Define the union graph $G' = G \cup G_H$,
$G' = (V,E' = E \cup H,\omega')$, where $\omega'(e) = \omega_H(e)$ for $e \in H$, and $\omega'(e) = \omega(e)$ for $e \in E \setminus H$. For a positive  integer parameter $\beta$, and a pair $u,v \in V$ of distinct vertices, a {\em $\beta$-limited distance} between $u$ and $v$ in $G'$, denoted $d_{G'}^{(\beta)}(u,v)$, is the length of the shortest $u$-$v$ path in $G'$ that contains at most $\beta$ edges (aka {\em hops}). For a parameter $\eps > 0$, and a positive integer $\beta$ as above, a graph $G_H = (V,H,\omega_H)$ is called a {\em $(\beta,\eps)$-hopset} for the graph $G$, if for every pair $u,v \in V$ of vertices, we have
$d_G(u,v) \le d_{G'}^{(\beta)}(u,v) \le (1 + \eps) \cdot d_G(u,v)$.
 (Here $d_G(u,v)$ stands for the distance between $u$ and $v$ in $G$.)
We often refer to the edge set $H$ of $G_H$ as the {\em hopset}. The parameter $\beta$ is called the {\em hopbound} of the hopset.

Hopsets are a  fundamental  graph-algorithmic construct. They turn out extremely useful for computing approximate shortest paths, distances, and for routing problems in numerous computational settings, in which computing shortest paths with a limited number of hops is significantly easier than computing shortest paths with no limitation on the number of hops. A partial list of these settings includes distributed, parallel, streaming and centralized dynamic models.

Hopsets were explicitly introduced in Cohen's seminal STOC'94 paper \cite{C00}. Implicit constructions of hopsets were given already in the beginning of nineties by Ullman and Yannakakis \cite{UY91}, Klein and Sairam \cite{KS97}, Cohen \cite{C97}, and Shi and Spencer \cite{SS99}. Cohen \cite{C00} showed that for any parameters $\eps > 0$ and $\kappa = 1,2,\ldots$, and any $n$-vertex graph $G$, there exists a $(\beta,\eps)$-hopset $H$ with $|H|  = \tO(n^{1+1/\kappa})$ edges,\footnote{The notation $\tO(f(n))$ stands for $O(f(n) \cdot \log^{O(1)} f(n))$.} where the hopbound $\beta$ is polylogarithmic in $n$. Specifically, it is given by $\beta = \left({{\log n} \over \eps}\right)^{O(\log \kappa)}$. Algorithmically, she showed that given an additional parameter $\rho > 0$, $(\beta,\eps)$-hopsets with
\begin{equation}
\label{eq:beta_cohen}
\beta_{Coh} = \left({{\log n} \over \eps}\right)^{O((\log \kappa)/\rho)}
\end{equation}
can be computed in $O(|E| \cdot n^\rho)$ time in the centralized model of computation, and in $O(\beta) \cdot \polylog(n)$ PRAM time, with $O(|E| \cdot n^\rho)$ work.
She used these hopsets' constructions to devise efficient parallel algorithms for computing $S \times V$ $(1+\eps)$-approximate shortest paths (henceforth, $(1+\eps)$-ASP). Her results for these problems remained the state-of-the-art in this context up until now, for over two decades.

Despite being a major breakthrough in the nineties, Cohen's hopsets leave much to be desired. Indeed, the only general lower bound applicable to them is that of \cite{CG06} (based on \cite{Y82,AS87}), asserting that there exist $n$-vertex graphs for which any $(\beta,\eps)$-hopset requires $\Omega(n \cdot \log^{(\lfloor \beta/2 \rfloor)} n)$ edges, where $\log^{(t)} n$ stands for a $t$-iterated logarithm.
Cohen \cite{C00} herself wrote in the introduction of her paper (the italics are in the origin):
\\
\\
``One intriguing issue is the {\em existence} question of sparse hop sets with certain attributes. In addition, we would like to construct them efficiently."
\\
\\
The same motive repeats itself in the concluding section of her paper, where she writes:
\\
\\
``We find the existence of good hop sets to be an intriguing research problem on its own right."
\\
\\
In the more than twenty years that passed since Cohen's \cite{C00} paper was published (in STOC'94), numerous additional applications of hopsets were discovered, and also some new constructions of hopsets were presented. Most notably, Bernstein \cite{B09} and Henzinger \etal \cite{HKN14} devised new constructions of hopsets, and used them for maintaining approximate shortest paths in dynamic centralized setting. Nanongkai \cite{N14} and Henzinger \etal \cite{HKN15} used hopsets for computing approximate shortest paths in distributed and streaming settings. Lenzen and Patt-Shamir \cite{LP15} and the authors of the current paper \cite{EN16} used them for compact routing. Miller \etal \cite{MPVX15} devised  new constructions of hopsets, and used them for approximate shortest paths in PRAM setting.
The hopsets of \cite{B09,HKN14,N14,HKN15} have hopbound $2^{\tO(\sqrt{\log n})}$, and size $n \cdot 2^{\tO(\sqrt{\log n})}\cdot\log\Lambda$, where $\Lambda$ is the aspect ratio of the graph.\footnote{The aspect ratio of a graph $G$ is defined by the ratio of the largest distance to the smallest distance in $G$.} The hopsets of Miller \etal \cite{MPVX15} have hopbound at least $\Omega(n^{\alpha})$, for a constant $\alpha > 0$, and linear size.

We will discuss these results of \cite{B09,HKN14,N14,HKN15,MPVX15} in greater detail in the sequel. However, they all fail to address the fundamental challenge of Cohen \cite{C00}, concerning the existence and efficient constructability of hopsets that are strictly and substantially superior to those devised in \cite{C00}.
In this paper we build such hopsets. Specifically, for any $\eps >0$, $\kappa = 1,2\ldots$, and any $n$-vertex graph $G = (V,E)$, we show that there exists a $(\beta,\eps)$-hopset with $\beta = \left( {{\log \kappa} \over \eps} \right)^{\log \kappa}$, and $O(n^{1+1/\kappa} \cdot \log n)$ edges. Hence, these hopsets {\em simultaneously}  exhibit arbitrarily small constant approximation factor $1+\eps$, arbitrarily close to 1 constant exponent $1+ 1/\kappa$ of the hopset's size, and {\em constant} hopbound $\beta$.
In all previous hopsets' constructions, the hopbound was at least {\em polylogarithmic} in $n$, in all regimes. Moreover, we devise efficient algorithms to build our hopsets in various computational models. Specifically, given a parameter $\rho > 0$ that controls the running time, our centralized algorithm constructs a $(\beta,\eps)$-hopset with $O(n^{1+1/\kappa} \cdot \log n)$ edges in expected $O(|E|\cdot n^\rho)$ time, with
\begin{equation}
\label{eq:our_beta}
\beta ~=~ O\left({1 \over \eps} \cdot (\log \kappa  + 1/\rho)\right)^{\log \kappa  + O(1/\rho)}~.
\end{equation}
Again, we can simultaneously have arbitrarily small constant approximation $1+\eps$, arbitrarily close to 1 hopset's size exponent $1 + 1/\kappa$, arbitrarily close to $O(|E|)$ running time (in the sense $O(|E| \cdot n^\rho)$, for an arbitrarily small constant $\rho > 0$), and still the hopbound $\beta$ of the constructed hopset remains {\em constant}!

As was mentioned above, in \cite{C00} (the previous state-of-the-art), with the same approximation factor, hopset size and running time, the hopbound behaves as given in (\ref{eq:beta_cohen}).
Hence our result is stronger than that of \cite{C00} in a number of senses. First, the hopbound $\beta_{Coh}$ is at least polylogarithmic, while ours is constant.
Second, the exponent $O((\log \kappa)/\rho)$ of $\beta_{Coh}$ is substantially larger than the exponent $\log \kappa + O(1/\rho)$ in our $\beta$. See Table \ref{table:hopset} for a concise comparison of existing hopsets' constructions.

\begin{table*}
\centering

\begin{tabular}{ |l|l|l|l| }
\hline
 Reference & Size & Hopbound & Run-time\\
\hline
\cite{B09,HKN14,HKN15} & $n\cdot 2^{\tilde{O}(\sqrt{\log n})} \cdot \log \Lambda $ & $2^{\tilde{O}(\sqrt{\log n})}$ & \\
\hline
\cite{MPVX15} & $O(n)$ & $n^\alpha$, ($\alpha = \Omega(1)$) & \\
\hline
\cite{C00} & $O(n^{1+\frac{1}{\kappa}}\cdot\log n)$ &  $(\log n)^{O(\frac{\log\kappa}{\rho})}$ & $|E|\cdot n^\rho$ \\
\hlineB{3}
{\bf This paper} & $O(n^{1+\frac{1}{\kappa}}\cdot\log n)$ &  $\left(\log\kappa+\frac{1}{\rho}\right)^{\log\kappa+O(\frac{1}{\rho})}$ & $|E|\cdot n^\rho$ \\
\hline
\end{tabular}
\caption{Comparison between $(\beta,\eps)$-hopsets (neglecting the dependency on $\eps$). We note that the hopsets of \cite{B09,HKN14,HKN15,MPVX15} were designed for certain computational models (i.e., dynamic, streaming, distributed).}\label{table:hopset}
\end{table*}

\subsection{Hopsets in Parallel, Streaming and Distributed Models}

We also devise efficient parallel, distributed and streaming algorithms for constructing hopsets with constant hopbound.

\subsubsection{Hopsets in the Streaming Model}

In the streaming model, the only previously known algorithm for constructing hopsets is that of \cite{HKN15}. Using $2^{\tO(\sqrt{\log n})} \cdot \log \Lambda$ passes over the stream, and $n \cdot 2^{\tO(\sqrt{\log n})} \cdot \log \Lambda$ space, their algorithm produces a hopset with hopbound $\beta = 2^{\tO(\sqrt{\log n})}$, and size $n \cdot 2^{\tO(\sqrt{\log n})}\cdot\log\Lambda$. Our streaming algorithm constructs a hopset with $\beta$ \begin{equation}
\label{eq:beta_streaming}
\beta = O\left({{\log \kappa  + 1/\rho} \over {\eps \cdot \rho}}\right)^{\log \kappa  + O(1/\rho)}~,
\end{equation}
i.e., it is independent of $n$.
The expected size of the hopset is $O(n^{1+1/\kappa} \cdot \log n)$
, and
it uses space $O(n^{1+1/\kappa} \cdot \log^2 n)$
. The number of passes is $O(n^\rho \cdot \beta)$
.
Also, by setting $\kappa = \Theta(\log n)$, $\rho = \sqrt{{\log\log n} \over {\log n}}$, our result  strictly dominates that of \cite{HKN15}; the hopbound and number of passes are essentially the same, while our space usage and hopset's size are significantly better.

\subsubsection{Hopsets in the PRAM Model}

In the PRAM model, Klein and Sairam \cite{KS97} and Shi and Spencer \cite{SS99} (implicitly) devised algorithms for constructing exact ($\eps = 0$) hopsets with hopbound $\beta = O(\sqrt{n})$ of linear size $O(n)$, in parallel time $O(\sqrt{n} \cdot \log n)$, and with $O(|E| \cdot \sqrt{n})$ work.
(Work is the total number of operations performed by all processors during the algorithm.)
Cohen \cite{C00} constructed $(\beta,\eps)$-hopsets with size
 $n^{1+1/\kappa} \cdot (\log n)^{O((\log \kappa)/ \rho)}$, with hopbound $\beta_{Coh}$ given by (\ref{eq:beta_cohen}),
in parallel time $\left({{\log n} \over \eps}\right)^{O((\log \kappa)/\rho)}$, using $O(|E| \cdot n^\rho)$ work.
Her $\kappa$ and $\rho$ are restricted by $\kappa,1/\rho = O(\log\log n)$, and thus the resulting hopset is never sparser than $n \cdot 2^{O({{\log n} \over {\log \log n}})}$.

Miller \etal \cite{MPVX15} devised two constructions of linear-size  $(\beta,\eps)$-hopsets, but with very large $\beta$. One has $\beta = O_\eps(n^{{4 + \alpha} \over {4 + 2\alpha}})$, and  running time given by the same expression, and work $O(|E| \cdot \log^{3 + \alpha} n)$, for a free parameter $\alpha$. Another has $\beta = n^\alpha$, for a constant $\alpha$, and running time given by the same expression, and work $O(|E| \cdot \log^{O(1/\alpha)}n)$.

Our algorithm has two regimes. In the first regime it constructs $(\beta,\eps)$-hopsets with
$\beta = \left({{\log n} \over \eps} \right)^{\log \kappa + O(1/\rho)}$, with expected size $O(n^{1+1/\kappa} \cdot \log n)$, in time
$\left({{\log n} \over \eps} \right)^{\log \kappa + O(1/\rho)}$, using $O(|E| \cdot n^\rho)$ work.
This result strictly improves upon Cohen's hopset \cite{C00}, as the exponent of $\beta$ and of the running time in the latter is $O((\log \kappa)/\rho)$, instead of $\log \kappa  + O(1/\rho)$ in our case. Also, the size of our hopset is smaller than that of \cite{C00} by a factor of $O(\log n)^{O((\log \kappa)/\rho)}$.

In the second regime our PRAM algorithm computes a hopset with constant (i.e., independent of $n$)  hopbound $\beta$, but in larger parallel time.
See Table \ref{table:pram} for a concise comparison of available PRAM algorithms.

\begin{table*}
\centering
\small
\begin{tabular}{ |l|l|l|l|l| }
\hline
 Reference & Size & $\beta$ = Hopbound & Time & Work \\
\hline
\cite{KS97,SS99} & $O(n)$ & $O(\sqrt{n})$ & $O(\sqrt{n}\log n)$ & $O(|E|\cdot\sqrt{n})$\\
\hlineB{2}
\multirow{2}{*}{\cite{MPVX15}} & $O(n)$ & $O(n^{\frac{4+\alpha}{4+2\alpha}})$ & $O(n^{\frac{4+\alpha}{4+2\alpha}})$ & $O(|E|\cdot\log^{3+\alpha}n)$\\
\cline{2-5}
& $O(n)$ & $O(n^\alpha)$ ($\alpha\ge\Omega(1)$)& $O(n^{\alpha})$ & $O(|E|\cdot\log^{O(1/\alpha)}n)$\\
\hlineB{2}
\cite{C00} & $n^{1+1/\kappa}\cdot (\log n)^{O(\frac{\log\kappa}{\rho})}$ & $(\log n)^{O(\frac{\log\kappa}{\rho})}$ & $(\log n)^{O(\frac{\log\kappa}{\rho})}$  &  $O(|E|\cdot n^\rho)$ \\
\hlineB{3}
\multirow{2}{*}{{\bf This paper} }& $O(n^{1+\frac{1}{\kappa}}\cdot \log n)$ & $\left(\log n\right)^{\log\kappa+O(\frac{1}{\rho})}$ & $(\log n)^{\log\kappa+O(\frac{1}{\rho})}$  & $O(|E|\cdot n^\rho)$ \\
\cline{2-5}
& $O(n^{1+\frac{1}{\kappa}}\cdot\log n)$ & $\left(\frac{\log\kappa+\frac{1}{\rho}}{\zeta}\right)^{\log\kappa+O(\frac{1}{\rho})}$ & $O(n^\zeta)\cdot\beta$  & $O(|E|\cdot n^{\rho+\zeta})$ \\
\hline

\end{tabular}
\caption{Comparison between $(\beta,\eps)$-hopsets in the PRAM model (neglecting the dependency on $\eps$). The hopsets of \cite{KS97,SS99} provide exact distances.}\label{table:pram}
\end{table*}

\subsection{Hopsets in Distributed Models}\label{sec:models}

There are two distributed models in which hopsets were studied in the literature \cite{HKN14,N14,HKN15,LP15,EN16}.
These are the Congested Clique model, and the CONGEST  model. In both models every vertex of an $n$-vertex graph $G = (V,E)$ hosts a processor, and the processors communicate with one another in discrete rounds, via short messages. Each message is allowed to contain an identity of a vertex or an edge, and an edge weight, or anything else of no larger (up to a fixed constant factor) size.\footnote{Typically, in the CONGEST model only messages of size $O(\log n)$ bits are allowed, but edge weights are restricted to be at most polynomial in $n$. Our definition is geared to capture a more general situation, when there is no restriction on the aspect ratio. Hence results achieved in our more general model are more general than previous ones.} On each round each vertex can send possibly different messages to its neighbors. The local computation is assumed to require zero time, and we are interested in algorithms that run for as few rounds as possible. (The number of rounds is called the {\em running time}.)
In the Congested Clique model, we assume that all vertices are interconnected via direct edges, but there might be some other weighted undirected graph $G' = (V,E',\omega)$,  $E' \subseteq E = {V \choose 2}$, embedded in the clique $G$, for which we want to compute a hopset. In the CONGEST model, every vertex can send messages only to its $G$-neighbors, but we also assume that there is an embedded ``virtual" graph $G' = (V',E',\omega)$, $V' \subseteq V$, known locally to the vertices. (Every vertex $u \in V$ knows at the beginning of the computation if $u \in V'$, and if it is the case, then  it also knows the identities of its $G'$-neighbors.) We remark that the assumption of embedded graph $G'$ in the CONGEST model appears in previous papers on computing hopsets in distributed setting, that is, in \cite{HKN14,N14,HKN15,LP15,EN16}. It is motivated by  distributed applications of hopsets, i.e., approximate shortest paths computation, distance estimation and routing, which require a hopset for a virtual graph embedded in the underlying network in the above way.

Henzinger \etal \cite{HKN15} devised an algorithm for constructing hopsets in the Congested Clique model. Their hopset has hopbound $\beta = 2^{\tO(\sqrt{\log n})}$, and size $n \cdot 2^{\tO(\sqrt{\log n})}\cdot\log\Lambda$, where $\Lambda$ is the aspect ratio of the embedded graph.
The running time of their algorithm is $2^{\tO(\sqrt{\log n})}\cdot\log\Lambda$.

Our algorithm, for parameters $\eps > 0$, $\rho > 0$, $\kappa = 2,3,\ldots$, computes a hopset with
\begin{equation}
\label{eq:beta_distr}
\beta = O\left({{\log \kappa  + 1/\rho} \over {\eps \cdot \rho}}\right)^{\log \kappa  + O(1/\rho)}~,
\end{equation}
with expected size $O(n^{1+1/\kappa} \cdot \log n)$, in $O(n^\rho \cdot \beta^2)$ rounds.

Comparing our result to that of \cite{HKN15}, we first note that our hopset achieves a constant (i.e., independent of $n$) hopbound. Second, by setting $\kappa = \Theta(\log n)$, $\rho = \sqrt{{\log\log n} \over {\log n}}$, we can have our hopbound and running time equal to $2^{\tO(\sqrt{\log n })}$, i.e., roughly the same as, but in fact, slightly better than, the respective bounds of \cite{HKN15}. Our hopset's size becomes then $O(n \cdot \log n)$, i.e., much closer to linear than
$n \cdot 2^{\tO(\sqrt{\log n })}$ of the hopset of \cite{HKN15}.

The situation is similar in the CONGEST model. Denote by $m = |V'|$ the size of the vertex set of the embedded graph $G'$. The algorithm of \cite{HKN15} computes a hopset with the same hopbound and size as in the Congested Clique model (with $n$ replaced by $m$), and it does so in $(D + m) \cdot 2^{\tO(\sqrt{\log m})}\cdot\log\Lambda$ time, where $\Lambda$ is the aspect ratio of $G'$, and $D$ is the hop-diameter of $G$.\footnote{The hop-diameter of a graph is the maximum hop-distance between two vertices. The hop-distance  between a pair $u,v$ of vertices is the minimal number of hops in a path between them.} Our algorithm computes a hopset with (constant) hopbound given by (\ref{eq:beta_distr}), expected size $O(m^{1+1/\kappa} \cdot \log m)$, in $O((D + m^{1+\rho}) \cdot \beta \cdot m^\rho)$ time. (See Corollary \ref{cor:congest_hopset_reduction}, which gives, in fact, stronger, but more complicated bounds.) Again, our hopset can have constant hopbound, while that of \cite{HKN15} is $2^{\tO(\sqrt{\log m})}$. Also, by setting $\kappa = \Theta(\log m)$, $\rho = \sqrt{{\log\log m} \over {\log m}}$, we obtain a result, which strictly dominates that of \cite{HKN15}.

\subsection{Applications}

Our algorithms for constructing hopsets also give rise to improved algorithms for the problems of computing $(1+\eps)$-approximate shortest distances
(henceforth, $(1+\eps)$-ASD) and paths (henceforth, $(1+\eps)$-ASP).  In all settings, we consider a subset $S \subseteq V$ of origins, and we are interested in distance estimates or in approximate shortest paths for pairs in $S \times V$.  Denote $s = |S|$.

Our PRAM algorithm for the $(1 + \eps)$-ASP problem has running time
 $O\left({{ \log n} \over \eps}\right)^{\log \kappa + O(1/\rho)}$,
and uses $O(|E| \cdot (n^\rho + s)$ work.  Cohen's algorithm \cite{C00} for the same problem has (parallel) running time $O\left({{\log n} \over \eps}\right)^{O((\log \kappa)/\rho)}$, and has the same work complexity as our algorithm. Hence, both our and Cohen's algorithms achieve polylogarithmic time and near-optimal work complexity, but the exponent of the logarithm in our result is significantly smaller than in Cohen's one.

\begin{comment}
In the congested clique model, Henzinger \etal \cite{HKN15} used hopsets to come up with an algorithm that computes single-source $(1+\eps)$-ASP in $2^{\tO(\sqrt{\log n})}$ time. Applying their algorithm separately from each source results in time $s \cdot 2^{\tO(\sqrt{\log n})}$. Our algorithm computes $S \times V$ $(1+\eps)$-ASP for $s = n^{\Omega(1)}$, in $s \cdot (1/\eps)^{O(1)}$ time.  We remark that an algorithm of Censor-Hillel \etal \cite{CH??} computes all-pairs ASP in $O(n^{0.158})$ time.
Hence our result here improves the state-of-the-art for the range $n^{\Omega(1)} = s = o(n^{0.158})$.
\end{comment}

In the distributed CONGEST model (see \sectionref{sec:models} for its definition), the hopset-based algorithm of \cite{HKN15} computes single-source $(1+\eps)$-ASP in $(D+ \sqrt{n}) \cdot 2^{\tO(\sqrt{\log n})}$ time.
Using it naively for $S \times V$ $(1+\eps)$-ASP results in running time of $(D+ s \cdot \sqrt{n}) \cdot 2^{\tO(\sqrt{\log n})}$. Using our hopsets we solve this problem in
$(D + \sqrt{n \cdot s})  \cdot 2^{\tO(\sqrt{\log n })}$ time.
Whenever $s = n^{\Omega(1)}$, we use our hopset with different parameters, and our running time becomes $\tO(D + \sqrt{n \cdot s})$.

In the streaming model, Henzinger \etal \cite{HKN15} devised a single-source $(1 + \eps)$-ASP streaming algorithm with $2^{\tO(\sqrt{\log n})} \cdot \log \Lambda$ passes, that uses $n \cdot 2^{\tO(\sqrt{\log n})} \cdot \log \Lambda$ space. To the best of our knowledge, the best-known streaming $S \times V$ $(1+\eps)$-ASP algorithm with this space requirement is to run the algorithm of \cite{HKN15} for each source separately, one after another. The resulting number of passes is
$s \cdot 2^{\tO(\sqrt{\log n})} \cdot \log \Lambda$. Our algorithm for this problem builds a hopset, whose parameters depend on $s$. As a result, our algorithm  has an  improved number of passes, particularly when $s$ is large (we also avoid the dependence on $\Lambda$). Our space usage is only $\tilde{O}(n)$ for $(1+\eps)$-ASD. See \theoremref{thm:stream-paths} for the precise results.

\subsection{Overview of Techniques}

In this section we sketch the main ideas used in the hopsets' constructions  of \cite{C00}, in \cite{B09,N14,HKN14,HKN15}, and in our constructions.

Cohen's algorithm \cite{C00} starts with constructing a pairwise cover $\cC$ of the input graph \cite{C93,ABCP93}.   This is a collection of small-diameter clusters, with limited intersections, and such that for any path $\pi$ of length at most $W$, for a parameter $W$, all vertices of $\pi$ are clustered in the same cluster. For each cluster $C \in \cC$, the algorithm inserts into the hopset a star $\{(r_C,u) \mid u \in C\}$ connecting the center $r_C$ of $C$ with every other vertex of $C$. In addition, it adds to the hopset edges connecting centers of large clusters with one another, and recurses on small clusters.

This powerful approach has a number of limitations. First, the collection of star edges itself contains $O(\kappa \cdot n^{1+1/\kappa})$ edges, where $\kappa$ is a parameter, which controls the hopset's size. Each level of the recursion increases the exponent of the number of edges in the hopset by roughly a factor of
$\kappa \cdot n^{1/\kappa}$, and as a result, the hopset of \cite{C00} cannot be very sparse.
Second, each distance scale $[2^k,2^{k+1}]$, $k = 0,1,2,\ldots$, requires a separate hopset, and as a result, a separate collection of covers. This increases the hopset's size even further, but in addition, a hopset of scale $k+1$ in Cohen's algorithm is computed using  hopsets of all the lower scales. This results in accumulation of error, i.e., if the error incurred by each hopset computation is $1+\eps$, the approximation factor  of the ultimate hopset becomes $(1+\eps)^{\log \Lambda}$.
After rescaling $\eps' = \eps \log \Lambda$, one obtains a hopbound of roughly $(1/\eps)^\ell = O({{(\log \Lambda) \cdot \ell} \over \eps'})^\ell$, where $\ell$ is the number of levels of the recursion. As a result,  the hopbound in \cite{C00} is at least polylogarithmic in $n$.

Another line of works \cite{B09,HKN14,HKN15,N14} is based on the distance oracles and emulators\footnote{A graph $G' = (V',E',\omega')$ is called a {\em $(1+\eps,\beta)$-emulator} of an unweighted graph $G = (V,E)$, if $V \subseteq V'$, and for every pair of $u,v \in V$ of vertices, it holds that $d_{G'}(u,v) \le d_G(u,v) \le (1+ \eps) d_G(u,v) + \beta$. If $G'$ is a subgraph of $G$, then $G$ is called a {\em $(1+ \eps,\beta)$-spanner} of $G$.}  of Thorup and Zwick \cite{TZ01,TZ06}.
They build a hierarchy of sampled sets $V = A_0 \supset A_1 \supset \ldots A_{k-1} \supset A_k = \emset$, where for any $i = 1,\ldots,k-1$, each vertex $v\in A_{i-1}$ joins $A_i$ independently at random with probability $n^{-1/k}$. For each vertex $v \in V$, one can define the {\em TZ cluster} $C(v)$ by
$C(v) = \bigcup_{i=0}^{k-1} \{u \mid u \in A_i, d_G(u,v) < d_G(u,A_{i+1})\}$.  Thorup and Zwick \cite{TZ06} showed that for {\em unweighted} graphs
$H = \{(v,u) \mid u \in C(v)\}$ is a $(1+\eps,\beta)$-emulator with $O(k\cdot n^{1+1/k})$ edges, and $\beta = O(k/\eps)^k$.
Bernstein and others \cite{B09,HKN14,HKN15,N14}  showed that a closely related construction provides a hopset. Specifically, they set $k= \Theta(\sqrt{\log n})$, and build TZ clusters with respect  to $2^{\tO(\sqrt{\log n})}$-limited distances. This results in a so-called {\em restricted hopset}, i.e., a hopset $H_1$ that handles $2^{\tO(\sqrt{\log n})}$-limited distances.  Consequently, all nearly shortest paths with $N$ hops in $G$, for some $N$, translate now into
nearly shortest paths (incurring an approximation factor of $1+\eps$ of $H_1$) with ${N \over {2^{\tO(\sqrt{\log n})}}}$ hops in $G \cup H_1$. Nanongkai \cite{N14} called this operation a {\em hop reduction},
as this essentially reduces the maximum number of hops from $n-1$ to $n/2^{\sqrt{\log n}}$. Then the hop reduction is repeated for $\sqrt{\log n}$ times, until a hopset for all distances is constructed.

This scheme appears to be incapable of providing very sparse hopsets, as just the invocation of Thorup-Zwick's algorithm with $\kappa = \Theta(\sqrt{\log n})$
gives $n \cdot 2^{\Omega(\sqrt{\log n})}$ edges. In addition, the repetitive application of hop reduction blows up the hopbound to $2^{\Omega({\sqrt{\log n}})}$, i.e., the large hopbound appears to be  inherent in this approach.

Our approach combines techniques from \cite{EP04} for constructing $(1+\eps,\beta)$-spanners in unweighted graphs with those of \cite{C00},
and with a suit of new ideas.
To build their spanners,  \cite{EP04} start with constructing an Awerbuch-Peleg's partition $\cP = \{C_1,\ldots,C_q\}$ \cite{AP92} of the vertex set $V$ into disjoint clusters of small diameter. (This partition satisfies an additional property which is irrelevant to this discussion.) It then sets a distance threshold $\delta_1$ and a degree threshold $\deg_1$. Every cluster $C \in \cP$ that has at least $\deg_1$ unclustered clusters $C' \in \cP$ in its $\delta_1$-vicinity creates a supercluster which contains $C$ and these clusters. (At the beginning all clusters are unclustered. Those that join a supercluster become clustered.)
This {\em superclustering step} continues until no additional superclusters can be formed. All the remaining unclustered clusters which are at pairwise distance at most $\delta_1$ are now interconnected by shortest paths in the spanner. This is the {\em interconnection} step of the  algorithm.
Together the superclustering and interconnection steps from a single {\em phase} of the algorithm. Once the first phase is over, the same process (interleaving superclustering and interconnection) is repeated with new distance and degree thresholds $\delta_2$ and $\deg_2$, respectively, on the set of superclusters of the previous phase. The  sequences $\delta_1,\delta_2,\ldots$ and $\deg_1,\deg_2,\ldots$ are set carefully to optimize the parameters of the resulting spanner.

The basic variant of our hopset construction considers each distance scale  $[2^k,2^{k+1}]$, $k = 0,1,2,\ldots$, separately (w.l.o.g we assume all weights are at least 1). Instead of Awerbuch-Peleg's partition, we use the partition $\cP = \{\{v\} \mid v \in V\}$ into single vertices. We set the distance threshold $\delta_1$ to roughly $2^k/\beta = 2^k/(1/\eps)^\ell$, where $\ell$ is the number of phases of the algorithm, and raise it by a factor of $1/\eps$ on every phase. The degree thresholds are also set differently from the way they were set in \cite{EP04}. This is because, intuitively, the hopset contains less edges than the spanner, as the hopset can use a single edge where a spanner needs to use an entire path.
Hence the degree sequence that optimizes the hopset's size is different than the one that optimizes the spanner's size.

The superclustering and interconnection steps are also implemented in a different way than in \cite{EP04}, because of efficiency considerations. The algorithm of \cite{EP04} is not particularly efficient, and there are no known efficient streaming, distributed  or parallel implementation of it. \footnote{The algorithms of \cite{E01,EZ06} that construct $(1+\eps,\beta)$-spanners in distributed and streaming settings are not based on superclustering and interconnection technique. Rather they are based on a completely different  approach, reminiscent to that  of \cite{C00}, i.e., they build  covers, and recurse in small clusters.}
On phase $i$ we sample clusters $C \in \cP$ independently at random with probability $1/\deg_i$. The sampled clusters create superclusters of radius $\delta_i$ around them. Then the unclustered clusters of $\cP$ which are within distance $\delta_i/2$ from one another are interconnected by hopset edges.
Note that here the superclustering distance threshold and the interconnection distance thresholds differ by a factor of 2. This ensures that all involved Dijkstra explorations can be efficiently implemented. We also show that the overhead that this factor introduces to the resulting parameters of our hopset is insignificant.


Our approach (interleaving superclustering and interconnection steps) to constructing hopsets was not previously used in the hopsets' literature \cite{C00,B09, HKN14,N14,HKN15}. Rather it is adapted from \cite{EP04}. The latter paper deals with nearly-additive spanners for unweighted graphs. We believe that realizing that the technique of \cite{EP04} can be instrumental for constructing drastically improved hopsets, and adapting that technique from the context of near-additive spanners for unweighted graphs to the context of hopsets for general graphs is our main technical contribution.

To construct a hopset for {\em  all}  scales, in the centralized setting we simply take the union of the single-scale hopsets. In parallel, distributed and streaming settings, however,  Dijkstra explorations for large scales could be too expensive. To remedy this,
we rely on lower-scales hopsets for computing the current scale, like in Cohen's algorithm. On the other hand, a naive application of this approach results in polylogarithmic hopbound $\beta$. 
To achieve {\em constant} (i.e., independent of $n$) hopbound $\beta$, we compute in parallel hopsets for many different scales, using the same low-scale hopset for  distance computations.
This results in a much smaller accumulation of error than in Cohen's scheme, but requires more running time. (Roughly speaking, computing a scale-$t$ hopset using scale-$s$ hopset, for $t > s$, requires time proportional to $2^{t-s}$.)  We carefully balance this increase in running time with other parameters, to optimize the attributes of our ultimate hopset.

Finally, one needs to replace the logarithmic dependence on the aspect ratio $\Lambda$, by the same dependence on $n$. Cohen's results \cite{C00} do not have this dependence, as they rely on a PRAM reduction of Klein and Sairam \cite{KS97}, However, Klein and Sairam \cite{KS97} (see also \cite{C97} for another analysis) analyzed this reduction for single-source distance estimation, while in the hopset's case one needs to apply it to all pairs. The distributed and streaming hopsets' constructions \cite{B09,HKN14,HKN15,N14} all have a dependence on $\log \Lambda$.

We develop a new analysis of Klein-Sairam's reduction, which applies to the hopsets' scenario. We also show that the reduction can be efficiently implemented in distributed and streaming settings.


\section{Preliminaries}
\label{sec:prel}

Let $G=(V,E)$ be a weighted graph on $n$ vertices with diameter $\Lambda$, we shall assume throughout that edge-weights are positive integers.
Let $d_G$ be the shortest path metric on $G$, and let $d_G^{(t)}$ be the $t$-limited distance, that is, for $u,v\in V$, $d_G^{(t)}(u,v)$ is the minimal length of a path between $u,v$ that contains at most $t$ edges (set $d_G^{(t)}(u,v)=\infty$ if there is no such path). Note that $d_G^{(t)}$ is not a metric.

\begin{comment}
Our definition of hopsets assumes that edge weights of the input graph $G = (V,E,\omega)$ satisfy the triangle inequality, i.e., for any edge $(u,v) \in E$, we have
$d_G(u,v) = \omega(u,v)$. One can define hopsets when the input graph does not satisfy this condition, and all our results can be extended to this more general setting. However, for the sake of simplicity, we will stick to the slightly simpler setting in which triangle inequality holds.
\end{comment}

\section{Hopsets}
\label{hopset}

\subsection{A Centralized Construction}
\label{sec:cent_hop}

Let $G=(V,E)$ be a weighted graph on $n$ vertices with diameter $\Lambda$, we assume throughout the paper that the minimal distance in $G$ is 1. Fix parameters $\kappa\ge 1$, $0<\eps<1$ and $1/\kappa\le\rho<1/2$. The parameter $\beta$, which governs the number of hops our hopset guarantees, will be determined later as a function of $n,\Lambda,\kappa,\rho,\eps$.
We build separately a hopset $H_k$ for every distance range $(2^k,2^{k+1}]$, for $k \le \log \Lambda$.
We will call such a hopset $H_k$ a {\em single-scale} hopset.

Denote $\hR = 2^{k+1}$. For $\hR \le \beta = (1/\eps)^\ell$, where $\ell$ is the number of levels of the construction (to be determined), an empty hopset $H_k =\emptyset$ does the job. Hence we assume that $k > \log \beta -1$, i.e., $\hR > \beta$.

The algorithm initializes the hopset $H_k$ as an empty set, and proceeds in phases. It starts with setting $\chP_0 = \{\{v\} \mid v \in V\}$ to be the partition of $V$ into singleton clusters.
The partition $\chP_0$ is the input of phase 0 of our algorithm. More generally, $\chP_i$ is the input of phase $i$, for every index $i$ in a certain appropriate range, which we will specify in the sequel.

Throughout the algorithm, all clusters $C$ that we will construct will be centered at designated centers $r_C$. In particular, each singleton cluster $C= \{v\} \in \chP_0$ is centered at $v$. We define $\Rad(C) = \max \{d_{G(C)}(r_C,v) \mid v \in C\}$, and $\Rad(\chP_i) = \max_{C \in \chP_i} \{\Rad(C)\}$.

All phases of our algorithm except for the last one consist of two steps. Specifically, these are the {\em superclustering} and the {\em interconnection} steps.
The last phase contains only the interconnection step, and the superclustering step is skipped.
We also partition the phases
into two {\em stages}. The first stage consists of phases  $0,1, \ldots,i_0=\lfloor\log(\kappa\rho)\rfloor$, and the second stage consists of all the other phases $i_0+1,\ldots,i_1$ where $i_1=i_0+\left\lceil\frac{\kappa+1}{\kappa\rho}\right\rceil-2$, except for the last phase $\ell = i_1+1$. The last phase will be referred to as the {\em concluding phase}. 

Each phase $i$ accepts as input two parameters, the distance threshold parameter $\delta_i$, which determines the range of the Dijkstra explorations, and the degree parameter $\deg_i$, which determines the sampling probability. The difference between stage 1 and 2 is that in stage 1 the degree parameter grows exponentially, while in stage 2 it is fixed. The distance threshold parameter
grows in the same steady rate (increases by a factor of $1/\eps$) all through the algorithm.

The distance thresholds' sequence is given by $\alpha = \eps^\ell \cdot \hR$, $\delta_i = \alpha (1/\eps)^i + 4 R_i$, where $R_0 = 0$ and $R_{i+1} = \delta_i + R_i = \alpha (1/\eps)^i + 5 R_i$, for $i \ge 0$. It follows that $R_1 = \alpha$, and by estimating the recurrence we obtain $R_i \le 2 \cdot \alpha \cdot (1/\eps)^{i-1}$.
The degree sequence in the first stage of the algorithm is given by $\deg_i = n^{{2^i}/ \kappa}$, for $i = 0,1,\ldots,i_0$. We then use $\deg_i = n^\rho$ in all subsequent phases $i_0+1,\ldots,i_1$. Finally, on phase $\ell = i_1 + 1$ we perform just the interconnection step. Note that $\ell\ge 2$ since $\rho<1/2$.


Next we take a closer look on the execution of phase $i$, $i = 0,1,2,\ldots,\ell-1$. At the beginning of the phase we have a collection $\chP_i$ of clusters, of
radius $2\alpha \cdot (1/\eps)^{i-1}$, for $i \ge 1$, and radius 0 for $i = 0$. (It will be shown in Claim \ref{claim:conn-cluster}
that $\Rad(\chP_i) \le R_i = 2\alpha \cdot (1/\eps)^{i-1}$, for all $i = 0,1,\ldots,\ell$.)
 Each of these clusters is now sampled with probability $1/\deg_i$, i.a.r.. The resulting set of sampled clusters is denoted $\cS_i$.
We then initiate a single Dijkstra exploration in $G$ rooted at the set $\Rt = \{r_C \mid C \in \cS_i\}$ of cluster centers of sampled clusters. The Dijkstra exploration is conducted to depth $\delta_i$. Let $F_i$ denote the resulting forest.

Let $C' \in \chP_i \setmns \cS_i$ be a cluster whose center $r_{C'}$ was reached by the exploration, and let $r_C$, for some cluster $C \in \cS_i$, be the cluster center such that $r_{C'}$ belongs to the tree of $F_i$ rooted at $r_C$. We then add an edge $(r_C,r_{C'})$ of weight $\omega(r_C,r_{C'}) = d_G(r_C,r_{C'})$ into the hopset $H_k$, which we are now constructing.
A supercluster $\hC$ rooted at $r_{\hC} = r_C$ is now created. It contains all vertices of $C$ and of clusters $C'$ as above. This completes the description of the superclustering step. The resulting set $\chS_i$ of superclusters becomes the next level partition $\chP_{i+1}$, i.e., we set $\chP_{i+1} \leftarrow \chS_i$.

\begin{claim}\label{claim:conn-cluster}
Fix any cluster $C\in\chP_i$ with center $r_C$. Then for any $u\in C$ there is a path in $H_k$ of at most $i$ edges from $r_C$ to $u$ of length at most $R_i$.
\end{claim}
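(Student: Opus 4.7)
The plan is to prove the claim by induction on $i$, mirroring the recursive definition of the partitions $\chP_i$ and the radii $R_i$.

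For the base case $i=0$, every cluster $C\in\chP_0$ is a singleton $\{v\}$ with center $v$, so $u=r_C$ and the trivial empty path (zero edges, length $0=R_0$) satisfies the claim.

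For the inductive step, suppose the claim holds for $i$, and consider a supercluster $\hC\in\chP_{i+1}=\chS_i$ with center $r_{\hC}=r_C$, where $C\in\cS_i$ is the sampled cluster around which $\hC$ was formed. By construction, any $u\in\hC$ lies either in $C$ itself, or in some cluster $C'\in\chP_i\setminus\cS_i$ whose center $r_{C'}$ was reached by the Dijkstra exploration from $\Rt$ to depth $\delta_i$ and assigned to the tree rooted at $r_C$. In the first case, the inductive hypothesis immediately supplies a path in $H_k$ from $r_C$ to $u$ with at most $i\le i+1$ edges and length at most $R_i\le R_{i+1}$. In the second case, the superclustering step inserted into $H_k$ the edge $(r_C,r_{C'})$ of weight $d_G(r_C,r_{C'})\le\delta_i$; concatenating this edge with the inductively guaranteed path in $H_k$ from $r_{C'}$ to $u$ (of at most $i$ edges and length $\le R_i$) yields a path from $r_C$ to $u$ of at most $i+1$ edges and length at most $\delta_i+R_i=R_{i+1}$, as required.

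I do not anticipate a genuine obstacle here: the claim is essentially an unwinding of the recursive construction, and everything needed is already packaged into the definitions of $\cS_i$, the edges added during the superclustering step, the Dijkstra depth $\delta_i$, and the recurrence $R_{i+1}=\delta_i+R_i$. The only mild point to be careful about is to observe that all superclusters in $\chP_{i+1}$ arise from a sampled $C\in\cS_i$ together with the clusters $C'$ reached from it, so the case analysis above is exhaustive, and that the edge weight $\omega(r_C,r_{C'})=d_G(r_C,r_{C'})$ is indeed bounded by $\delta_i$ since $r_{C'}$ lies within the exploration depth.
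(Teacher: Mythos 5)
Your proof is correct and takes essentially the same route as the paper's: induction on $i$, with the same base case, the same dichotomy between $u\in C$ and $u\in C'$, and the same bound $\delta_i+R_i=R_{i+1}$ via the superclustering edge $(r_C,r_{C'})$ of weight $d_G(r_C,r_{C'})\le\delta_i$. The only difference is that you spell out the trivial $u\in C$ case and the exhaustiveness of the case split, which the paper elides.
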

\proof
The proof is by induction on $i$, the basis $i=0$ holds as $C$ is a singleton. Assume it holds for $i$, and fix any $\hC\in\chP_{i+1}$ and $u\in \hC$. Recall that $\hC$ consists of a sampled cluster $C\in \cS_i$, and clusters $C'\in \chP_i$ for which the Dijkstra exploration to range $\delta_i$ from $r_C$ reached their center $r_{C'}$. Assume $u\in C'$ (the case where $u\in C$ is simpler). Then by induction there is a path of length at most $R_i$ from $r_{C'}$ to $u$ in $H_k$ of $i$ hops, and by construction we added the edge $(r_C,r_{C'})$ of weight $d_G(r_C,r_{C'})$ into the hopset $H_k$. This implies a path of $i+1$ hops and length at most
\[
\delta_i+R_i=R_{i+1}~.
\]
\QED


Let $\chU_i$ denote the set of $\chP_i$ clusters which were not superclustered into $\chS_i$ clusters. These clusters are involved in the interconnection step. Specifically, each of the cluster centers $r_C$, $C \in \chU_i$, initiates now a separate Dijkstra exploration to depth $\half \delta_i = \half \alpha \cdot (1/\eps)^i + 2R_i$. For any cluster center $r_{C'}$ of a cluster $C' \in \chU_i$ such that $r_{C'}$ was discovered by an exploration originated at $r_C$, we now insert an edge $(r_C,r_{C'})$ into the hopset, and assign it weight $\omega(r_C,r_{C'}) = d_G(r_C,r_{C'})$.  This completes the description of the interconnection step.

\begin{lemma}
\label{lm:explorations}
For any vertex $v \in V$, the expected number of explorations that visit $v$ at the interconnection step of phase $0\le i\le i_1$ is at most $\deg_i$.
\end{lemma}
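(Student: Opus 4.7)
My plan is a standard ``ball growing'' argument exploiting the geometric nature of the sampling, together with the fact that the interconnection depth $\delta_i/2$ is exactly half of the superclustering depth $\delta_i$.

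\textbf{Step 1 (identify the relevant clusters).} Fix $v \in V$. An exploration originated at $r_C$ for some $C \in \chU_i$ visits $v$ only if $d_G(r_C,v) \le \delta_i/2$, since each such exploration is conducted to depth $\delta_i/2$. So I would enumerate the clusters of $\chP_i$ whose centers lie in the ball of radius $\delta_i/2$ around $v$, listing them as $C_1,C_2,\ldots,C_k$ in non-decreasing order of $d_G(r_{C_j},v)$.

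\textbf{Step 2 (triangle inequality / dominance).} The key geometric observation is that for any $j'\le j$, the triangle inequality gives
\[
d_G(r_{C_{j'}},r_{C_j}) \;\le\; d_G(r_{C_{j'}},v) + d_G(v,r_{C_j}) \;\le\; \delta_i/2 + \delta_i/2 \;=\; \delta_i.
\]
Consequently, if any one of $C_1,\ldots,C_j$ is sampled into $\cS_i$, then the superclustering-phase Dijkstra forest rooted at $\{r_C : C\in\cS_i\}$ and explored to depth $\delta_i$ reaches $r_{C_j}$ (or $C_j$ itself is in $\cS_i$, the case $j'=j$). Either way, $C_j$ is absorbed into some supercluster and therefore $C_j\notin\chU_i$.

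\textbf{Step 3 (probability bound).} Since the samplings of distinct clusters are independent, each with probability $1/\deg_i$, the probability that none of $C_1,\ldots,C_j$ is sampled is exactly $(1-1/\deg_i)^j$. Combined with Step 2,
\[
\Pr[C_j \in \chU_i] \;\le\; (1-1/\deg_i)^j.
\]

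\textbf{Step 4 (summation).} The number of explorations visiting $v$ is $\sum_{j=1}^{k} \mathbf{1}[C_j\in\chU_i]$. Taking expectations and bounding by the infinite geometric series,
\[
\mathbb{E}\bigl[\#\text{explorations visiting } v\bigr] \;\le\; \sum_{j=1}^{\infty}(1-1/\deg_i)^j \;=\; \deg_i - 1 \;<\; \deg_i,
\]
which completes the argument.

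I do not anticipate any real obstacle here; the only subtlety is realizing that the design choice of running the superclustering exploration to depth $\delta_i$ while the interconnection exploration only goes to depth $\delta_i/2$ is precisely what makes the triangle inequality in Step 2 force the geometric dominance. If $\delta_i$ and $\delta_i/2$ were replaced by equal values, the bound would deteriorate by a constant factor but the same proof template would still work.
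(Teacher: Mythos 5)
Your proof is correct and rests on the same key observation as the paper's: the centers of $\chP_i$-clusters within distance $\delta_i/2$ of $v$ are pairwise within distance $\delta_i$, so a single sample among them causes the superclustering exploration to absorb all of them. The paper's accounting is a touch simpler (all-or-nothing: if some cluster in the ball is sampled then zero explorations visit $v$, otherwise at most $l$ do, giving expectation $\le l(1-1/\deg_i)^l\le\deg_i$), whereas you bound each cluster's survival probability by its rank and sum a geometric series; the rank ordering is in fact superfluous here, since the absorption argument already applies uniformly to every cluster in the ball.
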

\proof
For $0\le i\le i_1$, assume that there are $l$ clusters of $\chP_i$ within distance $\delta_i/2$ from $v$.  If at least one of them is sampled to $\cS_i$, then no exploration will visit $v$ (since in the superclustering phase the sampled center will explore to distance $\delta_i$, and thus all these $l$ cluster will be superclustered into some cluster of $\chS_i$). The probability that none of them is sampled is $(1-1/\deg_i)^l$, in which case we get that $l$ explorations visit $v$, so the expectation is $l\cdot(1-1/\deg_i)^l\le \deg_i$ for any $l$.
\QED

A similar argument yields the following Lemma.
\begin{lemma}
\label{lm:property}
For any constant $c > 1$, with probability at least $1 - 1/n^{c-1}$, for every vertex $v \in V$, at least one among the $\deg_i \cdot c \cdot \ln n$ closest cluster centers $r_{C'}$ with $C' \in \chP_i$ to $v$ is sampled, i.e., satisfies
$C' \in \cS_i$.
\end{lemma}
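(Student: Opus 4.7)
The plan is to prove this by a direct probabilistic estimate for a fixed vertex, followed by a union bound over the vertex set $V$. The key observation is that the clusters of $\chP_i$ are sampled independently, each with probability $1/\deg_i$, so we have full independence available for the computation at a single vertex.

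First I would fix a vertex $v \in V$ and let $L = \lceil \deg_i \cdot c \cdot \ln n\rceil$. Consider the $L$ clusters $C' \in \chP_i$ whose centers $r_{C'}$ are closest to $v$ (breaking ties arbitrarily but consistently). Since each such cluster is sampled into $\cS_i$ independently with probability $1/\deg_i$, the probability that \emph{none} of these $L$ clusters is sampled is at most
\[
\left(1 - \frac{1}{\deg_i}\right)^{L} \le \exp\!\left(-\frac{L}{\deg_i}\right) \le \exp(-c \ln n) = n^{-c}.
\]
This is the ``bad event'' for $v$.

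Next I would take a union bound over all $n$ vertices in $V$: the probability that there exists some $v$ for which none of its $L$ closest cluster centers is sampled is at most $n \cdot n^{-c} = n^{-(c-1)}$. Taking the complement gives the desired bound of $1 - 1/n^{c-1}$.

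I do not anticipate a genuine obstacle here; the only minor subtlety is that the $L$ closest cluster centers must be a well-defined (deterministic) set for the independence argument to apply cleanly. This is handled by fixing a tie-breaking rule on the cluster centers up front, before sampling, so that the identity of the $L$ nearest centers does not depend on the sampling. The rest is immediate.
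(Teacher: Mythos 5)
Your argument is correct and is exactly the ``similar argument'' the paper alludes to after Lemma~\ref{lm:explorations}: the $(1-1/\deg_i)^{\deg_i c\ln n}\le n^{-c}$ estimate for a fixed $v$, using the independence of the per-cluster sampling, followed by a union bound over the $n$ vertices. The one point you might spell out is what happens if $|\chP_i|<\deg_i\cdot c\cdot\ln n$ (then the ``$L$ closest'' set is smaller and the bound degrades); this edge case is immaterial for the paper's applications and is glossed over there as well.
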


We analyze the number of clusters in collections $\chP_i$ in the following lemma.

\begin{lemma}
\label{lm:Pi}
Assuming $n^{\rho}=\omega(1)$, with high probability, for every $i = 0,1,\ldots,i_0+1$ we have
\begin{equation}
\label{eq:Pi}
|\chP_i| ~\le~ 2 \cdot n^{1 - {{2^i -1} \over \kappa}} ~,
\end{equation}
and for $i=i_0+2,\ldots,i_1+1$,
$$|\chP_{i}| ~\le~ 2 \cdot n^{1 + 1/\kappa - (i - i_0)\rho} ~.$$
\end{lemma}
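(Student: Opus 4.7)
The plan is to proceed by induction on $i$, using the key observation that $\chP_{i+1}$ is in bijection with the set $\cS_i$ of sampled clusters from phase $i$: each supercluster has a unique sampled cluster as its root. Hence $|\chP_{i+1}|$ is distributed as the sum of $|\chP_i|$ independent Bernoulli random variables of parameter $1/\deg_i$, and so $\Expect[|\chP_{i+1}| \mid \chP_i] = |\chP_i|/\deg_i$. The base case $i=0$ is immediate since $|\chP_0| = n \le 2n = 2n^{1-(2^0-1)/\kappa}$. For the inductive step I plug the bound on $|\chP_i|$ into this expectation, invoke a Chernoff bound on the Bernoulli sum, and push the bound forward through all phases, closing with a union bound over the $O(\log\kappa + 1/\rho)$ phases.

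For the first-stage indices $0 \le i \le i_0$, with $\deg_i = n^{2^i/\kappa}$, the inductive hypothesis gives
\[
\Expect[|\chP_{i+1}|] ~\le~ \frac{2n^{1-(2^i-1)/\kappa}}{n^{2^i/\kappa}} ~=~ 2n^{1-(2^{i+1}-1)/\kappa},
\]
which is exactly the target bound at index $i+1$. At the boundary phase $i = i_0+1$, the sampling probability switches to $1/n^\rho$, so the end-of-stage-1 bound yields $\Expect[|\chP_{i_0+2}|] \le 2n^{1-(2^{i_0+1}-1)/\kappa - \rho}$; the choice $i_0 = \lfloor\log(\kappa\rho)\rfloor$ ensures $2^{i_0+1} > \kappa\rho$, so $2^{i_0+1}/\kappa > \rho$ and the resulting exponent is $\le 1 + 1/\kappa - 2\rho$, matching the stage-2 bound at $i_0+2$. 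For the remaining stage-2 phases $i_0+2 \le i \le i_1$, where $\deg_i = n^\rho$, the bound propagates as $\Expect[|\chP_{i+1}|] \le 2n^{1+1/\kappa - (i+1-i_0)\rho}$, which is exactly the target at index $i+1$.

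The delicate point is maintaining the leading constant $2$ across $\ell = \Theta(\log\kappa + 1/\rho)$ Chernoff applications, each of which nominally costs a multiplicative $(1+\epsilon)$ factor. I would handle this by invoking Chernoff with slack $\epsilon = 1/\ell$ at every phase, so that the cumulative multiplicative loss is $(1 + 1/\ell)^\ell = O(1)$ and can be absorbed into the constant (the constant $2$ may alternatively be regarded as a loose placeholder for an absolute constant, tightened by a sharper base-case bound). For the concentration to hold w.h.p., every relevant expectation must be $\Omega(\log n)$; this is the role of the hypothesis $n^\rho = \omega(1)$ (interpreted in the standard stronger sense $n^\rho = \omega(\log n)$), since the smallest expectation arises at the last stage-2 phase $i_1$ and is of order $n^\rho$. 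A union bound over the $O(\log\kappa+1/\rho)$ phases then yields the claimed high-probability conclusion for all $i$ simultaneously.
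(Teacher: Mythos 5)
Your proposal takes a genuinely different route from the paper's proof, and the difference matters. The paper does \emph{not} use induction. It observes that a vertex $v$ is the center of a $\chP_i$-cluster if and only if the cluster centered at $v$ was sampled in \emph{every} phase $j=0,\ldots,i-1$, and that these survival events are mutually independent across vertices (each vertex flips its own sequence of coins). Consequently $\Expect[|\chP_i|]=n\prod_{j<i}1/\deg_j$ is obtained exactly in closed form, $|\chP_i|$ is a sum of $n$ independent Bernoulli variables, and a \emph{single} multiplicative Chernoff bound with constant slack is applied per level $i$: $\Prob[|\chP_i|\ge 2\,\Expect[|\chP_i|]]\le \exp\{-\Omega(\Expect[|\chP_i|])\}$. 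The factor $2$ is clean, and the failure probability at level $i$ is $\exp\{-\Omega(n^\rho)\}$ at worst.

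Your inductive scheme, chaining $\ell$ conditional Chernoff bounds via $|\chP_{i+1}|\,|\,\chP_i \sim \mathrm{Bin}(|\chP_i|,1/\deg_i)$, has a genuine quantitative gap that your own discussion does not close. To keep the accumulated multiplicative loss $\prod_i(1+\epsilon_i)$ bounded over $\ell=\Theta(\log\kappa+1/\rho)$ phases you set $\epsilon=1/\ell$; but multiplicative Chernoff with slack $\epsilon$ has exponent $\Theta(\epsilon^2\mu)$, so the per-phase failure probability becomes $\exp\{-\Omega(n^\rho/\ell^2)\}$, not $\exp\{-\Omega(n^\rho)\}$. In the regime the paper actually operates in (the text later imposes $\rho\ge \tfrac{\log\log n}{2\log n}$, whence $n^\rho$ can be as small as $\sqrt{\log n}$ while $\ell$ can be as large as $\Theta(\log n/\log\log n)$), the quantity $n^\rho/\ell^2$ tends to $0$, so your bound is vacuous. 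Your stated strengthening $n^\rho=\omega(\log n)$ does not help, because it ignores the $1/\ell^2$ degradation; you would need the much stronger $n^\rho=\omega(\ell^2\log n)$, which the lemma does not assume. A secondary issue, which you do flag, is that $(1+1/\ell)^\ell\to e$, so the constant you obtain is about $e$ rather than $2$; this could be patched by taking $\epsilon=(\ln 2)/\ell$, but the $\epsilon^2$ degradation persists. The direct non-inductive computation of $\Expect[|\chP_i|]$ is what the paper uses precisely to avoid both problems, and you should adopt it: the bijection $\chP_{i+1}\leftrightarrow\cS_i$ you identified is correct, but the right way to exploit it is to unroll it into a product of independent Bernoulli trials per vertex, not to chain Chernoff bounds.
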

\proof
For the first assertion, the probability that a vertex $v\in V$ will be a center of a cluster in $\chP_i$ is $\prod_{j=0}^{i-1}1/\deg_j=n^{-(2^i-1)/\kappa}$. Thus the expected size of $\chP_i$ is $n^{1-(2^i-1)/\kappa}$, and as these choices are made independently, by Chernoff bound,
\[
\Prob[|\chP_i|\ge 2\Expect[|\chP_i|]]\le \exp\{-\Omega(\Expect[|\chP_i|])\}=\exp\{-\Omega(n^{1 - {{2^i -1} \over \kappa}})\}~.
\]

Since for $\rho < 1/2$ and $i \le i_0+1 = \lfloor \log \rho \kappa \rfloor+1$, we have
$n^{1 - {{2^i - 1} \over \kappa}}  \ge n^{1-2\rho} = \omega(\log n)$, we conclude that whp for all $0\le i\le i_0+1$,
$|\chP_{i}| \le 2 n^{1 - {{2^{i} - 1} \over \kappa}}$. In particular, $|\chP_{i_0+1}| = O( n^{1 - \rho +1/\kappa})$.

For the second assertion, consider any $i \in [i_0+2,i_1+1]$, the expected size of $\chP_{i}$ is
\[
\Expect[|\chP_{i}|]=n\cdot\prod_{j=0}^{i-1}1/\deg_j\le n^{1 + 1/\kappa -\rho - (i -1 - i_0) \rho}=n^{1 + 1/\kappa - (i- i_0) \rho}~.
\]
Since $n^{1 + 1/\kappa - (i- i_0) \rho}\ge n^\rho$ for any $i\le i_1$, by Chernoff bound with probability at least $1 - \exp\{-\Omega(n^\rho)\}$ (which is $1-o(1)$ by our assumption on $n^{\rho}$), we have
$$|\chP_{i}| ~\le~ 2 \cdot n^{1 + 1/\kappa - (i - i_0)\rho} ~.$$
\QED
This lemma implies that whp
\begin{equation}
\label{eq:finalPi}
|\chP_{i_1 + 1}| ~\le~ O(n^{1 + 1/\kappa  - (i_1 + 1 - i_0) \rho})
~= ~ O(n^{1 +1/\kappa - (\lceil {{\kappa + 1} \over {\kappa\rho}} \rceil - 1)\rho} ) ~=~ O(n^\rho)~.
\end{equation}
For the assumption of the Lemma above to hold, we will need to assume that $\rho \ge {{\log\log n} \over {2\log n}}$, say. We will show soon that this assumption is valid in our setting.

The running time required to implement the single Dijkstra exploration in the superclustering of phase $i$ is $O(|E|+n\log n)$, while in the interconnection step, by \lemmaref{lm:explorations} every vertex is expected to be visited by at most $\deg_i$ explorations, so the expected running time of phase $0\le i\le i_1$ is $O(|E|+n\log n)\cdot\deg_i$. Recall that in the last phase $i_1+1$ there is no superclustering step, but as \eqref{eq:finalPi} implies, there are whp only $O(n^\rho)$ clusters, so each vertex will be visited at most $O(n^\rho)$ times. Thus the total expected running time is
\begin{eqnarray*}
O(|E|+n\log n)\cdot\left( \sum_{i=0}^{\ell-1} (\deg_i) +n^\rho\right)&=& O(|E|+n\log n)  \cdot \left(\sum_{i=0}^{i_0}(n^{2^i/\kappa})+(i_1-i_0)n^{\rho}\right)\\
&=&  O(|E|+n\log n) \cdot (n^{2^{i_0}/\kappa}+n^\rho/\rho)\\
&=& O(|E|+n\log n) \cdot n^\rho/\rho~.
\end{eqnarray*}



The size of the hopset $H_k$ that was constructed by this algorithm is dominated by the number of edges inserted by the interconnection steps, since all the edges inserted at superclustering steps induce a forest.
Due to \lemmaref{lm:explorations}, the expected number of edges inserted by the interconnection step of phase $i$ is at most $O(|\chP_i| \cdot \deg_i) = O(n^{1+1/\kappa})$, for $i \le i_0$, and
$\sum_{i=i_0+1}^{\ell +1} O(|\chP_i| \cdot \deg_i) = O(n^{1 + 1/\kappa})$ edges on the later phases.
Hence overall $\Expect(|H_k|) = O(n^{1+1/\kappa} \cdot \log \kappa)$. We remark that the factor $\log \kappa$ can be eliminated from the hopset size by using a refined degree sequence, at the cost of increasing the number of phases by 1 (this will increase the exponent of $\beta$ by 1). We elaborate on this at \sectionref{sec:less-edges}. Then the number of edges contributed to the hopset $H_k$ by all interconnection steps becomes $O(n^{1+1/\kappa})$.

Next we analyze  the stretch and the hopbound of $H_k$. Write $H = H_k$.
Observe that, by Claim \ref{claim:conn-cluster}, $\Rad(\chU_0) = R_0 = 0$, and, for all $i \in [1,\ell]$, $\Rad(\chU_i) \le \Rad(\chP_i)  \le R_i \le 2\alpha (1/\eps)^{i-1}$. (We assume $\eps < 1/10$, and later justify this assumption.)
Write $c = 2$.
Note also that for any pair of distinct clusters $C,C' \in \chU_i$, for any $i$, which are at distance $d_G(C,C') \le \half \alpha \cdot (1/\eps)^i$,
it holds that $d_G(r_C,r_{C'}) \le d_G(C,C') + 2R_i \le \half \alpha (1/\eps)^i + 2 \cdot R_i= \half \delta_i$. Hence for every pair of clusters $C,C'$ as above, an edge $(r_C,r_{C'})$ of weight $\omega(r_C,r_{C'}) = d_G(r_C,r_{C'})$ belongs to the hopset.

Observe that $\chU = \bigcup_{i=0}^\ell \chU_i$  is a partition of $G$.  For any $i$, we denote $\chU^{(i)} = \bigcup_{j=0}^i \chU_j$.

\begin{lemma}
\label{lm:hop_str}
Let $x,y$ be a pair of vertices with $d_G(x,y) \le \half \alpha \cdot (1/\eps)^i$ and such that all vertices of a shortest path $\pi(x,y)$ in $G$ between them are clustered in  $\chU^{(i)}$, for some $i \le \ell$.
Then it holds that
\begin{equation}
\label{eq:hop_str}
d_{G \cup H}^{(h_i)}(x,y) \le d_G(x,y)( 1 +16 c(i-1) \cdot \eps) + 8 \cdot \alpha \cdot c \cdot (1/\eps)^{i-1}~,
\end{equation}
with $h_i$ given by $h_0 = 1$, and $h_{i+1} = (h_i + 1)(1/\eps+2)+2i+5$.
\end{lemma}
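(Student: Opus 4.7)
The plan is to prove the lemma by induction on $i$. For the base case $i=0$, the path $\pi(x,y)$ lies entirely in $\chU_0$ (singletons) and has length $d_G(x,y) \le \alpha/2 = \delta_0/2$, so the phase $0$ interconnection step added the edge $(x,y)$ to $H_k$ with weight $d_G(x,y)$, giving $d_{G\cup H}^{(1)}(x,y)\le d_G(x,y)$. The RHS of \eqref{eq:hop_str} at $i=0$ equals $d_G(x,y)(1-16c\eps) + 8c\alpha\eps$, and the deficit $16c\eps\cdot d_G(x,y)$ is bounded by $8c\alpha\eps$ since $d_G(x,y)\le \alpha/2$.

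For the inductive step, I would assume the lemma at level $i$ and consider $\pi=\pi(x,y)$ of length $L\le \tfrac{1}{2}\alpha(1/\eps)^{i+1}$ with vertices in $\chU^{(i+1)}$. Partition $\pi$ into $N\le \lceil 2/\eps\rceil$ consecutive subpaths $\pi_1,\dots,\pi_N$ of length at most $\tfrac{1}{2}\alpha(1/\eps)^{i}$ each, with break points $x=z_0,z_1,\dots,z_N=y$. Classify each $\pi_j$ as \emph{type A} if all its vertices are in $\chU^{(i)}$, in which case the inductive hypothesis yields a path in $G\cup H$ from $z_j$ to $z_{j+1}$ with $\le h_i$ hops of length $\le d_G(z_j,z_{j+1})(1+16c(i-1)\eps)+8\alpha c(1/\eps)^{i-1}$; or \emph{type B} otherwise, with a distinguished witness $v_j\in\pi_j$ belonging to some cluster $C_j\in\chU_{i+1}$, in which case we reroute through the center $r_{C_j}$, using \claimref{claim:conn-cluster} to reach $r_{C_j}$ from $v_j$ in at most $i+1$ hops of length $R_{i+1}$. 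The geometric heart of the argument is that for any two type B subpaths with witnesses $v_j, v_{j'}$ ($j\le j'$), $d_G(r_{C_j}, r_{C_{j'}}) \le d_G(v_j, v_{j'}) + 2R_{i+1} \le L + 4\alpha(1/\eps)^i \le \delta_{i+1}/2$ (using $\eps<1/10$), so an interconnection edge between these centers was added in phase $i+1$ with its true weight. Stitching type A pieces (via induction) and type B pieces (via Claim entries/exits bridged by interconnection edges) yields an $x$-to-$y$ path in $G\cup H$.

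For the hop count, each subpath contributes at most $h_i+1$ hops (via induction plus one bridging edge, or via Claim 1 plus one interconnection edge); summing over $N\le 1/\eps+2$ subpaths gives $(h_i+1)(1/\eps+2)$, and an additional $2i+5$ hops accommodate any initial/terminal Claim 1 detours and boundary corrections, recovering the recurrence for $h_{i+1}$. For the length, the type A contributions sum to at most $L(1+16c(i-1)\eps) + N\cdot 8\alpha c(1/\eps)^{i-1} \le L(1+16c(i-1)\eps) + 16\alpha c(1/\eps)^i$, while type B detours add $\le 2R_{i+1}$ per subpath plus interconnection-edge lengths that telescope (via the triangle inequality applied to consecutive witnesses along $\pi$) to $L + O(R_{i+1})$; absorbing the excess $\eps\cdot 8\alpha c(1/\eps)^i$ into the multiplicative allowance $16c\eps\cdot L$ (valid whenever $L > \tfrac{1}{2}\alpha(1/\eps)^i$, which is the only non-trivial regime since otherwise one subpath and the inductive hypothesis at level $i$ suffice) yields the length bound $L(1+16ci\eps)+8\alpha c(1/\eps)^i$. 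The main technical obstacle is the length bookkeeping at A/B transitions, where the detours through cluster centers, the accumulated additive errors from summing $N$ inductive applications, and the $2R_{i+1}$ cost of entering/exiting each sampled cluster must all jointly fit within the additive budget $8\alpha c(1/\eps)^i$ prescribed for level $i+1$; the proof hinges on the fact that in any regime where the partition has more than one piece, $L$ itself is large enough for the multiplicative slack $16c\eps\cdot L$ to absorb the excess.
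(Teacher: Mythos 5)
Your base case and the handling of segments lying entirely in $\chU^{(i)}$ (your type A subpaths) mirror the paper's argument and are fine; summing the inductive estimate over $O(1/\eps)$ such segments is essentially how the paper derives its intermediate, length-unrestricted bound.

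The gap is in the treatment of type B subpaths, and it is arithmetic, not cosmetic. You charge a detour of roughly $2R_{i+1}$ ``per subpath'' for entering and exiting level-$(i+1)$ clusters and propose to absorb the cumulative excess into the multiplicative slack $16c\eps\cdot L$. But $R_{i+1}=\Theta(\alpha(1/\eps)^i)$, and there can be $\Theta(1/\eps)$ type B subpaths, so the cumulative cost of these per-segment detours is $\Theta(\alpha(1/\eps)^{i+1})$; on the other hand $L\le\half\alpha(1/\eps)^{i+1}$ gives $16c\eps L\le 8c\alpha(1/\eps)^i$, and the additive budget at level $i+1$ is also $8c\alpha(1/\eps)^i$ — both a factor $1/\eps$ too small to cover the excess. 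Contrast this with the type A accounting, which does work: there the per-segment additive inheritance from level $i$ is only $8c\alpha(1/\eps)^{i-1}$, so multiplying by $\Theta(1/\eps)$ segments yields $\Theta(\alpha(1/\eps)^{i})$, which the slack can cover. The level-$(i+1)$ detour cost is one order of magnitude larger per occurrence, so it cannot be paid per segment. The hop count has the same problem: a type B subpath needs up to $h_i$ hops to reach the witness from the segment's left endpoint, $i+1$ hops up to the center, one interconnection edge, $i+1$ hops down, and up to $h_i$ more hops to the right endpoint, which exceeds your $h_i+1$-per-segment budget.

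The missing structural idea, which is the crux of the paper's proof, is to take \emph{exactly one} level-$(i+1)$ detour for the entire path. Let $z_1$ and $z_2$ be the leftmost and rightmost $\chU_{i+1}$-clustered vertices on $\pi(x,y)$, with clusters $C_1,C_2\in\chU_{i+1}$ and centers $r_1,r_2$. Then the prefix $\pi(x,z_1)$ and suffix $\pi(z_2,y)$, each shortened by one endpoint, lie entirely in $\chU^{(i)}$, so the summed type A bound applies to them. The middle is bridged by a single path $z_1\to r_1\to r_2\to z_2$, using \claimref{claim:conn-cluster} twice and one interconnection edge, which exists because
\[
d_G(r_1,r_2)\le d_G(z_1,z_2)+2R_{i+1}\le \tfrac12\alpha(1/\eps)^{i+1}+2R_{i+1}=\tfrac12\delta_{i+1}~.
\]
This single detour costs only $4R_{i+1}=O(\alpha(1/\eps)^i)$ in extra length and $2(i+1)+3=2i+5$ extra hops, which fits the budget exactly and explains the recurrence $h_{i+1}=(h_i+1)(1/\eps+2)+2i+5$.
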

\proof
The proof is by induction on $i$. The basis is the case $i = 0$.
\inline Basis: We assume $d_G(x,y) \le \half \alpha$ and all vertices of $\pi(x,y)$ are clustered in $\chU_0$, then there is an edge $(x,y)$ in $H$ with $\omega(x,y) = d_G(x,y)$, and indeed
\[
d_{G \cup H}^{(h_0)}(x,y)=d_G(x,y)\le d_G(x,y)( 1 -16 c\cdot \eps) + 8 \cdot \alpha \cdot c \cdot (1/\eps)^{-1}~.
\]
\inline  Step:  We assume the assertion of the lemma for some index $i$, and prove it for $i+1$. \\
Consider first a pair $u,v$ of vertices such that all vertices of $\pi(u,v)$ are clustered in $\chU^{(i)}$, for a fixed $i < \ell$, without any restriction on $d_G(u,v)$.

We partition $\pi(u,v)$ into segments $L_1,L_2,\ldots$ of length roughly $\half \cdot \alpha \cdot (1/\eps)^i$ each in the following way. The first segment $L_1$ starts at $u$, i.e., we write $u = u_1$. Given a left endpoint $u_p$, $p \ge 1$, of a segment $L_p$, we set the right endpoint $v_p$ of $L_p$ to be (if exists) the farthest vertex of $\pi(u,v)$ from $u_p$ which is closer to $v$ than $u_p$, and such that $d_G(u_p,v) \le \half \cdot \alpha \cdot (1/\eps)^i$.

If $v_p$ does not exist then the $p$th segment $L_p$ is declared as {\em void}, and we define $v_p = u_{p+1}$ to be the neighbor of $u_p$ on $\pi(u,v)$ which is closer to $v$. If $v_p$ does exist, then $u_{p+1}$ is (if exists) the "right" neighbor of $v_p$ on $\pi(u,v)$, i.e., the neighbor of $v_p$ which is closer to $v$ than $v_p$ is.
(It may not exist only if $v_p = v$.) Observe that in either case, if $u_{p+1}$ exists then $d_G(u_p,u_{p+1}) > \half \cdot \alpha \cdot (1/\eps)^i$.

We also define {\em extended} segments $\hL_p$ in the following way. If $L_p$ is a void segment, then we define $\hL_p = L_p$.  Otherwise $\hL_p$ is the segment of $\pi(u,v)$ connecting $u_p$ with $u_{p+1}$, if $u_{p+1}$ exists, and with $v_p$ otherwise. (This may be the case only if $L_p = \hL_p$ is the last, i.e., the rightmost, segment of the path $\pi(u,v)$.)

Observe that every non-void extended segment $\hL_p$, except maybe the last one, has length at least $\half \cdot \alpha \cdot (1/\eps)^i$, and every segment $L_p$ has length at most $\half \cdot \alpha \cdot (1/\eps)^i$.

Next we construct a path $\pi'(u,v)$ in $G \cup H$, which has roughly the same length as $\pi(u,v)$, but consists of much fewer hops.
Consider a segment $L_p$, with left endpoint $u_p$ and right endpoint $v_p$, and its extended segment $\hL_p$ with right endpoint $u_{p+1}$.
We define a {\em substitute} segment $L'_p$ in $G \cup H$, connecting $u_p$ with $u_{p+1}$ with a few hops, and of roughly the same length.

If $L_p$ is a void segment then $L'_p$ is just the single edge $(u_p,u_{p+1})$, taken from $E = E(G)$.
Observe that for a void segment,
$$\omega(\hL_p) = \omega(L_p) = \omega(u_p,u_{p+1}) = d_G(u_p,u_{p+1})~.$$
Otherwise, if $L_p$ is not a void segment, then $d_G(u_p,v_p) \le \half \cdot \alpha \cdot (1/\eps)^i$.
Observe also that since all vertices of $\pi(u,v)$ are $\chU^{(i)}$-clustered, this is also the case for the subpath $\pi(u_p,v_p)$. Hence the induction hypothesis is applicable to this subpath, and so there exists a path $\pi'(u_p,v_p)$ in $G \cup H$ with at most $h_i$ hops, such that
$$\omega(\pi'(u_p,v_p)) ~\le~ d_G(u_p,v_p) \cdot (1 + 16 \cdot c(i-1) \cdot \eps) + 8 \cdot \alpha \cdot c \cdot (1/\eps)^{i-1}~.$$
We define $L'_p$ to be the concatenation of $\pi'(u_p,v_p)$ with the edge $(v_p,u_{p+1})$. (This edge is taken from $G$.)
Since $v_p$ lies on a shortest path between $u_p$ and $u_{p+1}$, it follows that
$$\omega(L'_p) ~\le~ (1 +16 c \cdot (i-1) \eps) \cdot d_G(u_p,u_{p+1}) + 8 \cdot \alpha \cdot c \cdot (1/\eps)^{i-1}~,$$
and $L'_p$ contains up to $h_i + 1$ hops.

Finally, our ultimate path $\pi'(u,v)$ is the concatenation of all the substitute segments $L'_1 \circ L'_2 \circ \ldots \circ L'_q$, where $\pi(u,v) = \hL_1 \circ \hL_2 \circ \ldots \circ \hL_q$.
%
Since each extended segment  has length at least $\half \alpha \cdot (1/\eps)^{i-1}$
 we conclude that
\begin{eqnarray}
\nonumber
d_{G\cup H}^{((h_i+1) \cdot \lceil  {{d_G(u,v)} \over {\half \alpha (1/\eps)^i}} \rceil)} (u,v) & \le &
d_G(u,v)(1  +16c (i-1)\eps) +  \lceil {{d_G(u,v)} \over {\half \alpha (1/\eps)^i}} \rceil \cdot 8 \alpha  c \cdot (1/\eps)^{i-1}
\\
\nonumber
& \le & d_G(u,v)\left(1 + 16 c(i-1) \eps + {{8 \cdot \alpha \cdot c (1/\eps)^{i-1}} \over {\half \alpha \cdot (1/\eps)^i}}\right) + 8  \alpha c \cdot (1/\eps)^{i-1} \\
\label{eq:short_str}
& = & d_G(u,v)(1 + 16c \cdot i \cdot \eps) + 8 \cdot \alpha c \cdot (1/\eps)^{i-1}~.
\end{eqnarray}
Now consider $x,y$ such that $d_G(x,y) \le \half \alpha \cdot (1/\eps)^{i+1}$ and such that $\pi(x,y)$ is  $\chU^{(i+1)}$-clustered. Let $z_1$ and $z_2$ denote the leftmost and the rightmost $\chU_{i+1}$-clustered vertices on this path, and denote by $C_1$ and $C_2$ their respective clusters. Denote also $r_1 = r_{C_1}, r_2 = r_{C_2}$.
Denote also by $w_1$ (resp., $w_2$) the neighbor of $z_1$ (resp., $z_2$) on the subpath $\pi(x,z_1)$ (resp., $\pi(z_2,y)$) of $\pi(x,y)$.

The path $\pi'(x,y)$ in $G \cup H$ between $x$ and $y$ is constructed in the following way. By (\ref{eq:short_str}), we can reach from $x$ to $w_1$ while incurring a multiplicative stretch of $(1 + 16 c i \cdot \eps)$ and an additive error of $8 \cdot \alpha \cdot c \cdot (1/\eps)^{i-1}$, and using at most $b_1=(h_i +1) \cdot \lceil  {d_G(x,w_1) \over {\half \alpha (1/\eps)^i}} \rceil$ hops.
The same is true for the pair $w_2,y$, except that the required number of hops is at most   $b_2=(h_i +1) \cdot \lceil  {d_G(w_2,y) \over {\half \alpha (1/\eps)^i}} \rceil$.
Finally, the path $\pi'(x,y)$ connects $w_1$ with $w_2$ via edges $(w_1,z_1),(z_2,w_2)$ that belong to $E(G)$, the edge $(r_1,r_2)$ of the hopset $H$, and the paths $\pi(z_1,r_1),\pi(r_2,z_2)$ (each of $i+1$ hops) in $H$ given by \claimref{claim:conn-cluster}. Hence
\begin{eqnarray*}
d^{(h_{i+1})}_{G \cup H}(x,y)  & \le & d^{(b_1)}_{G \cup H}(x,w_1) + d^{(1)}_G(w_1,z_1) + d^{(i+1)}_G(z_1,r_1) + d^{(1)}_H(r_1,r_2)\\
&& + ~d^{(i+1)}_G(r_2,z_2) + d^{(1)}_G(z_2,w_2) + d^{(b_2)}_{G \cup H}(w_2,y) \\
& \le & (1 + 16c \cdot  i \cdot \eps) d_G(x,w_1) + d_G(w_1,z_1) + R_{i+1} + (d_G(z_1,z_2) + 2R_{i+1})
\\
&&+ ~R_{i+1} + d_G(z_2,w_2) + (1+ 16ci \cdot \eps) d_G(w_2,y) + 2 \cdot(8 \alpha  c \cdot (1/\eps)^{i-1}) \\
& \le & (1 +16 c \cdot i \cdot \eps)d_G(x,y) + 4 \cdot \alpha c (1/\eps)^i + 16\alpha c \cdot (1/\eps)^{i-1} \\
& \le &(1 + 16 c \cdot  i\cdot \eps)d_G(x,y) + 8 \cdot \alpha c \cdot (1/\eps)^i~,
\end{eqnarray*}
where the required number of hops indeed satisfies
\begin{eqnarray}
\label{eq:hop_bnd}
\lefteqn{(h_i+1) \left(\lceil {{d_G(x,w_1)} \over {\half \alpha \cdot (1/\eps)^i}} \rceil + \lceil {{d_G(w_2,y)} \over {\half \alpha \cdot (1/\eps)^i}} \rceil\right) +2i+ 5}\nonumber\\
\nonumber
& \le & (h_i +1)\left({{d_G(x,y)} \over {\half \alpha (1/\eps)^i}} + 2\right) +2i+ 5\\\nonumber
& \le & (h_i + 1)(1/\eps+2) +2i+ 5\\
&=&h_{i+1}~.
\end{eqnarray}
\QED

The recursive equation $h_{i+1} = (h_i+1) (1/\eps + 2) +2i+ 5$ solves to $h_i  \le 3 \cdot (1/\eps+2)^i$, for $\eps < 1/10$, i.e., $h_\ell \le 3 \cdot (1/\eps+2)^\ell$. Write $\zeta = 16c (\ell+1)\cdot \eps$ and $\beta=2h_\ell+1\le 6 \cdot (1/\eps+2)^\ell+1$.

\begin{corollary}
Let $x,y\in V$ be such that $d_G(x,y)\in(\hat{R}/2,\hat{R}]$. Then
\[
d_{G\cup H}^{(\beta)}(x,y)\le (1+\zeta)\cdot d_G(x,y)~.
\]
\end{corollary}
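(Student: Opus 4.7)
The plan is to split the shortest path $\pi(x,y)$ into two halves each short enough to fall within the hypothesis of Lemma~\ref{lm:hop_str} at level $i=\ell$, apply that lemma to each half, bridge them by a single edge of $G$, and finally convert the residual additive term into a multiplicative one by exploiting $d_G(x,y)>\hat R/2$.

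First, I note that since $d_G(x,y)>\hat R/2=\tfrac12\alpha(1/\eps)^\ell$, there is an edge $(z',z)$ on $\pi(x,y)$ (with $z'$ closer to $x$) such that $d_G(x,z')\le \tfrac12\alpha(1/\eps)^\ell$ and $d_G(z,y)<\tfrac12\alpha(1/\eps)^\ell$: take $z$ to be the first vertex of $\pi(x,y)$ (walking from $x$) whose distance from $x$ exceeds $\hat R/2$. Both subpaths $\pi(x,z')$ and $\pi(z,y)$ are trivially $\chU^{(\ell)}$-clustered because $\chU=\bigcup_{j=0}^{\ell}\chU_j$ partitions $V$, so the hypothesis of Lemma~\ref{lm:hop_str} at $i=\ell$ is met by both pairs.

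Next I apply Lemma~\ref{lm:hop_str} to each half, obtaining paths in $G\cup H$ of at most $h_\ell$ hops each, with
\[
d_{G\cup H}^{(h_\ell)}(x,z')\le(1+16c(\ell-1)\eps)\,d_G(x,z')+8\alpha c\,(1/\eps)^{\ell-1},
\]
and analogously for $(z,y)$. Concatenating these two paths with the edge $(z',z)\in E(G)$ produces a path of at most $2h_\ell+1=\beta$ hops whose length is at most
\[
(1+16c(\ell-1)\eps)\bigl(d_G(x,z')+\omega(z',z)+d_G(z,y)\bigr)+16\alpha c\,(1/\eps)^{\ell-1}
= (1+16c(\ell-1)\eps)\,d_G(x,y)+16c\eps\hat R,
\]
where I used $\alpha(1/\eps)^{\ell-1}=\eps\hat R$ and the fact that $z'$, $z$ lie on a shortest $x$--$y$ path.

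Finally, I absorb the additive error: since $\hat R<2d_G(x,y)$ by the range assumption on $(x,y)$, the term $16c\eps\hat R$ is at most $32c\eps\,d_G(x,y)$, so the total multiplicative factor is $1+16c(\ell-1)\eps+32c\eps=1+16c(\ell+1)\eps=1+\zeta$, as required. The only place that has to be handled with any care is the splitting step (ensuring both halves meet the strict inequality $\le\tfrac12\alpha(1/\eps)^\ell$ of the lemma) and the conversion of the additive slack into stretch, which is precisely why the statement is restricted to the distance range $(\hat R/2,\hat R]$; outside this range either the additive error cannot be absorbed, or (for smaller distances) the empty hopset already works.
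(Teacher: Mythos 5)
Your proof is correct and follows essentially the same approach as the paper's: split $\pi(x,y)$ at an edge so that both halves have length at most $\hat R/2$, apply \lemmaref{lm:hop_str} at level $i=\ell$ to each half, bridge with the graph edge for a total of $2h_\ell+1=\beta$ hops, and absorb the $16c\eps\hat R$ additive term into the stretch using $\hat R < 2d_G(x,y)$. The only cosmetic difference is that you explicitly justify the existence of the splitting edge, whereas the paper simply cites it as analogous to the segment-decomposition argument inside \lemmaref{lm:hop_str}.
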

\proof
Let $\pi(x,y)$ be the shortest path in $G$ between $x,y$. By a similar (and simpler) argument to the one appearing in \lemmaref{lm:hop_str}, one can see that there exists an edge $(u,v)\in E$ such that both $u,v$ are on $\pi(x,y)$, and also $d_G(x,u)\le\hat{R}/2$ and $d_G(y,v)\le\hat{R}/2$. Applying \lemmaref{lm:hop_str} on these pairs with $i=\ell$, recalling that $\half \alpha \cdot (1/\eps)^\ell=\hat{R}/2$ and that every vertex is clustered in $\chU^{(\ell)}$, it follows that
\[
d_{G\cup H}^{(h_\ell)}(x,u)\le d_G(x,u)(1 + 16 c (\ell-1) \cdot \eps) +8c\cdot\eps\cdot\hat{R}~.
\]
Similarly also
\[
d_{G\cup H}^{(h_\ell)}(y,v)\le d_G(x,u)(1 + 16 c (\ell-1) \cdot \eps) +8c\cdot\eps\cdot\hat{R}~.
\]
Since $\beta=2h_\ell+1$ and $(u,v)\in E$, we obtain
\begin{eqnarray*}
d_{G\cup H}^{(\beta)}(x,y)&\le&d_{G\cup H}^{(h_\ell)}(x,u)+d^{(1)}_G(u,v)+d_{G\cup H}^{(h_\ell)}(y,v)\\
&\le&(d_G(x,u)+d_G(u,v)+d_G(v,y))\cdot(1 + 16 c (\ell-1) \cdot \eps)+16c\cdot\eps\cdot\hat{R}\\
&\le&d_G(x,y)\cdot(1 + 16 c (\ell-1) \cdot \eps)+32c\cdot\eps\cdot d_G(x,y)\\
&=&d_G(x,y)\cdot(1+\zeta)~.
\end{eqnarray*}
\QED

Recall that $\ell = \lfloor\log (\kappa\rho)\rfloor + \lceil {{\kappa +1 } \over {\rho\kappa}} \rceil - 1\le \log (\kappa\rho)+ \lceil 1/\rho \rceil $ is the number of phases of the algorithm (for the sake of brevity, from now on we shall ignore the ceiling of $1/\rho$). When we rescale $\eps=\zeta$ as the strech factor then $\beta=O(\ell/\eps)^\ell=O\left( {{\log \kappa + 1/\rho} \over \eps} \right)^{\log \kappa + 1/\rho}$.

Our ultimate hopset $H$ is created by $H \leftarrow \bigcup _{k > \log \beta-1} H_k$, i.e., $H$ is the union of up to $\lceil \log \Lambda \rceil$ hopsets, each of which takes care of its own distance range. As a result, the number of edges in $H$ is  $O(n^{1+1/\kappa} \cdot \log \Lambda)$, and its expected construction time is  $O((|E| + n \log n) \cdot n^\rho/\rho \cdot \log \Lambda )$. The following theorem summarizes this result.
\begin{theorem}
\label{thm:hop_exist}
For any graph $G = (V,E)$ with $n$ vertices and diameter $\Lambda$, $2\le\kappa\le (\log n)/4$, $1/2 > \rho \ge 1/\kappa$, and $0 < \eps \le 1$,
our algorithm constructs a $(\beta,\eps)$-hopset $H$ with $O(n^{1+1/\kappa} \cdot \log \Lambda)$ edges in expectation,
in time $O((|E| + n \log n) (n^\rho/\rho\cdot \log \Lambda) )$, with
$\beta =O\left( {{\log \kappa + 1/\rho} \over \eps} \right)^{\log \kappa + 1/\rho}$.

Moreover, the hopset consists of up to $\lceil \log \Lambda \rceil$ single-scale hopsets. Each of these hopsets $H_k$ has the same $\beta$, and its expected size is  $|H_k| = O(n^{1+1/\kappa})$. It can be constructed in $O((|E| + n \cdot \log n) (n^\rho/\rho))$ time.

\end{theorem}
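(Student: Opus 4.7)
The plan is to derive \theoremref{thm:hop_exist} as an aggregation over scales of the single-scale guarantee already in place, namely the Corollary immediately preceding the theorem, together with the size and running-time bounds proved for one scale along the way.

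First I would handle trivially short pairs. Because edge weights of $G$ are positive integers, any pair $u,v$ with $d_G(u,v) \le \beta$ already satisfies $d_G^{(\beta)}(u,v) = d_G(u,v)$, since the shortest $u$--$v$ path uses at most $d_G(u,v) \le \beta$ edges. This justifies taking $H_k = \emptyset$ for every scale $k$ with $\hat R = 2^{k+1} \le \beta$ (there are only $O(\log \beta)$ such scales), and means we need only worry about scales with $\hat R > \beta$. For each such scale we invoke the single-scale construction of \sectionref{sec:cent_hop} to obtain $H_k$; the preceding analysis already shows $\Expect[|H_k|] = O(n^{1+1/\kappa})$ and that $H_k$ is constructed in expected time $O((|E|+n\log n)\cdot n^\rho/\rho)$, proving the ``moreover'' clause.

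Next I would combine scales. Set $H \leftarrow \bigcup_{k\,:\,2^{k+1}>\beta} H_k$, so that at most $\lceil \log \Lambda\rceil$ single-scale hopsets are taken. By linearity of expectation,
\[
\Expect[|H|] \;\le\; \sum_k \Expect[|H_k|] \;=\; O\!\left(n^{1+1/\kappa}\cdot \log \Lambda\right),
\]
and the total expected construction time is at most $\lceil \log \Lambda\rceil$ times the per-scale expected time, matching the claimed $O((|E|+n\log n)(n^\rho/\rho)\cdot \log \Lambda)$. For the stretch and hopbound, fix any pair $u,v$ with $d_G(u,v)>\beta/2$, let $k$ be the unique scale with $d_G(u,v)\in(2^k,2^{k+1}]$, and apply the Corollary to $H_k$: this yields $d_{G\cup H_k}^{(\beta)}(u,v) \le (1+\zeta)\, d_G(u,v)$ with $\zeta = 16c(\ell+1)\eps$ and $\beta \le 6(1/\eps+2)^\ell + 1$. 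Since $H_k \subseteq H$, the same bound holds with $H$ in place of $H_k$.

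Finally I would rescale the internal parameter to produce the parameters in the statement. Replacing $\eps$ in the algorithm by $\eps/(16c(\ell+1))$ makes the multiplicative error at most $1+\eps$ while inflating $\beta$ only by a factor polynomial in $\ell$; plugging in $\ell = \lfloor\log(\kappa\rho)\rfloor + \lceil(\kappa+1)/(\kappa\rho)\rceil - 1 = O(\log\kappa + 1/\rho)$ gives
\[
\beta \;=\; O\!\left(\frac{\log\kappa + 1/\rho}{\eps}\right)^{\log\kappa + 1/\rho},
\]
as claimed. I do not expect any genuine obstacle: the main lemmas (single-scale size, single-scale running time, single-scale $(\beta,1+\zeta)$-stretch, and the triviality of scales $\hat R\le\beta$) do all the work; the only points that require a bit of care are checking that the rescaling of $\eps$ preserves the claimed form of $\beta$ (which it does because $\ell$ does not depend on $\eps$), and that the bound $2\le\kappa\le(\log n)/4$ together with $\rho\ge 1/\kappa$ keeps $n^\rho = \omega(\log n)$, so that the high-probability size bound from \lemmaref{lm:Pi} may be invoked at every scale by a union bound over the $O(\log \Lambda)$ scales.
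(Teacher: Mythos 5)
Your proposal takes essentially the same route as the paper: empty hopsets for the low scales (justified by integer weights), the single-scale size/time/stretch bounds from \sectionref{sec:cent_hop} and its corollary, a union over $O(\log\Lambda)$ scales, and a final rescaling of $\eps$ by $\Theta(1/\ell)$. Two small slips are worth flagging, though neither derails the argument. First, replacing $\eps$ by $\eps/(16c(\ell+1))$ does not inflate $\beta$ by a factor polynomial in $\ell$: the hopbound goes from $\Theta\bigl((1/\eps)^\ell\bigr)$ to $\Theta\bigl((\ell/\eps)^\ell\bigr)$, a blow-up of order $\ell^\ell$; the final expression $\beta = O\bigl((\log\kappa + 1/\rho)/\eps\bigr)^{\log\kappa + 1/\rho}$ that you write down is nonetheless correct, so only the intermediate commentary is wrong. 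Second, the claim that $2\le\kappa\le(\log n)/4$ and $\rho\ge 1/\kappa$ force $n^\rho=\omega(\log n)$ is false (take $\kappa = (\log n)/4$ and $\rho = 1/\kappa$, giving $n^\rho = 16$); the paper instead assumes $\rho \ge \tfrac{\log\log n}{2\log n}$ when invoking \lemmaref{lm:Pi}, and observes separately that this is without loss of generality because for smaller $\rho$ one has $\beta\ge n$ and an empty hopset already suffices. You should add that observation rather than deduce the needed lower bound on $n^\rho$ from the constraints on $\kappa$ and $\rho$ alone.
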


In \sectionref{sec:reduction} we will show how to remove the dependence on the aspect ratio $\Lambda$, and replace it with $n$, which yields the following.

\begin{theorem}
\label{thm:hop1}
For any graph $G = (V,E,\omega)$ with $n$ vertices, $2\le\kappa\le (\log n)/4$, $1/2 > \rho \ge 1/\kappa$, and $0 < \eps \le 1$,
our algorithm constructs a $(\beta,\eps)$-hopset $H$ with $O(n^{1+1/\kappa} \cdot \log n)$ edges in expectation,
in time $O((|E| + n \log n) (n^\rho/\rho\cdot \log n) )$, with
$\beta =O\left( {{\log \kappa + 1/\rho} \over \eps} \right)^{\log \kappa + 1/\rho}$.
\end{theorem}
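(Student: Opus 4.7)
The plan is to apply a Klein-Sairam style reduction, adapted to the hopset setting, which replaces the $O(\log \Lambda)$ single-scale hopsets from \theoremref{thm:hop_exist} with $O(\log n)$ hopsets on appropriately modified graphs, each of polynomial aspect ratio. Given the statement of \theoremref{thm:hop_exist}, all that is needed is a reduction that shows we may \emph{pretend} the aspect ratio is $\poly(n)$ at the cost of a blowup that is absorbed into $O(\log n)$ rather than $O(\log \Lambda)$.

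First, I would observe that for computing $(1+\eps)$-approximate distances at a fixed scale $(2^k,2^{k+1}]$, only edges whose weights lie in a $\poly(n/\eps)$ window around $2^k$ can actually affect the answer: edges of weight exceeding $2^{k+1}$ cannot lie on any relevant $u$-$v$ path of length at most $2^{k+1}$, and the total contribution of edges of weight below $\eps \cdot 2^k / n$ to any simple path is at most $\eps \cdot 2^k$, which can be absorbed into the multiplicative stretch factor. This lets me replace $G$, for the purposes of handling scale $k$, by a modified graph $G'_k$ in which heavy edges are deleted and light edges are contracted (or rounded up to the light-weight threshold). The aspect ratio of $G'_k$ is then $\poly(n/\eps)$, and \theoremref{thm:hop_exist} applied to $G'_k$ produces a single-scale hopset $H_k$ of expected size $O(n^{1+1/\kappa})$ with the claimed $\beta$.

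Second, I would show that only $O(\log n)$ \emph{distinct} scales actually need to be processed. The idea is that the modified graphs $G'_k$ that arise across all values of $k$ fall into $O(\log n)$ essentially equivalent (re-scaled) classes, since the partition of the edge set of $G$ into ``heavy, relevant, light'' only changes $O(\log n)$ times as $k$ varies, and each edge belongs to the ``relevant'' class for only $O(\log(n/\eps))$ consecutive scales. Taking the union of the hopsets produced at these $O(\log n)$ effective scales yields a global $(\beta,\eps)$-hopset of expected size $O(n^{1+1/\kappa}\log n)$, with construction time scaling by $\log n$ rather than $\log \Lambda$, giving exactly the bounds in the theorem statement.

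The main obstacle, which is also what requires the ``new analysis'' alluded to in the overview, is that Klein and Sairam's original reduction is analyzed only for single-source distance estimation, whereas a hopset must approximate distances for \emph{all} pairs simultaneously. I would therefore have to verify that contracting/discarding light edges and deleting heavy edges preserves the $(1+\eps)$ approximation uniformly over every pair at the target scale, and that the union of the per-scale hopsets correctly handles every pair in $G$, including pairs whose shortest paths cross edges from multiple weight classes. The delicate part is bounding the accumulated multiplicative error across these weight-class boundaries, for which I would invoke the same ``absorb into $1+\eps$'' trick used in rescaling throughout \sectionref{sec:cent_hop}, applied pairwise rather than per-source, and then argue (by a union/coverage argument over scales) that every $u,v$ with $d_G(u,v)$ in some range $(2^k,2^{k+1}]$ has all its nearly-shortest paths captured in at least one $G'_k$.
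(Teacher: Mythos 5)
Your high-level plan — truncate heavy edges, contract light edges, get $\poly(n/\eps)$ aspect ratio per scale, apply \theoremref{thm:hop_exist}, union over scales — is indeed the paper's approach in \sectionref{sec:reduction}. But the proposal has a genuine gap in the counting step, and it glosses over a non-trivial piece of the construction.

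The counting claim is wrong. You assert ``only $O(\log n)$ distinct scales actually need to be processed'' because the heavy/relevant/light partition ``only changes $O(\log n)$ times as $k$ varies.'' This is false: if the edge weights are, say, $1, n^2, n^4, \ldots, n^{2m}$, the partition changes $\Omega(m)$ times and there are $\Omega(m)$ relevant scales, which can be far larger than $\log n$ (and is generally $\tilde{\Theta}(|E|)$). The fact you state next — each edge is relevant in only $O(\log(n/\eps))$ consecutive scales — is correct, but the conclusion to draw from it is not ``few scales,'' it is that the \emph{total} size $\sum_k \bigl(|E(G'_k)| + n_k\bigr)$ summed over all relevant scales is $O\bigl((|E|+n)\log n\bigr)$, where $n_k$ is the number of nodes that are non-isolated in $G'_k$. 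The paper proves $\sum_k n_k = O(n\log n)$ by observing that the nodes across all scales form a laminar family of $\le 2n-1$ sets, each active in at most $\log(n/\eps)+O(1)$ scales. Then the expected total hopset size is $\sum_k O(n_k^{1+1/\kappa}) \le n^{1/\kappa}\sum_k O(n_k) = O(n^{1+1/\kappa}\log n)$, and similarly for time. Bounding the number of scales directly, as you propose, does not work.

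Second, contraction creates supernodes, and the hopset $H_k$ produced by \theoremref{thm:hop_exist} on $G'_k$ has edges between supernodes. To get a hopset for $G$ you must (a) designate a center $u\in U$ for every supernode $U$ and route $H_k$-edges between centers, and (b) add auxiliary edges from each center to every other vertex of its supernode, with carefully chosen weights, so that a path in $G'_k\cup H_k$ can be translated into a bona fide path in $G\cup H$ with a controlled number of extra hops and controlled extra length. The paper's set $S$ does exactly this, bounded by $|S|\le n\log n$ via a small-to-large doubling argument (when two nodes merge, only vertices in the smaller part change center). Your proposal does not mention how to get from a hopset on contracted graphs back to a hopset on $G$; without this, the ``union/coverage argument over scales'' you allude to cannot be completed. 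Relatedly, the translation step increases the hopbound (the paper obtains a $(6\beta+5, 6\eps)$-hopset and then rescales $\eps$ and absorbs the constant), which is consistent with the big-$O$ in the statement of \theoremref{thm:hop1} but should be acknowledged.
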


Finally, we note that our assumption that $\rho>\log\log n/(2\log n)$ is justified, as otherwise we get $\beta\ge n$, in which case an empty hopset will do. Also $\eps\le 1$, because we rescaled it by a factor
of $ 16c (\ell+1)> 10$.

\subsubsection{Improved Hopset Size}
\label{sec:less-edges}

Here we show how select a refined degree sequence that will eliminate the term of $\log\kappa$ from the size of the hopset constructed at \sectionref{sec:cent_hop}, by increasing the number of phases by 1 (and thus the exponent of $\beta$ by an additive 1).
Specifically, one can  set $\deg_i = n^{2^i/ \kappa}/2^{2^i-1}$, for each $i = 0,1,\ldots,i_0=\lfloor\log(\kappa\rho)\rfloor$.
As a result we get that $\Expect[|\chP_i|]=n\cdot\prod_{j=0}^{i-1}1/\deg_j=n^{1-\frac{2^i-1}{\kappa}}\cdot 2^{2^i-1-i}$, and thus the expected number of edges inserted at phase $i\le i_0$ is at most
\[
O(\Expect[|\chP_i|]\cdot\deg_i)=O(n^{1+1/\kappa}/2^i)~,
\]
and thus it is $O(n^{1+1/\kappa})$ over all phases $i=0,1,\ldots,i_0$.
When the first stage concludes, we run the phase $i_0+1$ with $\deg_{i_0+1} = n^{\rho/2}$, and all subsequent phases with $\deg_i = n^\rho$. To bound the expected number of edges added at phase $i_0+1$ we need to note that $2^{2^{i_0+1}}\le 2^{2\kappa\rho}\le n^{\rho/2}$ as long as $\kappa\le (\log n)/4$. (The latter can be  assumed without affecting any of the parameters by more than a constant factor). It follows that $\Expect[|\chP_{i_0+1}|]\cdot\deg_{i_0+1}= n^{1-\frac{2^{i_0+1}-1}{\kappa}}\cdot 2^{2^{i_0+1}-1-(i_0+1)}\cdot n^{\rho/2}\le n^{1+1/\kappa}$. In the remaining phases $\Expect[|\chP_{i_0+i}|]\le n^{1+1/\kappa-(i-1)\rho}$ for $i\ge 2$, and the contribution of these phases is a converging sequence. In particular, at phase $i_0+2$ we have $\Expect[|\chP_{i_0+2}|]\le n^{1+1/\kappa-\rho}$. We can take $i_1= i_0+ \lceil\frac{\kappa+1}{\kappa\rho}\rceil-1$, and finally $\Expect[|\chP_{i_1+1}|]\le n^{1+1/\kappa-(i_1+1-i_0-1)\rho}\le n^{\rho}$, as required to bound the running time of the final interconnection phase. The total number of phases is now $\ell=i_1+1=\lfloor\log(\kappa\rho)\rfloor+\lceil\frac{\kappa+1}{\kappa\rho}\rceil$, which might be larger by an additive 1 than the bound claimed in \theoremref{thm:hop1}. For the sake of brevity, we shall ignore this small overhead.

\subsection{Distributed Implementation in Congested Clique Model}
\label{sec:hopset_clique}

In this section we argue that our hopset construction admits an efficient implementation in the distributed Congested Clique model, albeit with somewhat worse parameters.

A direct implementation of the algorithm from the \sectionref{sec:cent_hop} in this model requires up to $O(n)$ distributed time, because a Dijkstra algorithm invoked during the hopset's construction may explore paths with up to $n$ hops.
To overcome this issue we use the following idea, which dates back to Cohen's work \cite{C00}. Specifically, we use hopsets $\bigcup_{\log \beta -1 < j \le k-1} H_j$ to construct the hopset $H_k$.
Indeed, the hopset $H_k$ needs to take care of vertex pairs $u,v$ with $d_G(u,v) \in (2^k,2^{k+1}]$, $\hR^{(k)} = 2^{k+1}$, while $E \cup H^{(k-1)}$ (with $H^{(k-1)} = \bigcup_{j \le k-1} H_j$) provides $(1+\eps)$-approximate shortest paths with up to $\beta $ hops for pairs $u,v$ with $d_G(u,v) \le 2^k$.
Denote $E'^{(k-1)} = E \cup H^{(k-1)}$.

Consider a vertex pair $u,v \in V$, with $d_G(u,v) \in (2^k,2^{k+1}]$, and let $\pi(u,v)$ be a shortest path between them. Let $x \in V(\pi(u,v))$ be the farthest vertex of $\pi(u,v)$ from $u$ which is at distance no greater than $2^k$ from $u$, and let $y$ be its neighbor such that $d_G(u,y) > 2^k$.
Observe that the hopset $H^{(k-1)}$ provides (together with the edge set $E$ of the original graph) a $(1+\eps)$-approximate shortest $u-v$ path with at most $2\beta + 1$ hops.
This  path $\pi'(u,v)$ is a concatenation of a $(1+\eps)$-approximate $u-x$ path $\pi'(u,x)$ in $E'^{(k-1)}$ with the edge $(x,y)$ and with a $(1+\eps)$-approximate $y-v$ path $\pi'(y,v)$ in
 $E'^{(k-1)}$.

Next we generalize this observation.

\begin{lemma}
\label{lm:aux_hopset}
For any index $p$, $\log \beta - 1 \le p \le k-1$, the hopset $H^{(p-1)}$ provides (together with edges of $E$) $(1+ \eps)$-approximate shortest paths with at most $2^{k+1-p}(\beta+1)$ hops for pairs $u$,$v$, with $d_G(u,v) \in (2^k,2^{k+1}]$,
\end{lemma}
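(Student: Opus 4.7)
The plan is to chop the shortest $u$–$v$ path into at most $2^{k+1-p}$ sub-paths of $G$-length at most $2^p$ each (separated by a single inter-segment edge), then apply $H^{(p-1)}$ on each sub-path and concatenate. Since $H^{(p-1)}$ already takes care of every distance scale strictly below $2^p$ with at most $\beta$ hops, each piece will contribute $\beta$ hops and each inter-segment gap will contribute $1$ hop.

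First I would fix a shortest path $\pi = \pi(u,v)$ in $G$ and define the cut-points greedily: set $u_1 = u$, and given $u_j$, let $v_j$ be the farthest vertex past $u_j$ along $\pi$ with $d_G(u_j,v_j) \le 2^p$. If $v_j = v$ terminate with $q=j$; otherwise let $u_{j+1}$ be the neighbor of $v_j$ on $\pi$ closer to $v$. Maximality of $v_j$ forces $d_G(u_j,u_{j+1}) > 2^p$. Since the subpath lengths $d_G(u_j,u_{j+1})$ for $j<q$ are disjoint pieces of $\pi$, we get $(q-1)\cdot 2^p < d_G(u,v) \le 2^{k+1}$, so $q \le 2^{k+1-p}$.

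Next, for each segment I would show that $E \cup H^{(p-1)}$ contains a $(1+\eps)$-approximate $u_j$–$v_j$ path with at most $\beta$ hops. Since edge weights are positive integers, if $d_G(u_j,v_j) \le \beta$ the corresponding subpath of $\pi$ already has at most $\beta$ hops and lives entirely in $E$. Otherwise $d_G(u_j,v_j) \in (2^{j'},2^{j'+1}]$ for some $j'$ with $\log\beta-1 < j' \le p-1$, and Theorem~\ref{thm:hop_exist} applied to the single-scale hopset $H_{j'} \subseteq H^{(p-1)}$ delivers a $(1+\eps)$-approximating path with at most $\beta$ hops.

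Finally, I would concatenate these $q$ approximating paths with the $q-1$ inter-segment edges $(v_j,u_{j+1}) \in E$ to build a $u$–$v$ walk in $E \cup H^{(p-1)}$. The total hop count is at most $q\beta + (q-1) \le q(\beta+1) \le 2^{k+1-p}(\beta+1)$, and the total length is at most
\[
\sum_{j=1}^{q}(1+\eps)\,d_G(u_j,v_j) + \sum_{j=1}^{q-1}\omega(v_j,u_{j+1}) \;\le\; (1+\eps)\,d_G(u,v),
\]
since the segments and inter-segment edges tile $\pi$. The only subtlety worth flagging is the boundary case $d_G(u_j,v_j) \le \beta$, which is precisely the regime where the construction in \sectionref{sec:cent_hop} sets $H_j=\emptyset$; one has to note that this is harmless exactly because of the integer-weight assumption, which makes $E$ itself serve as a $\beta$-hop exact path for such short distances. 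Everything else is a routine bookkeeping argument.
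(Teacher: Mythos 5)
Your proof is correct and follows the same partition-and-substitute strategy the paper uses: break the shortest $u$–$v$ path into greedy segments short enough that $H^{(p-1)}$ (or $E$ itself, in the integer-weight short-distance regime) handles each with $\le\beta$ hops, stitch with the single inter-segment edges, and count. You also correctly flag the edge case where a segment is shorter than $\beta$, which is handled by the integrality of weights rather than the hopset.

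One small point in your favor worth noting: your segment length bound of $2^{p}$ is the one that makes the arithmetic come out to $q\le 2^{k+1-p}$, because the maximality argument gives $d_G(u_j,u_{j+1})>2^{p}$ and $d_G(u,v)\le 2^{k+1}$. The paper's proof text writes ``segments of length at most $2^{p-1}$'' and ``extended segments have length at least $2^{p-1}$,'' which on its own would only give $q\le 2^{k+2-p}$, a factor of $2$ worse than the lemma's claimed hop bound; this looks like a typo in the paper ($2^{p-1}$ for $2^{p}$), and your version is the internally consistent one. Also note that a segment of length $\le 2^p$ falls in some range $(2^{j'},2^{j'+1}]$ with $j'\le p-1$, so it is indeed covered by $H_{j'}\subseteq H^{(p-1)}$, which is exactly the full strength of the hypothesis; the paper's $2^{p-1}$ threshold would only invoke $H^{(p-2)}$ and leave the factor of $2$ on the table.
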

\proof
Partition $\pi(u,v)$ into segments of length at most $2^{p-1}$, except for void segments. (See the proof of Lemma \ref{lm:hop_str} for the definition of void, extended and substitute segments.)
For each non-void extended segment we use at most $\beta + 1$ hops of the substitute segment in $G \cup H^{(p-1)}$, and for each void extended segment  we use just one hop. Every extended segment has length at least $2^{p-1}$, and thus $\pi(u,v)$ is partitioned into at most $2^{k+1-p}$ such segments.
Hence the assertion of the lemma follows. (In fact, in the last segment $\beta$ hops suffice, and thus the estimate can be refined to $2^{k+1-p}(\beta+1) - 1$.)
\QED

The first variant of our distributed algorithm builds hopsets $H_{k_0}, H_{k_0+1},H_{k_0+2},\ldots$, $k_0 = \lceil \log \beta \rceil$, one after another. Suppose that all hopsets $H_j$, $j < k$, were already built, and we are now building $H_k$. We only need to describe how the superclustering and the interconnection steps on a phase $i$, $0 \le i \le \ell$ are implemented. (On phase $\ell$ there is no superclustering step, and there we only need to implement the interconnection step.)

Denote $\zeta =\zeta_{k-1}$ the value such that $E'^{(k-1)}$ guarantees stretch $1+\zeta$.
We slightly modify the sequence of distance thresholds $\delta_i$, specifically, we multiply them all by a factor of $1+\zeta$.

We define $R'_i = (1+\zeta)R_i$ and $\delta'_i=(1+\zeta)\delta_i$, for every $i \in [0,\ell]$. Here $\alpha = \alpha^{(k)} = {{\hR^{(k)}} \over {(1/\eps)^\ell}}$, where $\hR^{(k)}=2^{k+1}$.
The distributed variant of our algorithm uses distance thresholds $(\delta'_i \mid 0 \le i \le \ell)$, and as a result upper bounds on the radii of $i$-level clusters become $R'_i = R_i(1+\zeta)$.

The superclustering step of our {\em centralized} algorithm conducts a Dijkstra exploration from the set $\Rt = \{r_c \mid C \in \cS_i\}$ in $G$ to distance $\delta_i = \alpha \cdot (1/\eps)^i + 4 R_i \le \hR^{(k)}$.
Instead in the distributed version we invoke  Bellman-Ford algorithm originated in $\Rt$ over the edge set $E'^{(k-1)} = E \cup H^{(k-1)}$ for $2\beta+1$ rounds.

Specifically, vertices $r_C \in \Rt$ initialize  their distance estimates $\hd(r_C) = 0$, and other vertices initialize them as $\infty$.
On every round every vertex $v$ sends its estimate $\hd(v)$ to all other vertices in the network. Every vertex $u$ sets locally
$$\hd(u) \leftarrow \min \{\hd(u), \min_{v\in V} \{\hd(v) + \omega_{E'^{(k-1)}}(v,u)\}\}~.$$
This computation proceeds for $2\beta+1$ rounds.

As a result every vertex $v$ computes its $(2\beta+1)$-limited distance in $E'^{(k-1)}$ from $\Rt$. For $v$ such that $d_G(\Rt,v) \le 2^{k+1} = \hR^{(k)}$, we have
$$d_G(\Rt, v) ~\le~d^{(2\beta+1)}_{E'^{(k-1)}}(\Rt,v) ~\le~ (1+\zeta)d_G(\Rt,v)~.$$
For every $i \in [0,\ell-1]$,
$$\delta_i \le     \delta_{\ell-1} = \alpha \cdot (1/\eps)^{\ell-1} + 4 R_{\ell-1} \le \alpha \cdot (1/\eps)^{\ell-1} + 8 \alpha (1/\eps)^{\ell-2} = \alpha(1+8\eps)(1/\eps)^{\ell-1} \le \alpha \cdot (1/\eps)^\ell = \hR^{(k)},$$
for $\eps \le 1/10$.
(Recall that on phase $\ell$ there is no superclustering step, and thus the maximum exploration of any superclustering step is $\delta_{\ell-1}$.)

We conclude that for every $v \in \Ball(\Rt,\delta_i)$, its distance estimate $\hd(v)$ satisfies
\begin{equation}
\label{eq:est}
d_G(\Rt,v) \le \hd(v) \le (1+\zeta) \cdot d_G(\Rt,v)~.
\end{equation}
Moreover, this execution of Bellman-Ford  algorithm also constructs a forest $F$ rooted at the centers $\Rt= \{r_C \mid C \in \cS_i\}$ of $i$-level selected clusters, and every vertex $v$ with $\hd(v) < \infty$  knows the root $r_C$ of its tree in $F$.

For any cluster center $r_{C'}$, $C' \in \chP_i \setmns \cS_i$, such that $\hd(r_{C'}) \le \delta'_i = (1+\zeta)\delta_i$,  the algorithm connects $r_C$ with $r_{C'}$ via a hopset edge $e = (r_C,r_{C'})$ (i.e., $H \leftarrow H \cup \{e\}$) of weight $\omega(e) = \hd(r_{C'})$, where $r_C$ is the root of the tree of $F$ to which $r_{C'}$ belongs.
We also create a supercluster rooted at $C$ (more specifically, at $r_C$) which contains all vertices of $C'$ as above.
Observe that (by (\ref{eq:est})), if $d_G(r_C,r_{C'}) \le \delta_i$ then $\hd(r_{C'}) \le (1 + \zeta) \delta_i = \delta'_i$. Then the edge $(r_C,r_{C'})$ will be added to the hopset, and the cluster $C'$ will be superclustered in an $i$-level cluster created around $C$.
(See also inequality (\ref{eq:appr}) below.)

This completes the description of the superclustering step of phase $i$. The interconnection step is conducted in a similar way, but now the Bellman-Ford algorithm is conducted from each cluster center  in
$\URt = \{r_C \mid C \in \chU_i\}$ {\em separately}. (This is in contrast to the superclustering step, on which the Bellman-Ford algorithm is conducted from the set $\Rt$.)
As a result, every vertex $v$ maintains estimates $\hd(r_C,v)$ for cluster centers  $r_C \in \URt$.
On each step $v$ sends all its estimates $\hd(r_C,v)$ which satisfy $\hd(r_C,v) \le \half \delta'_i$ to the entire network. Recall \lemmaref{lm:property}, which implies that whp, no vertex $v$ has more than $O(\deg_i \cdot \log n)$ cluster centers $\{r_C \mid C \in \chU_i\}$ at distance at most $\half \delta'_i$ from it in $E'^{(k-1)}$. Hence, on each step each vertex $v$ needs to send $O(\deg_i \cdot \log n)$ messages to the entire network, and this requires $O(\deg_i \cdot \log n)$ rounds (whp).
Since $\deg_i = O(n^\rho)$ for all $i$, we conclude that the algorithm requires $O(n^\rho \cdot (1/\rho) \log n \cdot \beta)$ time whp. (Because on each phase we perform $2\beta+1$ steps, each of which lasts $O(\deg_i \cdot  \log n)$ rounds, whp.)

As a result, for every pair of clusters $C,C' \in \chU_i$ such that $d_G(r_C,r_{C'}) \le \half \delta_i$, the edge $(r_C,r_{C'})$ is inserted into the hopset.
Indeed, observe that
$$\half \cdot \delta_i \le \half \cdot \delta_\ell \le \alpha \cdot (1/\eps)^\ell (1+8\eps)/2 \le \alpha \cdot (1/\eps)^\ell = \hR^{(k)}~,$$
for $\eps < 1/10$. Hence $d_G(r_C,r_{C'}) \le \hR^{(k)}$.
Denote by $\hd(r_C,r_{C'})$ the  estimate of $d_G(r_C,r_{C'})$ computed by $r_{C'}$. Then we have
\begin{equation}
\label{eq:appr}
\hd(r_C,r_{C'}) ~=~ d^{(2\beta+1)}_{E'^{(k-1)}}(r_C,r_{C'}) ~\le~ (1+\zeta)d_G(r_C,r_{C'}) ~ \le ~\half (1+\zeta) \delta_i ~=~ \half \delta'_i~.
\end{equation}

Next we analyze the properties of the resulting hopset $H = \bigcup_{k > \log \beta -1} H_k$. The size of the hopset is the same as with the centralized algorithm, but in the stretch analysis we incur additional overhead in comparison with the centralized algorithm. The reason is that in the centralized construction every pair of sufficiently close $\chU_i$ cluster centers were interconnected via an edge of length {\em exactly equal }  to the distance in $G$ between them, while now the length of this edge is equal to the distance between them in $E'^{(k-1)}$, i.e., it is a $(1+\zeta)$-approximation of the distance in $G$ between them.

The following lemma is a distributed analogue of Lemma \ref{lm:hop_str}.

\begin{lemma}
\label{lm:hop_str_distr}
For $x,y$ as in Lemma \ref{lm:hop_str} it holds that
\begin{equation}
\label{eq:hop_str_distr}
d^{(h_i)}_{G \cup H_k}(x,y) ~\le~ d_G(x,y)(1+\zeta)(1 + 16c (i-1) \eps) + (1+\zeta) \cdot 8 \cdot \alpha  c \cdot (1/\eps)^{i-1}~,
\end{equation}
with $h_{i+1}  \le (h_i+1)(1/\eps + 2) +2i+5$.
\end{lemma}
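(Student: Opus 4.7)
My plan is to mirror the centralized proof of \lemmaref{lm:hop_str}, carrying a multiplicative $(1+\zeta)$ factor through each step to account for the fact that every hopset edge inserted by the distributed algorithm has weight $\hd(\cdot,\cdot)$ satisfying $d_G(\cdot,\cdot) \le \hd(\cdot,\cdot) \le (1+\zeta)\,d_G(\cdot,\cdot)$, rather than exact $d_G$-distance. The proof will again go by induction on $i$, with identical hop-count bookkeeping, so the recursion $h_{i+1} \le (h_i+1)(1/\eps+2)+2i+5$ is unchanged.

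Before entering the induction I will record a distributed analogue of \claimref{claim:conn-cluster}: any vertex $u$ in an $i$-level cluster $C$ is connected to $r_C$ by an $i$-hop path in $H_k$ of length at most $R'_i=(1+\zeta)R_i$. This follows from the same induction as before, using that the superclustering step of phase $j$ inserts an edge $(r_C,r_{C'})$ of weight at most $\delta'_j=(1+\zeta)\delta_j$ whenever a non-sampled cluster center is reached by Bellman-Ford from $\Rt$. Similarly, any two unclustered $\chU_i$ cluster centers at $G$-distance at most $\delta_i/2$ are joined by a single hopset edge of weight at most $(1+\zeta)d_G(r_C,r_{C'})$ by \eqnref{eq:appr}.

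For the base case $i=0$, singletons $\{x\},\{y\}\in\chU_0$ with $d_G(x,y)\le\alpha/2$ are directly connected by an edge in $H_k$ of weight at most $(1+\zeta)d_G(x,y)$, which fits the target bound. For the inductive step, I will reuse verbatim the segment decomposition $\pi(u,v)=\hL_1\circ\ldots\circ\hL_q$ and the substitute segments $L'_p$ from the centralized proof. When $L_p$ is non-void, the induction hypothesis gives a path in $G\cup H_k$ between $u_p$ and $v_p$ of length at most $(1+\zeta)(1+16c(i-1)\eps)d_G(u_p,v_p)+(1+\zeta)\cdot 8\alpha c(1/\eps)^{i-1}$; appending the graph edge $(v_p,u_{p+1})$ and using $d_G(v_p,u_{p+1})\le(1+\zeta)d_G(v_p,u_{p+1})$, the substitute $L'_p$ has weight at most $(1+\zeta)(1+16c(i-1)\eps)d_G(u_p,u_{p+1})+(1+\zeta)\cdot 8\alpha c(1/\eps)^{i-1}$. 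Summing over segments (each of length at least $\tfrac12\alpha(1/\eps)^i$) and absorbing the additive error into the multiplicative term exactly as in \eqnref{eq:short_str} yields a clean $(1+\zeta)(1+16ci\eps)$ multiplicative bound plus the appropriate additive $(1+\zeta)\cdot 8\alpha c(1/\eps)^{i-1}$ term.

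The main calculation to be careful with is the final combination for a pair $x,y$ with $d_G(x,y)\le\tfrac12\alpha(1/\eps)^{i+1}$ that is $\chU^{(i+1)}$-clustered. Here I add together: the two approximate subpaths from $x$ to $w_1$ and $w_2$ to $y$ (each contributing $(1+\zeta)$), the two original edges $(w_1,z_1)$ and $(z_2,w_2)$, the two intra-cluster paths of length at most $R'_{i+1}=(1+\zeta)R_{i+1}$, and the single hopset edge $(r_1,r_2)$ of weight at most $(1+\zeta)(d_G(z_1,z_2)+2R'_{i+1})$. Using $R'_{i+1}\le(1+\zeta)\cdot 2\alpha(1/\eps)^i$, combining the pure-$G$ edge weights with $d_G(x,y)$, and factoring out $(1+\zeta)$ everywhere gives the target bound $(1+\zeta)(1+16c(i+1)\eps)d_G(x,y)+(1+\zeta)\cdot 8\alpha c(1/\eps)^i$. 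The only potential obstacle is verifying that inflating $R_{i+1}\mapsto R'_{i+1}$ does not break the absorption of the extra $4\alpha c(1/\eps)^i+16\alpha c(1/\eps)^{i-1}$ into $8\alpha c(1/\eps)^i$; this survives since that absorption already had slack for $\eps\le 1/10$, and the $(1+\zeta)$ prefactor passes through the inequality unchanged.
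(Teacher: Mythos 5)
Your proof takes the same route as the paper's: re-run the centralized induction of \lemmaref{lm:hop_str}, replacing exact distances by $(1+\zeta)$-approximate Bellman--Ford estimates in $E'^{(k-1)}$ and carrying along $R'_i=(1+\zeta)R_i$; the segment decomposition, hop-count recursion $h_{i+1}\le(h_i+1)(1/\eps+2)+2i+5$, and appeal to inequality~(\ref{eq:appr}) are exactly what the paper does, and it too just declares the inductive step ``analogous.''

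One step deserves more care, precisely the one you flag as a ``potential obstacle.'' The hopset edge carries weight at most $(1+\zeta)\bigl(d_G(z_1,z_2)+2R'_{i+1}\bigr)$, and since $R'_{i+1}=(1+\zeta)R_{i+1}$, the $R_{i+1}$ piece picks up a $(1+\zeta)^2$ factor, not just $(1+\zeta)$. After you factor $(1+\zeta)$ out of all seven summands, the inner additive remainder is $2(2+\zeta)R_{i+1}+16\alpha c(1/\eps)^{i-1}$ rather than $4R_{i+1}+16\alpha c(1/\eps)^{i-1}$, and fitting it under $8\alpha c(1/\eps)^i$ requires $4+2\zeta+16\eps\le 8$, i.e.\ $\zeta\le 2-8\eps$. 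This is not automatic from ``$\eps\le 1/10$'' alone as you assert (``the $(1+\zeta)$ prefactor passes through unchanged'' is exactly where the compounding hides); the lemma as stated puts no bound on $\zeta$, and the paper's own proof has the same gap---its closing sentence checks only $16\eps\le 4$ with $\eps<1/4$, which leaves no room at all for $\zeta>0$. The gap is harmless because in the downstream applications $\eps$ and $\zeta$ are rescaled to be far below these thresholds (and one can alternatively just state the lemma with the slightly weaker $(1+\zeta)^2$ additive constant), but you should make the restriction on $\zeta$ explicit rather than treating the absorption as automatic.
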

\inline Remark: Note that the hopset $H_k$ alone suffices for approximating single-scale distances, i.e., one does not need hopsets $H_j$ with $j < k$ for the inequality (\ref{eq:hop_str_distr}) to hold.
\proof
Recall that the radius of $i$-level cluster in the distributed variant of our algorithm is at most $R'_i \le (1 + \zeta) R_i$, where $R_i \le \alpha \cdot c \cdot (1/\eps)^{i-1}$.
The proof is again by an induction on $i$.
The basis is $i=0$.
\inline Basis: If $d_G(x,y) \le \half \alpha = \half \delta_0$ and all vertices of $\pi(x,y)$ are clustered in $\chU_0$, then there is a hopset edge in $H_k$ between $x$ and $y$, and its weight is at most $(1+ \zeta)d_G(x,y)$, as required.

\inline Step: We follow closely the proof of Lemma \ref{lm:hop_str}.  In inequality ({\ref{eq:short_str}) we get that for a pair $u,v$ of vertices such that all vertices of $\pi(u,v)$ are clustered in $\chU^{(i)}$, it holds that
$$d^{(h_i +1)\cdot \lceil {{d_G(u,v)} \over {\half \alpha \cdot (1/\eps)^i}} \rceil }_{G \cup H_k}(u,v) ~\le~d_G(u,v)(1+\zeta)(1+16c \cdot i \cdot \eps) + 8 \cdot \alpha  c \cdot (1/\eps)^{i-1} \cdot (1+\zeta)~.$$

Now we consider a pair of vertices $x,y$ such that  $d_G(x,y) \le \half \alpha \cdot (1/\eps)^{i+1}$ and such that $\pi(x,y)$ is $\chU^{(i+1)}$-clustered.
Let $w,z_1,r_1,r_2,z_2,w_2,C_1$ and $C_2$ be as in the proof of Lemma \ref{lm:hop_str}.
In an analogous way we conclude that
$$d_{G \cup H_k}^{(h_{i+1})}(x,y) ~\le~ (1+ \zeta)(1 + 16c \cdot i \cdot \eps) d_G(x,y) + (1+\zeta)8 \cdot \alpha  c \cdot (1/\eps)^i ~,$$
with $h_{i+1} \le (h_i+1) (1/\eps + 2) + 2i+5$.
For $\eps < 1/4$, we have $16 \alpha  \cdot c \cdot (1/\eps)^{i-1} \le 4 \cdot \alpha \cdot c \cdot (1/\eps)^i$, and the assertion of the lemma follows.
\QED



For a pair of vertices $u,v \in V$ with $\hR^{(k)}/2 < d_G(u,v) \le \hR^{(k)}$, $\hR^{(k)} = 2^{k+1}$, the additive term of $(1+\zeta)8 \alpha  c \cdot (1/\eps)^{\ell-1}$ in (\ref{eq:hop_str_distr}) can be incorporated into the multiplicative stretch, i.e., we get
$$
d^{(h_\ell)}_{G \cup H_k}(x,y) ~ \le ~ (1+\zeta)\left( (1+ 16c (\ell-1) \eps) d_G(x,y) + 8c \hR^{(k)}\cdot \eps\right)
~ \le ~ (1+\zeta)(1 + 16c \ell \cdot \eps) d_G(x,y)~.
$$
Set now  $\eps' = 16c \cdot \ell \cdot \eps \le 16c (\log (\kappa\rho) +1/\rho) \cdot \eps$.
We get $\beta = O\left( {{\log \kappa + 1/\rho} \over {\eps}}\right)^{\log \kappa + 1/\rho}$, and stretch $(1+\zeta)(1+\eps)$.

Recall that $\zeta = \zeta_{k-1}$ is the value such that $E'_{k-1}$ provides stretch $1+\zeta$.
For the largest $k_0$ such that $\hR^{(k_0)} = 2^{k_0 + 1} \le \beta$, we have $\zeta_{k_0} = 0$. On the next scale we have $\zeta_{k_0+1} = \eps$, and generally, $1 + \zeta_k = (1 + \zeta_{k-1}) (1+ \eps)$, i.e., $1+ \zeta_k = (1 + \eps)^k$.

Hence the overall stretch of our hopset is $(1 + \eps)^{\log \Lambda}$. By rescaling $\eps'' = {\eps \over {2 \log \Lambda}}$, we get stretch $(1+\eps'')^{\log \Lambda} \le 1+ \eps$.
The number of hops becomes
\begin{equation}
\label{eq:beta}
\beta = O\left({{\log \Lambda} \over {\eps}} \cdot (\log \kappa + 1/\rho)\right)^{\log \kappa + 1/\rho}~.
\end{equation}

The expected number of edges in the hopset is $O(n^{1+1/\kappa} \cdot \log \Lambda)$, and its construction time is, whp,  $O(n^\rho/\rho\cdot\log n \cdot \beta\cdot\log \Lambda )$ rounds.

\begin{theorem}
\label{thm:distr_hop}
For any graph $G = (V,E,\omega)$ with $n$ vertices and diameter $\Lambda$, $2\le\kappa\le (\log n)/4$, $1/2>\rho \ge 1/\kappa$, and $0 < \eps < 1$,
our distributed algorithm for the Congested Clique model computes a $(\beta,\eps)$-hopset $H$ with expected size $O(n^{1+1/\kappa} \cdot \log \Lambda)$
in $O(n^\rho/\rho\cdot\log n\cdot\beta\cdot \log \Lambda)$ rounds, whp, with $\beta$ given by (\ref{eq:beta}).

Moreover, a single-scale hopset $H_k$ that provides stretch at most $1+\eps$ using at most $\beta$ hops for pairs $u,v$ with $d_G(u,v) \in (2^k,2^{k+1}]$, and expected size
$O(n^{1+1/\kappa})$, can be computed in the same number of rounds.
\end{theorem}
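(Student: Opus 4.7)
The plan is to assemble the theorem by induction on the scale $k$, building the single-scale hopsets $H_{k_0}, H_{k_0+1}, \ldots, H_{\lceil \log \Lambda \rceil}$ in order, using at each step the union $H^{(k-1)} = \bigcup_{j<k} H_j$ together with $E$ as the substrate on which to run Bellman--Ford. The base of the induction is the scales $k \le \log \beta - 1$, for which an empty hopset already works. For the inductive step, I would instantiate the superclustering and interconnection procedures described before the theorem: on phase $i$, sample $\cS_i \subseteq \chP_i$ with probability $1/\deg_i$, run $2\beta+1$ rounds of Bellman--Ford over $E'^{(k-1)}$ rooted at $\Rt = \{r_C \mid C \in \cS_i\}$ to produce the forest and add superclustering edges, and then run separate explorations from each unclustered cluster center to radius $\delta'_i/2$ to add interconnection edges. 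This matches the construction whose correctness is already encoded in the earlier discussion, and the rescaling of $\delta_i$ to $\delta'_i = (1+\zeta_{k-1})\delta_i$ exactly absorbs the multiplicative error of using $E'^{(k-1)}$ as a distance substrate, via inequality~(\ref{eq:appr}).

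For the stretch analysis, the strategy is to invoke \lemmaref{lm:hop_str_distr} at $i = \ell$, which already gives that for any $u,v$ with $d_G(u,v) \in (\hR^{(k)}/2, \hR^{(k)}]$, the single-scale bound $d_{G \cup H_k}^{(h_\ell)}(u,v) \le (1+\zeta_{k-1})(1 + 16c\ell\eps)d_G(u,v)$ holds once the additive term is folded into the multiplicative one. Then I would set up the scale recursion $1+\zeta_k = (1+\zeta_{k-1})(1+\eps')$ starting from $\zeta_{k_0} = 0$, where $\eps' = 16c\ell \eps$, yielding $1+\zeta_{\log \Lambda} \le (1+\eps')^{\log \Lambda}$. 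Rescaling $\eps \mapsto \eps/(c' \ell \log \Lambda)$ for a sufficiently large constant $c'$ then drives the total stretch down to $1+\eps$, while inflating the hopbound by the factor $(\log \Lambda)^{\log \kappa + 1/\rho}$ promised in (\ref{eq:beta}).

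For the round complexity, each phase $i$ consists of $O(\beta)$ Bellman--Ford rounds (since we run for $2\beta+1$ rounds, each using messages of constant size \emph{per source}). The superclustering step has a single "source set" and is therefore trivially congestion-free in the Congested Clique. The interconnection step is the delicate part: by \lemmaref{lm:property}, with high probability every vertex $v$ holds at most $O(\deg_i \cdot \log n) = O(n^\rho \log n)$ relevant estimates at each Bellman--Ford round, and each such estimate can be disseminated to all $n$ recipients in a single Congested Clique round since every pair of vertices is directly connected. Thus each phase takes $O(n^\rho \cdot \log n \cdot \beta)$ rounds whp; summing over $\ell = O(\log\kappa + 1/\rho)$ phases and $O(\log \Lambda)$ scales gives the claimed $O((n^\rho/\rho) \log n \cdot \beta \cdot \log \Lambda)$ total. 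The expected size bound $O(n^{1+1/\kappa} \cdot \log \Lambda)$ transfers directly from the centralized analysis in \sectionref{sec:cent_hop}, since the distributed implementation inserts exactly the same (random) set of hopset edges up to the relabeling of edge weights by $(1+\zeta_{k-1})$-approximations.

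The main obstacle I anticipate is verifying the stretch bookkeeping through the scale recursion: one must check that the hypothesis of \lemmaref{lm:hop_str_distr} is actually met at scale $k$ when one uses $E'^{(k-1)}$ rather than $G$ itself for distance estimation, i.e.\ that \emph{all} pairs at distance up to $\delta_{\ell-1} \le \hR^{(k)}$ encountered during superclustering are indeed $(1+\zeta_{k-1})$-approximated within $2\beta+1$ hops of $E'^{(k-1)}$. This reduces to \lemmaref{lm:aux_hopset} applied to the relevant lower-scale pairs, but one must be careful that the distance ranges used at each scale cover all pairs that may arise along a sampled path in the current phase's Dijkstra-replacement exploration. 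The rest of the argument is a mechanical combination of the already-proved stretch and size lemmas with the Congested Clique routing primitive.
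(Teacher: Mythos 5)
Your proposal matches the paper's proof of Theorem~\ref{thm:distr_hop} essentially step for step: the scale-by-scale induction on $k$ with base case empty hopsets for $k\le\log\beta-1$, the $(2\beta+1)$-round Bellman--Ford over $E'^{(k-1)}$ for both steps with the $(1+\zeta_{k-1})$-scaled thresholds $\delta'_i$, the invocation of \lemmaref{lm:hop_str_distr} at $i=\ell$ followed by folding the additive term into the multiplicative stretch, the geometric accumulation $1+\zeta_k=(1+\zeta_{k-1})(1+\eps')$ with rescaling by $\Theta(\ell\log\Lambda)$, and the congestion accounting via \lemmaref{lm:property} giving $O(\deg_i\log n)$ rounds per Bellman--Ford iteration. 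The one place to be slightly careful is the "summing over $\ell$ phases" step: the claimed $n^\rho/\rho$ factor comes from the geometric sum $\sum_{i\le i_0}n^{2^i/\kappa}=O(n^\rho)$ for stage 1 plus $(i_1-i_0)\cdot n^\rho=O(n^\rho/\rho)$ for stage 2, not from multiplying $n^\rho$ by the total number of phases $\ell=O(\log\kappa+1/\rho)$; but since you quote the correct final bound, this is a presentational nuance rather than a gap.
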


Next we show that whenever $\Lambda={\rm poly(n)}$, $\beta$ can be made independent of $\Lambda$ and of $n$. Later in \sectionref{sec:reduction} we remove this assumption on $\Lambda$.

Fix a parameter $1\le t\le\log\Lambda$. We partition the set of at most $\log \Lambda$ indices $k$ for which we build hopsets $H_k$ into at most $(\log\Lambda)/t+1$ groups, each consisting of $t$ consecutive indices. Consider a single group $\{k_1,k_1+1,\ldots,k_1+t-1\}$ of indices. (Except maybe one group which may contain less than $t$ indices.)  We will compute all hopsets in this group using the hopset $H^{(k_1)}$, i.e.,  when conducting a Bellman-Ford exploration to depth at most $\delta \le 2^{k+1}$ for some index $k$ in the group, we will conduct this exploration in $G\cup H^{(k_1)}$ for $O(\beta \cdot 2^{k+1 - k_1}) = O(\beta\cdot 2^t)$ rounds.
As a result we spend more time when constructing each individual hopset $H_k$ for $k > k_1$ in the group, but the hopsets that we compute provide a better approximation. (Because they rely on $(1+\zeta_{k_1})$-approximate distances that $G \cup H^{(k_1)}$ provides, rather than on $(1+ \zeta_{k-1})$-approximate distances that $H^{(k-1)}$ provides.)

As a result the ultimate stretch of our hopset becomes just $(1+ \eps)^{(\log\Lambda)/t}$. For a sufficiently small $\eps$, this stretch
is at most $1 +O(\eps\cdot\log\Lambda)/t$.
 We now rescale $\eps' =  (\eps\cdot\log \Lambda)/t$.
Our $\beta$ becomes
\begin{equation}
\label{eq:impr_beta}
\beta =  O\left( {{(\log \kappa + 1/\rho) \log \Lambda}  \over {\eps \cdot t}}\right)^{\log \kappa + 1/\rho}~.
\end{equation}
Finally, the number of rounds becomes greater than it was in Theorem \ref{thm:distr_hop}  by a factor of $2^t$, i.e, it is now $O(n^{\rho}/ \rho\cdot \log n\cdot\beta\cdot \log \Lambda\cdot 2^t)$, and we have the following result.

\begin{theorem}
\label{thm:impr_distr_hop}
For any graph $G = (V,E,\omega)$ with $n$ vertices and diameter $\Lambda$, $2\le\kappa\le (\log n)/4$, $1/2>\rho \ge 1/\kappa$, $1\le t\le\log\Lambda$, and $0 < \eps < 1$, a variant of our distributed algorithm for the Congested Clique model computes a $(\beta,\eps)$-hopset $H$ with expected size $O(n^{1+1/\kappa} \cdot \log \Lambda)$
in $O(n^\rho/\rho\cdot\log n\cdot\beta\cdot \log \Lambda\cdot 2^t)$ rounds whp, with $\beta$ given by (\ref{eq:impr_beta}).

Moreover, a single-scale hopset with expected size $O(n^{1+1/\kappa})$ and with the same $\beta$ can also be constructed within this running time.
\end{theorem}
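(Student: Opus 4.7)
The plan is to follow the grouping strategy sketched in the paragraph preceding the theorem: we split the at most $\log \Lambda$ scales $k$ (for which single-scale hopsets $H_k$ are constructed) into $\lceil (\log \Lambda)/t\rceil$ consecutive groups of size at most $t$, say $G_j = \{k_j, k_j+1, \ldots, k_j + t - 1\}$. Within one group, the key change from the previous construction is that when building any single-scale hopset $H_k$ with $k \in G_j$, we do not use the most recent auxiliary hopset $H^{(k-1)}$ in the Bellman-Ford explorations; we use $H^{(k_j)}$, i.e. the union of all hopsets built before the current group. Each Bellman-Ford of depth $\delta \le 2^{k+1}$ is simulated in $G \cup H^{(k_j)}$, which by Lemma \ref{lm:aux_hopset} only needs $O(\beta \cdot 2^{k+1-k_j}) = O(\beta \cdot 2^t)$ hops instead of $O(\beta)$.

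For the stretch, I would define $\zeta'_j$ to be the approximation factor provided by $G \cup H^{(k_j)}$. The argument of Lemma \ref{lm:hop_str_distr} goes through verbatim when we replace ``the previous scale's approximation $1+\zeta_{k-1}$'' with $1+\zeta'_j$; that is, every interconnection edge of $H_k$ for $k \in G_j$ now has weight at most $(1+\zeta'_j) \cdot d_G(r_C, r_{C'})$, and the stretch of $H_k$ becomes $(1+\zeta'_j)(1+\eps)$. Crucially, $\zeta'_j$ does not change within a group, so after processing the whole group $G_j$ the compounded factor becomes $(1+\zeta'_j)(1+\eps)$ rather than $(1+\zeta'_j)(1+\eps)^t$. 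Consequently $1+\zeta'_{j+1} = (1+\zeta'_j)(1+\eps)$, so $1+\zeta'_{\lceil (\log \Lambda)/t \rceil} \le (1+\eps)^{(\log \Lambda)/t}$. Rescaling $\eps \leftarrow \eps t/(c \log \Lambda)$ for a small constant $c$ yields overall stretch $1+\eps$ and the hopbound stated in (\ref{eq:impr_beta}).

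For the running time, on each phase $i$ of the construction of a given $H_k$ we run Bellman-Ford for $O(\beta \cdot 2^t)$ iterations instead of $O(\beta)$. The per-iteration cost in the Congested Clique is dominated by $O(\deg_i \cdot \log n) = O(n^\rho \log n)$ messages per vertex (by Lemma \ref{lm:property} applied in the interconnection step). Summing over the $\ell = O(\log \kappa + 1/\rho)$ phases, over the $O(\log \Lambda)$ scales, and absorbing the geometric series in $1/\rho$, one gets the claimed $O(n^\rho/\rho \cdot \log n \cdot \beta \cdot \log \Lambda \cdot 2^t)$ bound. Hopset size is unaffected by the grouping (we still add at most $O(n^{1+1/\kappa})$ edges per scale in expectation), so $\Expect(|H|) = O(n^{1+1/\kappa} \log \Lambda)$.

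The main subtlety I expect to have to check carefully is that using the older and weaker auxiliary hopset $H^{(k_j)}$ (instead of $H^{(k-1)}$) for scale $k$ still provides $(1+\zeta'_j)$-approximate $(2\beta \cdot 2^{k+1-k_j}+1)$-hop distances in the range $d_G(u,v) \le 2^{k+1}$ needed both in the superclustering step (exploration from the full root set $\Rt$ to depth $\delta'_i$) and in the interconnection step (separate explorations from each $r_C \in \URt$ to depth $\half \delta'_i$). This is essentially a restatement of Lemma \ref{lm:aux_hopset} with $p = k_j$, combined with the fact that every exploration depth $\delta_i$ is bounded by $\hR^{(k)} = 2^{k+1}$; the extra $(1+\zeta'_j)$ factor gets absorbed into the rescaled distance thresholds $\delta'_i = (1+\zeta'_j)\delta_i$ exactly as in the previous subsection. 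Once this is verified, the single-scale claim in the second part of the theorem follows immediately, since each $H_k$ on its own already achieves stretch $(1+\zeta'_j)(1+\eps) \le 1 + \eps$ after the rescaling of $\eps$.
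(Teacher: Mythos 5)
Your proposal matches the paper's proof essentially verbatim: you partition the $\log\Lambda$ scales into groups of size $t$, compute each hopset $H_k$ in a group using the group's lowest-scale hopset $H^{(k_j)}$ via $O(\beta \cdot 2^t)$-hop Bellman--Ford explorations (justified by Lemma~\ref{lm:aux_hopset}), observe that the approximation factor $\zeta'_j$ is fixed within a group so the stretch compounds only once per group rather than once per scale, and rescale $\eps$ by $\Theta(t/\log\Lambda)$ to obtain the hopbound in (\ref{eq:impr_beta}) and the extra $2^t$ factor in the round count. Correct, same approach.
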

We note that our algorithm computes a  single-scale hopset of expected size just  $O(n^{1+1/\kappa})$, because hopsets of previous scales are only used to compute (approximate) distances, whereas the stretch analysis only uses the current scale edges and the graph edges.

When $t=1$ this recaptures \theoremref{thm:distr_hop}. A useful assignment is $t=\rho\log n$, which increases the number of rounds by a factor of $n^{\rho}$. Rescaling $\rho'=2\rho$ we obtain the same size and running time as in \theoremref{thm:distr_hop} with
\begin{equation}
\label{eq:beta_constant}
\beta =  O\left( {{(\log \kappa + 1/\rho) \log \Lambda}  \over {\eps \cdot \rho\log n}}\right)^{\log \kappa + 2/\rho}~.
\end{equation}
The constant factor 2 in the exponent can be made arbitrarily close to 1, at the expense of increasing the constant hidden in the O-notation in the base of the exponent.
See also Theorem \ref{thm:clique_path_rep_hopset} for a result about constructing {\em path-reporting} (to be defined) hopsets in the Congested Clique model.

\subsection{Distributed Implementation in CONGEST Model}
\label{sec:congest}

In this section we consider a scenario when we have an underlying "backbone" $n$-vertex network $G = (V,E)$ of hop-diameter $D$, and a "virtual" weighted  $m$-vertex network $\tG = (\tV,\tE,\tomega)$, $\tV \subseteq V$. Our objective is to compute a hopset $H$ for $\tG$.
Observe that the hopset $H$ needs only to approximate distances in $\tG$, defined by the weight function $\tomega$. The latter may have nothing to do with the distance function $d_G$ of $G$.

Constructing hopsets in this framework turns out to be particularly useful for shortest paths computation and routing in CONGEST model
\cite{N14,HKN15,EN16}.

The algorithm itself is essentially the same as in the distributed Congested Clique model, except that most of the communication is conducted via a BFS tree
$\tau$ rooted at a vertex $\rt$ of the backbone network $G$. (The tree has hop-diameter $D$, and it can be constructed in $O(D)$ distributed time.)

Similarly to the Congested Clique model, we construct hopsets $H_{k_0}, H_{k_0+1}, \ldots, H_{\lceil \log \Lambda \rceil}$ one after another, where $k_0$ is the maximum integer $k$ such that $2^{k+1} \le \beta$. ($H_{k_0}$ is set as $\emset$.)
We consider a fixed phase $i$, and describe how the superclustering and interconnection steps of this phase are implemented.
In the superclustering step we conduct Bellman-Ford algorithm from the set $\Rt = \{r_C \mid C \in \cS_i\}$ to depth $2 \beta + 1$ in $\tG \cup H^{(k-1)}$.
Each  vertex $ v \in \tV$ maintains an estimate $\hd(v)$ initialized as 0 if $v \in \Rt$, and $\infty$ otherwise.
For each Bellman-Ford step we collect all the $m$ distance estimates at the root $\rt$ of $\tau$ via pipelined convergecast, and broadcast all these estimates to the entire graph via pipelined broadcast over $\tau$. This requires $O(m + D)$ time. Since we have $O(\beta)$ such steps, overall the superclustering step of phase $i$, for any $i$,  requires $O((m + D) \cdot \beta)$ time.

Next, we implement the interconnection step. Here we need to conduct a Bellman-Ford to hop-depth  at most $2\beta+1$ and to weighted depth at most $\delta'_i/2 = (1+ \zeta)\delta_i/2$ in  $G' \cup H^{(k-1)}$ from all vertices of $\URt$ separately in parallel. As was argued in the previous sections, each vertex $v$  has to maintain $O(\deg_i \cdot \log m) = O(m^\rho \cdot \log m)$ distance estimates, whp.
(In fact, in expectation the total number of these estimates is $O(m \cdot \deg_i) = O(m^{1+\rho})$. Note that only estimates smaller or equal than $\delta'_i/2$ are broadcasted.)

This broadcast is also implemented  via pipelined convergecast and broadcast, and it requires $O(D + m^{1+\rho} \cdot \log m)$  time, whp. Since we do it for $O(\beta)$ steps, we obtain overall  time $O((D + m^{1+\rho} \cdot \log m) \cdot \beta)$  for implementing a single step, and overall time of $O((D+ m^{1+\rho} \cdot \log m) \cdot \beta/\rho)$, for all steps. (Steps of stage 2 require that much time, while the running time of interconnection steps of stage 1 is dominated by this expression.)

Similarly to the Congested Clique model one can decrease the $\beta$ here too by grouping the $\log\Lambda$ scales into groups of size $t$. (Note that we can compute all hopsets within the same group in parallel. This multiplies the number of messages broadcasted in each step by $t$, but the number of steps becomes smaller by the same factor.)
We summarize the result in the following theorem.

\begin{theorem}
\label{thm:congest}
For any graph $G = (V,E,\omega)$ with diameter $\Lambda$ and hop-diameter $D$, any $m$-vertex weighted graph $\tG= (\tV,\tE,\tomega)$ embedded in $G$, any $2\le\kappa\le (\log m)/4$, $1/2 > \rho \ge 1/\kappa$, $1\le t\le\log\Lambda$, $0 < \eps < 1$, our distributed algorithm in the CONGEST model computes a $(\beta,\eps)$-hopset $H$ for $\tG$ with expected size $O(m^{1+1/\kappa} \cdot \log \Lambda)$
in $O((D+ m^{1+\rho} \cdot \log m\cdot t)\beta/\rho\cdot \log \Lambda\cdot 2^t/t)$ rounds whp, with $\beta$ given by (\ref{eq:impr_beta}).
\end{theorem}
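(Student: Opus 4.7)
The plan is to follow the blueprint of Theorem~\ref{thm:impr_distr_hop} (the Congested Clique version) essentially verbatim at the algorithmic level, replacing every ``send to everyone in one round'' primitive by a pipelined convergecast/broadcast on a BFS tree $\tau$ of $G$ rooted at an arbitrary vertex $\rt$. The tree $\tau$ is built once in $O(D)$ rounds. The hopset itself is defined on $\tV$, the edge weights are given by $\tomega$, and the Bellman--Ford explorations are performed in $\tG\cup H^{(k-1)}$; the backbone $G$ is used only as a communication medium via $\tau$. Hopsets $H_{k_0},H_{k_0+1},\ldots,H_{\lceil\log\Lambda\rceil}$ are built one scale at a time, with $k_0=\lceil\log\beta\rceil$, exactly as in the Congested Clique construction.

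Next I would analyze a single phase $i$ of a single scale $k$. The superclustering step needs one Bellman--Ford exploration from $\Rt=\{r_C\mid C\in\cS_i\}$ to hop-depth $2\beta+1$. In each Bellman--Ford round, the $m$ estimates held by the vertices of $\tV$ are convergecast to $\rt$ along $\tau$ and then broadcast back, at a cost of $O(m+D)$ rounds with standard pipelining. The interconnection step runs Bellman--Ford in parallel from every center in $\URt$ to hop-depth $2\beta+1$ and weighted depth $\delta'_i/2$; by Lemma~\ref{lm:property}, whp each vertex needs to forward only $O(\deg_i\cdot\log m)=O(m^\rho\log m)$ estimates per round (only estimates at most $\delta'_i/2$ are propagated), so a single Bellman--Ford round costs $O(D+m^{1+\rho}\log m)$ rounds on $\tau$. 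Multiplying by $O(\beta)$ Bellman--Ford rounds and summing over all phases of a single scale (the degree sequence makes the sum telescope up to the $O(1/\rho)$ factor coming from stage~2) gives a per-scale cost of $O((D+m^{1+\rho}\log m)\cdot\beta/\rho)$.

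To reduce $\beta$ from the polylogarithmic expression of Theorem~\ref{thm:distr_hop} down to the expression in~\eqref{eq:impr_beta}, I would apply the same grouping trick used in Theorem~\ref{thm:impr_distr_hop}: partition the $\log\Lambda$ scales into groups of $t$ consecutive scales, and inside each group compute all $t$ hopsets in parallel, always using the auxiliary hopset $H^{(k_1)}$ of the group's lowest scale $k_1$ for distance estimation. This parallel computation multiplies the number of pipelined messages per Bellman--Ford round by a factor of $t$ (hence the $t$ in $m^{1+\rho}\log m\cdot t$), while the Bellman--Ford depth grows by a factor of $O(2^t)$ because we now need to traverse paths in $\tG\cup H^{(k_1)}$ of up to $2^{k+1-k_1}\le 2^t$ times as many hops (Lemma~\ref{lm:aux_hopset}). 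The stretch analysis is identical to the one already carried out for Theorem~\ref{thm:impr_distr_hop}: each group contributes a $(1+\zeta_{k_1})(1+\eps)$ factor, the final composition is $(1+\eps')^{(\log\Lambda)/t}$, and rescaling $\eps$ yields~\eqref{eq:impr_beta}. Summing over the $(\log\Lambda)/t$ groups yields the claimed running time $O\bigl((D+m^{1+\rho}\log m\cdot t)\cdot\beta/\rho\cdot\log\Lambda\cdot 2^t/t\bigr)$, while the expected size $O(m^{1+1/\kappa}\log\Lambda)$ follows just as in the centralized analysis, since edge-insertion is governed by the same sampling and Lemma~\ref{lm:explorations}.

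The main obstacle is not the stretch or the size (these are essentially inherited from the centralized and Congested Clique analyses), but rather the congestion bookkeeping in the interconnection step: one must justify that whp no vertex has to forward more than $O(m^\rho\log m\cdot t)$ live estimates in any Bellman--Ford round, uniformly across all scales in a group and all phases. This is where Lemma~\ref{lm:property}, together with a union bound over the $O(\log\Lambda)$ scales and $O(\log\kappa+1/\rho)$ phases, does the work; the only subtlety is that estimates of superclustering explorations in higher scales within the same group may temporarily inflate the load, which is why we cap the in-flight estimates by thresholding at $\delta'_i/2$ for interconnection and by the fact that superclustering uses a single source set (so the forwarding load is $O(1)$ per round there).
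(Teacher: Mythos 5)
Your proposal matches the paper's proof essentially line for line: building a BFS tree $\tau$ in $O(D)$ rounds, simulating each Bellman--Ford iteration of the Congested Clique construction via pipelined convergecast/broadcast (costing $O(m+D)$ for the single-source superclustering exploration and $O(D+m^{1+\rho}\log m)$ for the multi-source interconnection explorations via Lemma~\ref{lm:property}), and then applying the same scale-grouping trick as in Theorem~\ref{thm:impr_distr_hop} to get the $\beta$ of~\eqref{eq:impr_beta}, with the $t$-fold message blowup and $2^t$-fold depth increase combining to the stated $O((D+m^{1+\rho}\log m\cdot t)\beta/\rho\cdot\log\Lambda\cdot 2^t/t)$ bound. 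The approach, the key lemmas invoked, and the accounting are the same as the paper's; this is a correct reconstruction.
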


\subsection{Path-Reporting Hopsets in Distributed  Models }
\label{sec:aware}

Next we introduce a property of distributed hopset construction which we call {\em awareness}, and argue that our distributed construction satisfies this property.
This property is useful for certain distributed applications of hopsets, such as in constructions of routing tables and sketches, cf. \cite{EN16}.

From this point on this section focuses on the distributed CONGEST model, cf. Section \ref{sec:congest}. At the end of this section we also indicate how these results apply to the Congested Clique model.

The  awareness property stipulates that for every hopset edge $(u,v) \in H$, there exists a path $\pi(u,v)$ in $G$ between $u$ and $v$ of weight $\omega_G(\pi(u,v)) = \omega_H(u,v)$,
 and moreover, every vertex $x$ on this path is {\em aware} that it lies on $\pi(u,v)$ and knows its distances $d_{\pi(u,v)}(u,x)$ and $d_{\pi(u,v)}(x,v)$ to $u$ and to $v$ on this path, respectively,
and its two $G$-neighbors $u'$ and $v'$ that lie on $\pi(u,v)$ along with the orientation. (That is, $x$ knows that $u'$ (resp., $v'$) is its $\pi(u,v)$-neighbor that leads to $u$ (resp., $v$).)
We also call a hopset with this property a {\em path-reporting hopset}, because it can be used to report approximate shortest paths.

Next we adapt our algorithm for constructing hopsets in the CONGEST model so that the awareness property will hold. Assume inductively
that the property holds for hopsets $\bigcup_{j \le k-1} H_j$, and we will now show how to make it hold for $H_k$. (The induction basis holds vacuously, because for the maximum value $k_0$ such that $\hR^{(k_0)} = 2^{k_0+1} \le \beta$, the hopset $H_{k_0} = \emptyset$.)


The first modification to the algorithm that we introduce is that Bellman-Ford executions will not only propagate distance estimates, but also the actual paths that implement these estimates.
First, consider the variant of our algorithm that provides (relatively) large $\beta$, i.e., the $\beta$ given by (\ref{eq:beta}).
Since all our Bellman-Ford invocations in this variant are $(2\beta+1)$-limited, these paths are of length up to $2\beta+1$, and the algorithm incurs a slowdown by only  a factor of $O(\beta)$ as a result of this modification.
Specifically, the Bellman-Ford invocations require now $O((D+ \beta m^{1+\rho}\cdot \log m) \cdot \beta/\rho)$ overall  time (whp), rather than $O((D+ m^{1+\rho}\cdot \log m) \cdot \beta/\rho)$ time, which we had in Section \ref{sec:congest}.

With this modification, when a vertex $r_{C'}$ decides to add an edge $(r_C,r_{C'})$ to the hopset, it knows the entire path $\pi_{k-1}(r_C,r_{C'})$ in $E'^{(k-1)}$ which implements this edge.
(Note that $|\pi_{k-1}(r_C,r_{C'})| \le 2\beta+1$.) After the construction of the hopset $H_k$ is over,  all vertices $v \in \tV$ broadcast all hopset edges (of $H_k$) along with their respective paths to the entire graph. We will refer to this broadcast as the {\em paths' broadcast}.
Since in expectation $|H_k| = O(m^{1+1/\kappa})$, it follows that this broadcast requires  expected $O(m^{1+1/\kappa} \cdot \beta + D)$ time.

For an edge $e = (r_C,r_{C'}) \in H_k$, every vertex $v \in \tV(\pi_{k-1}(r_C,r_{C'}))$ hears this broadcast of $e$ and of $\pi_{k-1}(r_C,r_{C'})$, and writes down to himself that it (i.e., the vertex $v$) belongs to
$\tV(\pi_{k-1}(r_C,r_{C'}))$, calculates its distances to the endpoints $r_C$ and $r_{C'}$, and computes its neighbors $u$ and $u'$ on $\pi_{k-1}(r_C,r_{C'})$ in the direction of $r_C$ and $r_{C'}$, respectively.

Now vertices $x$ involved in a path $\hpi(v,u) \subseteq \tE$ that implements an edge $(v,u)$ of $\pi_{k-1}(r_C,r_{C'})$ need also to write down that they belong to the path $\hpi(r_C,r_{C'})$ which implements the $H_k$-edge $e = (r_C,r_{C'})$ via edges of $\tE$.
%
%
(This is in contrast to $\pi_{k-1}(r_C,r_{C'})$, which implements the same $H_k$-edge $(r_C,r_{C'})$ via edges of $E'^{(k-1)}$.)
Since $x$ hears of the edge $(v,u)\in \hpi(r_C,r_{C'})$, and since $v$ stores the distances to $v$ and $u$, it can infer from $\hpi(r_C,r_{C'})$ the distances to $r_C,r_{C'}$ (e.g., if $v$ is the endpoint closer to $r_C$, then the distance that $x$ stores to $r_C$ is $d_{\pi(r_C,r_{C'})}(r_C,v)+d_{\pi(v,u)}(v,x)$). The appropriate neighbors of $x$ are the same as those it stores for the edge $(v,u)$.


To summarize, this modification of the algorithm ensures that our hopset constructing algorithm satisfies the awareness property. It does so by having its running time increased to
$O((m^{1+1/\rho} \cdot \log m \cdot  \beta + D) \beta/\rho \log\Lambda)$ rounds, whp. (The time to broadcast $O(m^{1+1/\kappa} \beta)$ messages is dominated by $m^{1+\rho} \cdot \log m \cdot \beta$ here.) For the variant in which we group $t$ scales together, each path may consist of $O(\beta\cdot 2^t)$ hops, so we pay this  factor in the number of messages sent.

\begin{theorem}
\label{thm:hop_aware_small_beta}
For any graph $G = (V,E,\omega)$ with diameter $\Lambda$ and hop-diameter $D$, $2\le\kappa\le (\log n)/4$, $1/2 > \rho \ge 1/\kappa$, $1\le t\le\log\Lambda$, $0 < \eps \le 1$, and any $m$-vertex weighted graph $\tG= (\tV,\tE,\tomega)$ embedded in $G$, our distributed algorithm in the CONGEST model computes a {\em path-reporting} $(\beta,\eps)$-hopset $H$ for $\tG$ with expected size $O(m^{1+1/\kappa} \cdot \log \Lambda)$
in $O((D+ m^{1+\rho} \cdot \log m\cdot t\cdot\beta\cdot 2^t)\beta/\rho\cdot \log \Lambda\cdot 2^t/t)$ rounds whp, with $\beta$ given by (\ref{eq:impr_beta}).

Moreover, it can also compute a single-scale hopset with expected $O(m^{1+1/\kappa})$ edges, and the same $\beta$ and within the same running time.
\end{theorem}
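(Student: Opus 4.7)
The plan is to extend the distributed CONGEST construction of \theoremref{thm:congest} (with the $t$-scale grouping that yields the hopbound \eqref{eq:impr_beta}) by augmenting each Bellman--Ford message with a succinct path description, and then proving the awareness property by induction on the scale-group index. The base case is the first group whose lower endpoint $k_1$ satisfies $2^{k_1+1}\le\beta$; there $H^{(k_1)}=\emptyset$, and every hopset edge produced inside the group is implemented by a path lying entirely in $\tE$, so vertices on it can be notified directly by the path-broadcast described below.

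For the inductive step I would modify the Bellman--Ford sub-routine so that together with the distance estimate $\hd(v)$, every vertex stores and forwards the actual implementing path from the current source through $\tE\cup H^{(k_1)}$. Since each Bellman--Ford invocation inside the group runs for $O(\beta\cdot 2^t)$ rounds, each such path contains at most $O(\beta\cdot 2^t)$ edges of $\tE\cup H^{(k_1)}$; the per-message size grows by exactly this factor, which accounts for the $\beta\cdot 2^t$ factor appearing inside the parenthesis of the claimed running time relative to \theoremref{thm:congest}. After the $t$ scales of the current group have all been built in parallel, every vertex that picked up a new hopset edge initiates a pipelined convergecast--broadcast on the BFS tree $\tau$, sending both the edge $(r_C,r_{C'})$ and its implementing $(\tE\cup H^{(k_1)})$-path. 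The expected number of edge-identifiers broadcast per group is $O(m^{1+1/\kappa}\cdot t\cdot \beta\cdot 2^t)$, which is absorbed by the bound of the theorem.

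To establish awareness, consider any edge $e=(r_C,r_{C'})\in H_k$ with implementing path $\pi_{k_1}(r_C,r_{C'})$. Each edge $f\in\pi_{k_1}(r_C,r_{C'})$ is either a $\tE$-edge, in which case its two endpoints register membership in the $\tE$-implementation of $e$ directly upon hearing the broadcast, together with the accumulated distances from $r_C$ and $r_{C'}$ and their $\tE$-neighbors along $\pi_{k_1}$; or $f$ is an $H^{(k_1)}$-edge, and by the inductive hypothesis every internal vertex $x$ of the $\tE$-implementation of $f$ already stores its awareness data for $f$ (distances to $f$'s endpoints, the two $\tE$-neighbors along that sub-implementation, and the orientation). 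Upon overhearing the broadcast of $e$, such a vertex $x$ looks up its local data for $f$, adds the prefix-length of $\pi_{k_1}(r_C,r_{C'})$ up to $f$'s near endpoint (computable from the broadcasted sequence of edges and their weights) to obtain the distances to $r_C$ and $r_{C'}$, and inherits the two $\tE$-neighbors and the orientation from those already stored for $f$. Concatenating these sub-registrations yields a single $\tE$-path of weight $\omega_H(e)$ along which every internal vertex is aware in the required sense. The ``moreover'' clause for single-scale hopsets follows by running the construction only for the group containing the desired scale.

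The main technical obstacle I anticipate is controlling the message size after unfolding: the $\tE$-implementation of a top-scale hopset edge is a concatenation of $\tE$-implementations of lower-scale hopset edges, and can be far longer than $O(\beta\cdot 2^t)$. Crucially, however, the broadcast itself need only carry the $(\tE\cup H^{(k_1)})$-path of length $O(\beta\cdot 2^t)$, because internal vertices of any $H^{(k_1)}$-edge on that path already hold, by induction, the data needed to self-register. A secondary point is that the $t$ scales within a group are built in parallel on top of the same $H^{(k_1)}$, which forces the $t$ parallel path-broadcasts to be pipelined; this does not change the asymptotic budget, since the factor $t$ is already present in \theoremref{thm:congest} and the additional $\beta\cdot 2^t$ factor in message size is exactly what appears in the final round complexity.
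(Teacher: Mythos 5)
Your proposal matches the paper's argument essentially step for step: the same induction on scale groups with a vacuous base case ($H^{(k_1)}=\emptyset$ once $2^{k_1+1}\le\beta$), the same modification of Bellman--Ford to carry implementing paths of length $O(\beta\cdot 2^t)$ at a per-message cost of a $\beta\cdot 2^t$ factor, the same post-construction paths' broadcast over the BFS tree, and the same inductive self-registration for vertices lying inside the $\tE$-implementations of lower-scale hopset edges on the broadcast path. The key observation you highlight — that the broadcast only needs to carry the $(\tE\cup H^{(k_1)})$-level path while lower-level unfolding is handled locally by induction — is exactly the point the paper relies on to keep the message size bounded, so the proof is correct and not a genuinely different route.
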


A similar adaptation enables us to construct path-reporting hopsets in the Congested Clique model, again, by transmitting entire paths rather than distance estimates. As paths are of length at most $\beta\cdot 2^t$, this incurs such a factor in the number of rounds. Specifically, we obtain the following result.

\begin{theorem}
\label{thm:clique_path_rep_hopset}
For any graph $G = (V,E,\omega)$ with $n$ vertices and diameter $\Lambda$, $2\le\kappa\le (\log n)/4$, $1/2 > \rho \ge 1/\kappa$, $1\le t\le\log\Lambda$, and $0 < \eps \le 1$, a variant of our distributed algorithm for the Congested Clique model computes a {\em path-reporting} $(\beta,\eps)$-hopset $H$ with expected size $O(n^{1+1/\kappa} \cdot \log \Lambda)$
in $O(n^\rho/\rho\cdot\log n\cdot\beta^2\cdot \log \Lambda\cdot 2^{2t})$ rounds, with $\beta$ given by (\ref{eq:impr_beta}).

Moreover, a single-scale hopset with expected size $O(n^{1+1/\kappa})$ and with the same $\beta$ can also be constructed within this running time.
\end{theorem}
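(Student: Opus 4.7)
The plan is to take the Congested Clique algorithm underlying \theoremref{thm:impr_distr_hop} and piggyback path information onto every Bellman--Ford relaxation, maintaining the awareness property as an invariant across all scales. I proceed by induction on the scale index $k$, the base case being $H_{k_0}=\emptyset$. Assume that $\bigcup_{j\le k-1}H_j$ is already path-reporting in the sense required. When building $H_k$, every Bellman--Ford broadcast message a vertex sends will carry, in addition to its current distance estimate to $\Rt$ (in the superclustering step) or to some $r_C\in\URt$ (in the interconnection step), the actual sequence of edges in $E\cup H^{(k-1)}$ realizing that estimate. By \lemmaref{lm:aux_hopset} together with the $t$-scale grouping used in \theoremref{thm:impr_distr_hop}, such a sequence has at most $O(\beta\cdot 2^t)$ hops, so every original round now costs $O(\beta\cdot 2^t)$ rounds. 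This multiplies the round complexity of \theoremref{thm:impr_distr_hop} by $\beta\cdot 2^t$, producing the claimed $O(n^\rho/\rho\cdot\log n\cdot\beta^2\cdot\log\Lambda\cdot 2^{2t})$.

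Once $H_k$ has been fixed, the awareness property is essentially free in the Congested Clique, since every vertex has seen every broadcast and therefore knows every inserted edge $(r_C,r_{C'})\in H_k$ together with its implementing path $\pi_{k-1}(r_C,r_{C'})$ in $E\cup H^{(k-1)}$ and the prefix weights along it. A vertex $x$ that appears as an internal vertex of $\pi_{k-1}(r_C,r_{C'})$ immediately records its $\pi_{k-1}$-distances to $r_C,r_{C'}$ and its two $\pi_{k-1}$-neighbors with the correct orientation. For each edge $(u,v)$ of $\pi_{k-1}(r_C,r_{C'})$ that is itself a hopset edge from some $H_j$, $j<k$, the inductive hypothesis guarantees that every vertex $y$ on the $E$-path $\hpi(u,v)$ already stores its distances to $u,v$ along $\hpi(u,v)$ and its $\hpi(u,v)$-neighbors; composing these with the globally known prefix data of $\pi_{k-1}$, $y$ locally computes and stores its distances to $r_C,r_{C'}$ along the concatenated $E$-path $\hpi(r_C,r_{C'})$ and its two $\hpi(r_C,r_{C'})$-neighbors (inheriting the $\hpi(u,v)$-neighbors at internal positions, and switching to the next unfolded segment at the vertices $u$ or $v$). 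All of this is local computation and adds no communication rounds.

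The main obstacle, and the source of the extra $\beta\cdot 2^t$ factor over \theoremref{thm:impr_distr_hop}, is controlling message size when we insist on transmitting entire implementing paths rather than scalar distance estimates. The hop bound $O(\beta\cdot 2^t)$ follows from \lemmaref{lm:aux_hopset} applied across the $t$ bundled scales; without the grouping the factor would be just $O(\beta)$. The correctness of the recursive unfolding then follows immediately from the inductive invariant together with the fact that the weight of each $\pi_{k-1}$-edge equals the length of its implementing $E$-path, so prefix sums composed across the two levels are consistent. The ``Moreover'' statement on single-scale hopsets is obtained by focusing on one scale $k$: the expected-size bound $O(n^{1+1/\kappa})$ comes from the interconnection-step analysis of \sectionref{sec:cent_hop}, while the round bound is unchanged because constructing $H_k$ only uses $E\cup H^{(k-1)}$ together with the path-broadcast mechanism just described.
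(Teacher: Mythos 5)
Your proposal is correct and follows the same route as the paper: start from the algorithm of Theorem~\ref{thm:impr_distr_hop}, replace scalar distance estimates with the $O(\beta\cdot 2^t)$-hop implementing paths (the hop bound coming from Lemma~\ref{lm:aux_hopset} applied across the $t$ bundled scales), and charge the resulting factor of $\beta\cdot 2^t$ to the round complexity, yielding $O(n^\rho/\rho\cdot\log n\cdot\beta^2\cdot\log\Lambda\cdot 2^{2t})$. The paper dispatches this theorem in one sentence by pointing to the CONGEST-model path-broadcast machinery of Section~\ref{sec:aware}; you have essentially unfolded that adaptation explicitly, including the inductive bookkeeping of the awareness invariant, which is faithful to what the paper intends.
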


A particularly useful setting of the parameter $t$ is $t = \log (m^\rho)$ in the CONGEST model, and $t = \log (n^{\rho/2})$ in the Congested Clique model.
Then in Theorem \ref{thm:hop_aware_small_beta} we obtain running time $O((D + m^{1+2\rho} \cdot \log^2 m \cdot \beta \rho) \beta/\rho^2 \cdot m^\rho)$,
and
\begin{equation}
\label{eq:beta_impr}
\beta ~=~ O\left({{\log \kappa+1/\rho} \over {\eps \cdot \rho}}\right)^{\log \kappa + 1/\rho}~,
\end{equation}
 assuming $\Lambda \le \poly(m)$.

In Theorem \ref{thm:clique_path_rep_hopset} we obtain running time $O(n^{2\rho}/\rho \cdot \log^2 n \cdot \beta^2)$ with the same value of $\beta$ as in (\ref{eq:beta_impr}), assuming $\Lambda \le \poly(n)$. One can also rescale $\rho' = 2\rho$, and get running time $O(n^\rho/\rho \cdot \log^2 n \cdot \beta^2)$,
with
$$\beta = O\left({{\log \kappa + 1/\rho} \over {\eps \cdot \rho}}\right)^{\log \kappa + 2/\rho}~.$$
We remark that the constant factor 2 in the exponent can be made arbitrarily close to 1, at the expense of increasing a constant factor hidden in the $O$-notation in the base of the exponent. Also the assumption that $\Lambda \le \poly(n)$ will be removed in \sectionref{sec:reduction}.

\subsection{Streaming Model}
\label{sec:stream}

Our implementation of the hopset-constructing algorithm in the streaming model follows closely our implementation from Section \ref{sec:hopset_clique} of the algorithm in the Congested Clique model.

Here too we construct the hopsets $H_{k_0},H_{k_0+1},\ldots,H_{ \lceil \log \Lambda \rceil}$, one after another.
The hopset $H_{k_0} = \emptyset$.
Next we describe how to construct a hopset $H_k$ (for distances in the range $(2^k,2^{k+1}]$, $2^{k+1} = \hR^{(k)}$), assuming that the hopsets $H_j$, $j \le k-1$, were already constructed.

We describe the streaming implementation in two regimes. In the first regime the space will be $\tO(n^{1+\rho})$, but the number of passes will be $\polylog(n) \cdot \beta$.
 (In fact, in this section it will be $\polylog(\Lambda) \cdot \beta$, but in \sectionref{sec:reduction} we will replace the dependence on $\Lambda$ by a similar dependence on $n$.) In the second regime we use
$O(n^{1+1/\kappa} \cdot \log \Lambda)$ space, but the number of passes is much larger. (Specifically, it is $O(n^\rho \cdot \beta \cdot \log n \cdot \log \Lambda)$.)

To conduct the  superclustering step of a phase $i$ we use every pass over the stream $E$ of edges of $G$ to update distance estimates $\hd(v)$ of the distances
$d_G(\Rt,v)$. After each pass we also read again the hopset $H^{(k-1)}$, and adjust distance estimates according to hopset edges as well.
As a result of $2 \beta + 1$ such passes  we implement a $(2\beta+1)$-limited Bellman-Ford algorithm in $E'^{(k-1)} = E \cup H^{(k-1)}$, originated at $\Rt$.
As was argued above, this provides us with a $(1 + \zeta_{k-1})$-approximate distances $d_G(\Rt,v)$, for all $v$ such that $d_G(\Rt,v) \le \delta_i$.
This completes the description of the superclustering step. The space required for it is the space needed to keep the hopset, i.e., expected $O(n^{1+1/\kappa} \cdot \log \Lambda)$, and $O(n)$ space for distance estimates.

For the interconnection step we conduct a similar $(2\beta+1)$-limited Bellman-Ford exploration, but to depth at most $\delta'_i/2 = {{\delta_i(1 + \zeta_{k-1})} \over 2}$, and from all cluster centers $\URt$ separately in parallel. This necessitates each vertex to maintain in expectation $O(\deg_i ) = O(n^\rho)$ estimates, so we will use $O(n^{1+\rho}/\rho)$ space to guarantee (with constant probability) that none of the $1/\rho$ phases of stage 2 overflows (note that in stage 1 $\deg_i$ is much smaller than $n^{\rho}$). This completes the description of the streaming algorithm in the first regime. The space required is $O(n^{1+\rho}/\rho+ n^{1+1/\kappa} \cdot \log \Lambda)$, and the number of passes is  $O(\beta \cdot \log \Lambda)$.

For applications in which we use hopsets to provide approximate paths rather than distances, we can store actual paths in $E'^{(k-1)}$ for every edge of $H_k$ that we create.
A hopset appended with this information will be referred to as a {\em path-reporting hopset}.
Since every edge of $H_k$ is implemented using at most $2 \beta+1$ edges of $E'^{(k-1)}$, we can construct the path-reporting variant of the above hopset
using space $O(n^{1+\rho}/\rho + n^{1+1/\kappa} \cdot \beta \cdot \log \Lambda)$, and the same number of passes as above.

Next, consider the regime when we allow space  $O(n^{1+1/\kappa} \cdot \log \Lambda)$.
The superclustering steps can be implemented  in the same way as was described above. The interconnection steps however require certain adaptation.
Specifically, partition the interconnection step of phase $i$ to $c \cdot \deg_i \cdot \log n$ subphases, for a sufficiently large constant $c$.
On each subphase each exploration source, which was not sampled on previous subphases, samples itself i.a.r. with probability $1/\deg_i$. Then the sampled exploration sources conduct the $\delta'_i/2$-distance-bounded $(2\beta+1)$-limited Bellman-Ford explorations.
Recall \lemmaref{lm:property}, that asserts whp every vertex is visited by at most $O(\deg_i\cdot\log n)$ explorations in each phase. Since in each subphase every exploration happens with probability $1/\deg_i$, Chernoff bound implies that whp no vertex is visited by more than $O(\log n)$ explorations. We conclude that it suffices to use $O(n\cdot \log n)$ memory for all phases to keep distance estimates.  Here we take union-bound on all the $\log \Lambda$ different scales too, assuming that $\log \Lambda  \le \poly(n)$.
After $c \cdot \deg_i \cdot  \log n$ subphases, whp, each exploration source is sampled on at least one of the subphases, and so the algorithm performs all the required explorations.

So we have space $O(n^{1+1/\kappa} \cdot \log \Lambda)$ and number of passes is $O(n^\rho \cdot \log n \cdot \beta\cdot\log\Lambda)$.
The size, stretch and hopbound analysis of the resulting hopset is identical to the one we had in the distributed Congested Clique model.


We can reduce the value of $\beta$ by employing the idea used for the proof of \theoremref{thm:impr_distr_hop}.
(See the discussion right before \theoremref{thm:impr_distr_hop}.)
Specifically, fix some integer $t\ge 1$. The set of at most $\log \Lambda$ indices $k$ for which we build hopsets $H_k$ is partitioned into at most $(\log \Lambda)/t+1$ groups, each consisting of $t$ consecutive indices  (except maybe one of them which may consist of less than $t$ indices). In a single group $\{k_1,k_1+1,\ldots,k_1+t - 1\}$ of indices, each hopset $H_k$ from the group is computed using hopset $H^{(k_1)}$, rather than $H^{(k-1)}$.

As a result, we now conduct Bellman-Ford explorations to depth $O(\beta \cdot 2^t)$, rather than just $O(\beta)$, hence the number of passes increases by a factor of $2^t$. Hopsets in the same group can be computed "in parallel", so the space increases by a factor of $t$. (Note that in the path-reporting setting, we need to store paths of length $\beta\cdot 2^t$, which increases the space needed to store the hopset by this factor.) The hopbound $\beta$ improves to the value in \eqref{eq:impr_beta}.
We summarize this discussion in the next theorem.

\begin{theorem}
\label{thm:stream_hop-alt}
For any $n$-vertex graph $G = (V,E,\omega)$ of diameter $\Lambda$, any $2\le\kappa\le(\log n)/4$, $1/2 > \rho \ge 1/\kappa$,  $0 < \eps \le 1$, and any $1\le t\le\log\Lambda$,
our streaming algorithm computes an $(\beta,\eps)$-hopset with $\beta$ given by (\ref{eq:impr_beta}) and with expected size $O(n^{1+1/\kappa}\log \Lambda)$.
The resource usage is either
\begin{enumerate}
\item
space $O(t\cdot n^{1+\rho} /\rho + n^{1+1/\kappa} \cdot \log \Lambda))$
(resp., space $O(t\cdot n^{1+\rho}/\rho + n^{1+1/\kappa} \cdot \beta\cdot 2^t\cdot \log \Lambda)$ for path-reporting), in expectation,
and $O(\beta \log \Lambda\cdot 2^t)$ passes, or
\item
\label{item:small_space}
space $O(n^{1+1/\kappa} \cdot \log \Lambda)$
(resp.,  $O(n^{1+1/\kappa} \cdot  \beta \cdot 2^t\cdot \log \Lambda)$ for path-reporting) in expectation,
and $O(n^\rho \cdot \beta \cdot \log n \cdot \log \Lambda\cdot 2^t)$ passes.\footnote{Here we assume $\Lambda \le 2^{\poly(n)}$.}
\end{enumerate}
Moreover, a single-scale hopset $H_k$ with expected size $O(n^{1+1/\kappa})$ can be computed within the space and pass complexities stated above.
\end{theorem}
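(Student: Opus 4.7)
The plan is to simulate the Congested-Clique algorithm of \sectionref{sec:hopset_clique} over the edge stream. Hopsets $H_{k_0}, H_{k_0+1}, \ldots, H_{\lceil \log \Lambda \rceil}$ are constructed sequentially, starting with $H_{k_0} = \emptyset$; when constructing $H_k$ the already computed previous-scale hopsets are kept in memory and used to evaluate bounded-hop distances in ${E'}^{(k-1)} = E \cup H^{(k-1)}$. A single Bellman-Ford relaxation step is realized by one pass over $E$, followed by an in-memory scan of the stored hopset edges; since all explorations (superclustering and interconnection alike) are $(2\beta+1)$-limited, a phase costs $O(\beta)$ passes, and summing over all phases and all $\lceil\log\Lambda\rceil$ scales gives the pass bounds in both regimes. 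The stretch, hopbound, and expected size $O(n^{1+1/\kappa}\log\Lambda)$ are inherited verbatim from \theoremref{thm:distr_hop}, since the simulated computation is identical to the Congested-Clique one.

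The superclustering step of phase $i$ runs a single Bellman-Ford from $\Rt=\{r_C\mid C\in\cS_i\}$, so each vertex maintains only one estimate, using $O(n)$ memory. The interconnection step must simulate $|\URt|$ separate explorations in parallel. In the first regime I do this straightforwardly: by \lemmaref{lm:property} whp every vertex is visited by at most $O(\deg_i\log n) = O(n^\rho\log n)$ explorations in a single phase, so allowing each vertex to cache all incoming estimates costs $O(n^{1+\rho}/\rho)$ space summed across the $O(1/\rho)$ dominant phases, plus the $O(n^{1+1/\kappa}\log\Lambda)$ needed to hold the accumulated hopset.

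The main obstacle is the tight regime-2 budget $O(n^{1+1/\kappa}\log\Lambda)$, which forbids maintaining $\Theta(n^\rho)$ estimates per vertex at once. I resolve this by splitting the interconnection step of phase $i$ into $\Theta(\deg_i\log n)$ subphases; in each subphase every still-unsampled center of $\URt$ independently samples itself with probability $1/\deg_i$, and only the freshly sampled centers run a Bellman-Ford in that subphase. A Chernoff-plus-union-bound argument shows that whp every center has been sampled by the end, and that within each individual subphase only $O(\log n)$ explorations visit any given vertex, keeping the per-pass space at $O(n\log n)$ while multiplying the pass count by $\deg_i\log n = O(n^\rho\log n)$. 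The outer union bound over the $\log\Lambda$ scales uses $\log\Lambda \le \poly(n)$, a standard assumption in this regime.

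Finally, to drive $\beta$ down to the value in \eqref{eq:impr_beta} I will follow the grouping strategy of \theoremref{thm:impr_distr_hop}: partition the $\lceil\log\Lambda\rceil$ scales into blocks of $t$ consecutive indices and, inside each block, compute every hopset using the fixed backing hopset $H^{(k_1)}$ of the block's smallest index. Each Bellman-Ford depth blows up by $2^t$, hence so does the pass count, while the $t$ hopsets of a block can be built in parallel, explaining the factor $t$ in the regime-1 space bound. The path-reporting variant is obtained by piggybacking, on every propagated distance estimate, the full realizing path in ${E'}^{(k_1)}$; this costs an extra $O(\beta\cdot 2^t)$ factor in the memory used to store each hopset edge but leaves the pass count unchanged. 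The single-scale claim is immediate, since only the current-scale hopset and the original edges enter the stretch analysis of a single scale.
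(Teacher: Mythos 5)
Your proposal follows the paper's own proof essentially line by line: sequential per-scale construction that realizes each Bellman--Ford relaxation by one pass over the stream followed by an in-memory scan of stored lower-scale hopset edges; the regime-2 subphase-resampling trick that keeps per-vertex concurrent estimates at $O(\log n)$ at the cost of a $\deg_i\log n$ blow-up in passes; the grouping of $t$ consecutive scales over the fixed backing hopset $H^{(k_1)}$ to bring $\beta$ down to \eqref{eq:impr_beta}; and the path-annotation of edges for the path-reporting variant. The single-scale remark and the inheritance of size/stretch/hopbound from the Congested-Clique analysis are also as in the paper.

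One accounting detail in your regime-1 space bound is off. You invoke \lemmaref{lm:property}, which gives the whp bound of $O(\deg_i\log n)$ estimates per vertex, yet conclude $O(n^{1+\rho}/\rho)$ space. Since $\log n$ typically exceeds $1/\rho$ in the relevant parameter range (recall the standing assumption $\rho \ge \Omega(\log\log n/\log n)$), that route gives $O(n^{1+\rho}\log n)$ per phase, which is weaker; moreover, space over sequential phases should be a maximum, not a sum. The paper instead uses the \emph{in-expectation} bound of \lemmaref{lm:explorations} ($O(\deg_i)$ estimates per vertex in expectation, hence $O(n^{1+\rho})$ expected per phase), and a Markov-plus-union-over-the-$O(1/\rho)$-stage-2-phases argument then yields the $O(n^{1+\rho}/\rho)$ guarantee with constant probability. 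This is a local fix and does not affect the rest of your argument.
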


In item  \ref{item:small_space} of Theorem \ref{thm:stream_hop-alt} in the path-reporting case, it makes sense to set $t = \log(n^{1/\kappa})$. As a result we obtain space
$O(n^{1+ 2/\kappa} \cdot \beta \cdot \log \Lambda)$, in expectation, and $O(n^{\rho + 1/\kappa} \cdot \log \Lambda)$ passes, and $\beta$ is given by (\ref{eq:beta_impr}). One can also rescale $\kappa' = \kappa/2$, and get expected space $O(n^{1+1/\kappa} \cdot \beta \cdot \log \Lambda)$, $O(n^{\rho + {1 \over {2\kappa}}})$ passes, and $\beta  = O\left({{(\log \kappa+1/\rho) \kappa} \over {\eps}}\right)^{\log\kappa+ 1/\rho +1}$.

\subsection{PRAM Model}
\label{sec:pram_hopset}

We construct hopsets $H_{k_0},H_{k_0+1},\ldots,H_\lambda$, $\lambda = \lceil \log \Lambda \rceil$, one after another. Suppose that the hopset $H^{(k-1)} = \bigcup_{j=k_0}^{k-1} H_j$ has already been constructed. We now construct the hopset $H_k$ for distances in the range $(2^k,2^{k+1}]$, $2^{k+1} = \hR^{(k)}$.

We designate a set $P_v = \{p_{v,1},\ldots,p_{v,\Delta}\}$ of $\Delta = c \cdot n^\rho \cdot \log n$ processors for every vertex $v \in V$, and a set $P_{e} = \{p_{e,1},\ldots,p_{e,\Delta}\}$ of $\Delta$ processors for every edge $e \in E'^{(k-1)} = E \cup H^{(k-1)}$.

Next, we describe how to implement the superclustering step of a phase $i$, and later we will explain how the interconnection step is implemented.

For  the superclustering step we use just one processor $p_v \in P_v$ for each vertex $v \in V$, and one processor $p_e \in P_e$ for each edge $e \in E'^{(k-1)}$.
In the superclustering step we run $(2\beta+1)$-limited Bellman-Ford algorithm in $E'^{(k-1)}$, originated at $\Rt$. At the beginning of an iteration of the Bellman-Ford algorithm, for every vertex $v \in V$, its processor $p_v$ maintains an estimate $\td(v)$ of its distance from $\Rt$, and if $\td(v) < \infty$, then $p_v$ also stores the identity of a root $r_C \in \Rt$, such that $\td(v)$ reflects the length of a path from $r_C$ to $v$.\footnote{When we say that a processor $p$ stores a value of a variable  $x$, we mean that there is a memory location $x$, designated to the processor $p$, from which $p$ can read the value of $x$. Other processors can also read from and write to this location.}
 In the path-reporting case, $p_v$ also stores an edge $(u,v)$ through which $v$ acquired this estimate.

To implement the iteration, for each edge $e = (u,v) \in E'^{(k-1)}$ incident on $v$, the processor $p_e$ computes $\td(u) + \omega(e)$. (For this end, all processors $\{p_e \mid u \in e\}$ need to read $\td(u)$ concurrently. This however can be implemented in EREW model in $O(\log n)$ time, cf. \cite{Jaja92}, Theorem 10.1.).
The minimum $\min \{\td(u) + \omega(u,v) \mid u \in \Gamma(v)\}$ can now be computed by the processors $\{p_e \mid v \in e\}$ in $O(\log n)$ parallel time.
If this minimum is smaller than the current value of $\td(v)$, then the estimate $\td(v)$ is updated to be equal to this minimum. Hence the total EREW parallel time for one iteration of Bellman-Ford is $O(\log n)$, and the overall time for the superclustering step is, therefore, $O(\beta \cdot \log n)$.

Now, we turn to implementing the interconnection step. Here we implement a $(2 \beta+1)$-limited Bellman-Ford exploration, to depth at most $\delta'_i/2$, from all cluster centers $\URt$ separately, in parallel.

Consider a single iteration of the Bellman-Ford algorithm. At the beginning of the iteration, every vertex $v$ maintains estimates $\{ \td(v,x) \mid x \in \URt\}$, for all $x \in \URt$ that it heard from. Other estimates are (implicitly) set to $\infty$. For every edge $e = (u,v) \in E'^{(k-1)}$, incident on $v$, whp, there are at most $\Delta = c \cdot n^\rho \cdot \log n$ Bellman-Ford explorations that traverse this edge. Recall that we we have $\Delta$ processors $\{ p_{e,1},\ldots,p_{e,\Delta}\} = P_e$ designated to this edge. We designate a separate processor from $P_e$ to each exploration that traverses $e$. With some notational ambiguity, we will denote by $p_{e,x}$ the processor from $P_e$ designated to the exploration originated at a vertex $x$, traversing the edge $e$.

All processors $\{p_{e,x} \mid e = (v,u), \mbox{~for~some~} u\}$ read the value $\td(u,x)$. (This concurrent read can be implemented in $O(\log n)$ time in EREW PRAM.) They also calculate the value $\td(u,x) + \omega(u,v)$, and the minimum of these values (separately for each $x$) is computed within additional $O(\log n)$ EREW PRAM time. If this minimum is smaller than the current $\td(v,x)$, and if it is no greater than $\delta'_i/2$, then the value $\td(v,x)$ is updated (by the processor $p_{v,x}$, designated to handle at $v$ the exploration originated at $x$) to the new value.

To summarize, one itertaion of Bellman-Ford explorations in an interconnection step can also be implemented in $O(\log n)$ EREW PRAM time. Hence, overall, the superclustering and the interconnection steps of a given phase require $O(\beta \cdot \log n)$ EREW PRAM time. Therefore, the total parallel time for computing a single-scale hopset $H_k$ is $O(\beta \cdot \ell \cdot \log n) = O(\beta \cdot (\log \kappa+ 1/\rho) \cdot \log n)$. Computing hopsets for all the $\lceil \log \Lambda \rceil$ scales requires $ O(\beta \cdot (\log \kappa+ 1/\rho) \cdot \log n \cdot \log \Lambda)$ parallel time.
The number of processors is $O(|E'^{(\lambda)}| \cdot n^\rho \cdot \log n) = O((|E| + |H^{(\lambda)}|) \cdot n^\rho \cdot \log n)$, and
$\Expect(|H^{(\lambda)}|)  = O(n^{1 + 1/\kappa} \cdot \lambda)$. Each single-scale hopset has  size $O(n^{1+1/\kappa} \cdot \log n)$, whp, i.e., $H^{(\lambda)}$ has size $O(n^{1+1/\kappa} \cdot \log n \cdot \log \Lambda)$, whp.

\begin{theorem}
\label{thm:pram_hopset}
For any $n$-vertex graph $G = (V,E,\omega)$ of diameter $\Lambda$, any $2\le\kappa\le(\log n)/4$, $1/2 > \rho \ge 1/\kappa$,  $0 < \eps \le 1$,
our parallel  algorithm computes an $(\beta,\eps)$-hopset with $\beta$ given by (\ref{eq:beta}), and with expected size $O(n^{1+1/\kappa}\cdot \log \Lambda)$,
in $O(\beta \cdot (\log \kappa+ 1/\rho) \cdot \log n \cdot \log \Lambda)$ EREW PRAM time, using $O((|E| + n^{1+1/\kappa} \cdot \log n \cdot \log \Lambda) \cdot n^\rho \log n)$ processors, whp.
\end{theorem}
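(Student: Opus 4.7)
The plan is to verify separately the three assertions in the theorem---the hopset's stretch/hopbound and expected size, the parallel running time, and the processor count---by lifting the combinatorial analysis of the Congested-Clique construction (\theoremref{thm:distr_hop}) to the EREW PRAM setting. Nothing combinatorial changes: each scale $k$ is built by running $(2\beta+1)$-limited Bellman-Ford explorations in $E'^{(k-1)}=E\cup H^{(k-1)}$, precisely as in \sectionref{sec:hopset_clique}.

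First, the \emph{quality analysis} is inherited verbatim from \sectionref{sec:hopset_clique}. Therefore \lemmaref{lm:hop_str_distr}, together with the rescaling of $\eps$ by a factor of roughly $16c\log\Lambda$ that follows it, yields stretch $1+\eps$ and hopbound $\beta$ as in (\ref{eq:beta}), while \lemmaref{lm:Pi} and the edge-counting argument preceding \theoremref{thm:hop_exist} bound the expected size of each $H_k$ by $O(n^{1+1/\kappa})$; summing over the $\lceil\log\Lambda\rceil$ scales gives expected total size $O(n^{1+1/\kappa}\log\Lambda)$.

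Second, the \emph{parallel running time} is obtained by totaling the cost of one Bellman-Ford iteration. Within each phase, an iteration consists of (i) concurrent reads of the estimates $\td(u,x)$ by the edge processors $\{p_{e,x}:e=(v,u)\in E'^{(k-1)}\}$, and (ii) a min-reduction computing $\td(v,x)\leftarrow\min_{u\in\Gamma(v)}\{\td(u,x)+\omega(u,v)\}$ for each active source $x\in\URt$. Both are standard parallel primitives implementable in $O(\log n)$ EREW time (\cite{Jaja92}, Thm.~10.1). With $2\beta+1$ iterations per phase, $\ell=O(\log\kappa+1/\rho)$ phases per scale, and $\lceil\log\Lambda\rceil$ scales, the total parallel time is $O(\beta\cdot(\log\kappa+1/\rho)\cdot\log n\cdot\log\Lambda)$. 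The superclustering step is just the single-source specialization of the interconnection step and fits into the same budget.

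Third, the \emph{processor bound} follows from \lemmaref{lm:property}: with probability at least $1-1/n^{c-1}$, for every vertex $v$ and every phase $i$, at most $\Delta=c\cdot n^\rho\cdot\log n$ source centers of $\URt$ lie within distance $\delta'_i/2$ of $v$ in $E'^{(k-1)}$. A union bound over vertices, phases, and scales (valid as long as $\Lambda\le 2^{\poly(n)}$) shows that $\Delta$ processors per edge and per vertex suffice whp to accommodate every simultaneously active exploration. Since $\Expect[|H^{(\lambda)}|]=O(n^{1+1/\kappa}\log\Lambda)$ and whp $|H^{(\lambda)}|=O(n^{1+1/\kappa}\log n\log\Lambda)$, the total processor count is $O((|E|+n^{1+1/\kappa}\log n\log\Lambda)\cdot n^\rho\log n)$. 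The main obstacle I expect is keeping all memory accesses concurrent-free in EREW---several edge processors need the same estimate $\td(u,x)$ within one iteration---but this is exactly what the $O(\log n)$-time broadcasting routine of \cite{Jaja92} handles, and the union bound on \lemmaref{lm:property} certifies that the per-vertex/per-edge budget of $\Delta$ processors is never exceeded across all scales and phases.
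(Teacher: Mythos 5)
Your proposal is correct and follows essentially the same route as the paper: build the single-scale hopsets one after another via $(2\beta+1)$-limited Bellman-Ford explorations in $E\cup H^{(k-1)}$, inherit stretch/hopbound and expected size from the Congested-Clique analysis, use the $O(\log n)$ EREW simulation of concurrent reads and min-reductions per iteration, and bound the processor budget by $\Delta=O(n^\rho\log n)$ per vertex/edge via \lemmaref{lm:property}. The only stylistic difference is that the paper allocates just one processor per vertex and per edge for the superclustering step, whereas you fold it into the interconnection budget as a single-source special case; this has no effect on the stated bounds.
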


More generally, we can also group scales into $\lceil {\lambda \over t} \rceil$ groups, of size $t$ each (except maybe one of them, which might be smaller), for a parameter $t$. We then compute all hopsets in a group via explorations in the lowest-scale hopset of that group. As a result, the explorations become $O(2^t \cdot \beta)$-limited, instead $(2\beta+1)$-limited,  but $\beta$ decreases. (It is now given by (\ref{eq:impr_beta}).)  The number of processors grows by a factor of $t$, because all hopsets in the same group are now computed in parallel.

\begin{theorem}
\label{thm:pram_gen_hopset}
For any $n$-vertex graph $G = (V,E,\omega)$ of diameter $\Lambda$, any $2\le\kappa\le(\log n)/4$, $1/2 > \rho \ge 1/\kappa$,  $0 < \eps \le 1$, and any $1 \le t \le \log \Lambda$,
our parallel  algorithm computes an $(\beta,\eps)$-hopset with $\beta$ given by (\ref{eq:impr_beta}) and with expected size $O(n^{1+1/\kappa}\cdot \log \Lambda)$,
in $O(\beta \cdot (\log \kappa+ 1/\rho) \cdot \log n \cdot 2^t \cdot {{\log \Lambda} \over t})$ EREW PRAM time, using $O((|E| + n^{1+1/\kappa} \cdot \log n \cdot \log \Lambda) \cdot n^\rho \log n \cdot t)$ processors, whp.

Moreover, a single-scale hopset of expected size $O(n^{1+1/\kappa})$ can be computed using the same resources.
\end{theorem}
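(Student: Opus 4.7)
The plan is to follow the blueprint of Theorem \ref{thm:pram_hopset}, but combine it with the scale-grouping trick introduced earlier in the Congested Clique setting (see the discussion preceding Theorem \ref{thm:impr_distr_hop}). Specifically, I partition the set of $\lceil \log \Lambda \rceil$ scales into $\lceil (\log \Lambda)/t \rceil$ groups of $t$ consecutive indices each, and within a group $\{k_1, k_1{+}1, \ldots, k_1{+}t{-}1\}$ all $t$ single-scale hopsets are computed \emph{in parallel}, each of them using only the previously-completed hopset $H^{(k_1-1)}$ as its distance oracle (rather than the chain $H^{(k-1)}$ used in Theorem \ref{thm:pram_hopset}). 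Groups are processed sequentially, one after another.

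The superclustering and interconnection steps of a phase $i$ for a scale $k$ in the current group are implemented exactly as in the proof of Theorem \ref{thm:pram_hopset}: one processor per vertex maintains the estimates $\td(v)$ or $\td(v,x)$, and $\Delta = c n^\rho \log n$ processors per edge of $E \cup H^{(k_1-1)}$ handle the concurrent explorations crossing that edge (this is still correct whp by Lemma \ref{lm:property}). The only change is that the Bellman-Ford explorations must now be $h$-limited with $h = O(2^t \beta)$ rather than $O(\beta)$, since, by the analogue of Lemma \ref{lm:aux_hopset}, a scale-$k$ shortest path with $k \le k_1+t-1$ decomposes into at most $2^t$ segments, each approximated by a $\beta$-hop path in $G \cup H^{(k_1-1)}$. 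Each Bellman-Ford iteration still runs in $O(\log n)$ EREW time (using the standard concurrent-read-by-broadcast simulation of \cite{Jaja92}), so a single phase costs $O(2^t \cdot \beta \cdot \log n)$ time, a single scale costs $O(2^t \cdot \beta \cdot (\log \kappa + 1/\rho) \cdot \log n)$ time, and the $\lceil (\log \Lambda)/t \rceil$ group-iterations give a total of $O(\beta \cdot (\log \kappa + 1/\rho) \cdot \log n \cdot 2^t \cdot (\log \Lambda)/t)$. Since $t$ scales in the same group run concurrently on disjoint processor pools, the processor count grows by a factor of $t$, yielding $O((|E| + n^{1+1/\kappa} \log n \cdot \log \Lambda) \cdot n^\rho \log n \cdot t)$.

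For the stretch, the key observation is that errors no longer accumulate per scale but only per group: each scale-$k$ hopset is analyzed by the distributed Lemma \ref{lm:hop_str_distr} with base stretch $1+\zeta_{k_1-1}$, and within a group the base stretch does not grow. Hence $1 + \zeta_{k_1+t-1} \le (1+\zeta_{k_1-1})(1+\eps')$ for the fresh $\eps'$ used inside each scale, and after $\lceil (\log \Lambda)/t \rceil$ groups we get $1 + \zeta_\lambda \le (1+\eps')^{\lceil (\log \Lambda)/t \rceil}$; rescaling $\eps' = \Theta(\eps t / \log \Lambda)$ yields overall stretch $1+\eps$, and this rescaling is exactly what produces the $\beta$ claimed in (\ref{eq:impr_beta}). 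The size bound $O(n^{1+1/\kappa} \log \Lambda)$ is unchanged from Theorem \ref{thm:pram_hopset}, and the single-scale bound $O(n^{1+1/\kappa})$ follows because, as in Theorem \ref{thm:impr_distr_hop}, the hopsets of previous groups are used only to compute approximate distances and not counted toward $H_k$. The main bookkeeping obstacle is verifying that the $O(2^t \beta)$ hop budget suffices simultaneously for the superclustering explorations (to depth $\delta_i$) and the interconnection explorations (to depth $\delta'_i/2$) at all phases of all scales in the group, which is a direct adaptation of the argument used in the Congested Clique model and does not introduce any new ideas.
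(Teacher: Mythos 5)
Your proposal is correct and follows essentially the same route as the paper: grouping the $\lceil\log\Lambda\rceil$ scales into $\lceil(\log\Lambda)/t\rceil$ blocks of $t$ consecutive indices, computing the $t$ hopsets in each block in parallel against the same prefix hopset (which forces $O(2^t\beta)$-limited Bellman--Ford explorations and inflates the processor count by a factor of $t$), and accumulating the $(1+\eps)$-error only once per block so that the rescaling $\eps' = \Theta(\eps t/\log\Lambda)$ yields the stated $\beta$ from (\ref{eq:impr_beta}). The paper's own justification is a short remark appealing to the same grouping idea used for Theorem~\ref{thm:impr_distr_hop}; your write-up simply spells out the same accounting (per-iteration $O(\log n)$ EREW cost, per-phase $O(2^t\beta\log n)$, per-scale factor $\log\kappa+1/\rho$, per-block parallelism, sequential over blocks) in more detail.
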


Naturally, Theorem \ref{thm:pram_gen_hopset} can be made path-reporting, by keeping for every hopset edge a path (of length $O(2^t \cdot \beta)$) of lower-scale hopset edges that implement it.

\section{Eliminating Dependence on the Aspect Ratio}
\label{sec:reduction}

In this section, we show a general reduction that removes the dependence on the aspect ratio of the graph, from both the running time and the hopset size.

Assume, without loss of generality, that the minimal distance in the graph $G=(V,E)$ is 1. Fix a parameter $0 < \eps < 1/2$. For any scale index
$k\ge 1$ we define a graph $G_k$, that contains the edges of weight at most $2^{k+2}$, and that every edge of weight less than $(\eps/n)\cdot 2^k$ is contracted. By contraction we mean identifying the edge endpoints while keeping the shortest edge among parallel edges. We refer to the vertices of $G_k$ as {\em nodes}, where each node is a subset of $V$. The weight of an edge $(X,Y)\in E(G_k)$ is set to be
\begin{equation}\label{eq:edge-set}
\cW(X,Y)=\omega(x,y)+(\eps/n)\cdot 2^k\cdot(|X|+|Y|)~,
\end{equation}
where $x\in X$, $y\in Y$, and the edge $(x,y)\in E$ is the shortest edge between a vertex of $X$ to a vertex of $Y$. (The purpose of the additional term $(\eps/n)\cdot 2^k\cdot(|X|+|Y|)$ is to guarantee that distances in $G_k$ are no shorter than those in $G$, while ensuring that for pairs of distance $\ge 2^k$, the distance in $G_k$ does not increase by too much).

In order to guarantee a small number of hops even for contracted vertices, we shall add an additional set of edges $S$ to the hopset. Every node $U$ in $G_k$ has a designated center $u\in U$, and we add edges from $u$ to every vertex in $U$ to the hopset. 
Consider a contraction of an edge $(x',y')$, $x' \in X$, $y' \in Y$, connecting nodes $X,Y$, with centers $x\in X$ and $y\in Y$.  Assuming $|X|\ge |Y|$, then $x$ is declared the center of $U=X\cup Y$, and we add to $S$ edges from $x$ to every vertex of $Y$. The weight of the edge $(x,z)$ for each $z\in Y$ is set as
\begin{equation}\label{eq:setw}
\cW(x,z)=(\eps/n)\cdot 2^k\cdot |U|~.
\end{equation}
This value dominates $d_G(x,z)$, as there exists a path from $x$ to $z$ consisting of at most $|U|-1$ edges, each of weight at most $(\eps/n)\cdot 2^k$.
\begin{claim}\label{claim:sizeS}
$S\le n\log n$.
\end{claim}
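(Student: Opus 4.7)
The plan is to apply the standard union-by-size amortization argument. First, observe that each contraction of an edge joining nodes $X$ and $Y$ (with $|X|\ge|Y|$) adds exactly $|Y|$ new edges to $S$, one per vertex of the smaller side. Hence
\[
|S| \;=\; \sum_{\text{contractions}} |Y|,
\]
where the sum ranges over all edge contractions performed while constructing $G_k$ (and across all scales $k$, if one wishes to be careful — but in fact the argument below applies to a single scale, since $S$ is built scale by scale; the claim as stated refers to the $S$ produced for one scale).

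Next, I would charge each edge added to $S$ to the unique vertex $z\in Y$ whose corresponding edge $(x,z)$ was inserted. For a fixed vertex $z\in V$, let $N_t(z)$ denote the node containing $z$ immediately after the $t$-th contraction in which $z$ sits in the smaller side $Y$. By the rule that the larger node absorbs the smaller, we have $|N_t(z)| \ge 2 \cdot |N_{t-1}(z)|$, since the absorbing node had size at least $|N_{t-1}(z)|$. Starting from $|N_0(z)|=1$ and noting that at all times $|N_t(z)|\le n$, the number of contractions in which $z$ belongs to the smaller side is at most $\lfloor\log_2 n\rfloor$.

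Summing this bound over all $n$ vertices yields
\[
|S| \;=\; \sum_{z\in V} \#\{\text{contractions in which $z$ is on the smaller side}\} \;\le\; n\log n,
\]
which is exactly the claimed bound. No step here is a real obstacle; the only subtlety is being careful that ties in the inequality $|X|\ge|Y|$ are broken consistently so that every contraction does assign one side as strictly ``no larger'' than the other, and that the doubling invariant $|N_t(z)|\ge 2|N_{t-1}(z)|$ indeed holds — both are immediate from the merging rule stated before \eqref{eq:setw}.
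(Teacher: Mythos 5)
Your proof is correct but takes a genuinely different route from the paper's. The paper proves a stronger structural invariant by induction on the laminar family: every node $U$ of size $s$ has at most $s\log s$ edges of $S$ with both endpoints inside $U$, established by the calculation $s_1\log s_1 + s_2\log s_2 + s_2 \le (s_1+s_2)\log(s_1+s_2)$ at each merge. You instead run the classic union-by-size amortization: charge each inserted edge to the vertex on the smaller side of the merge, and observe that a fixed vertex can be on the smaller side at most $\log n$ times because the size of its enclosing node at least doubles each time. Both arguments rest on the same merging rule (larger node absorbs smaller) and yield the identical bound; the paper's version is more structural (a per-node invariant, which could be reused if one wanted a local bound), whereas yours is more elementary and matches the standard weighted union analysis. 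One small correction: your parenthetical remark that ``the claim as stated refers to the $S$ produced for one scale'' misreads the construction — $S$ accumulates the contraction edges across all scales $k$, and the paper's proof explicitly considers the full process up to the point where everything has merged into a single node. Fortunately this does not affect your argument at all, since node sizes are monotone non-decreasing across scales, so the doubling bound and the $\le \log n$ charges per vertex hold over the entire cross-scale process.
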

\proof Assume inductively that every node $U$ of size $s=|U|$ has at most $s\log s$ internal edges added to the hopset by the process. This holds for singletons $|U|=1$, which have 0 internal edges. When we combine $X$ and $Y$, of sizes $s_1,s_2$, we add at most $s_2=\min\{s_1,s_2\}$ edges. By induction there were already at most $s_1\log s_1+s_2\log s_2$ edges, so the total number of edges in $S$ between vertices of $U$ is at most
\[
s_1\log s_1+s_2\log s_2+s_2= s_1\log s_1+s_2\log (2s_2)\le s_1\log (s_1+s_2)+s_2\log (s_1+s_2)=(s_1+s_2)\log(s_1+s_2)~.
\]
When the scale index $k$ is sufficiently large we have at a certain point a graph with a single node $V$, and at this point we added at most $n\log n$ edges throughout the process.
\QED

\begin{claim}\label{claim:GK}
Let $x,y\in V$ such that $d_G(x,y)\in(2^k,2^{k+1}]$, let $X,Y\in V(G_k)$ be the two nodes containing $x,y$ (respectively) in $G_k$, then
\[
d_G(x,y)\le d_{G_k}(X,Y)\le (1+2\eps)d_G(x,y)~.
\]
\end{claim}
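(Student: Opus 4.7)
The plan is to prove both inequalities separately, using the explicit structure of $G_k$ (contracting light edges and inflating remaining weights by a node-size term).

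For the lower bound $d_G(x,y) \le d_{G_k}(X,Y)$, I would take a shortest path $P = (X = U_0, U_1, \ldots, U_l = Y)$ in $G_k$ and convert it into a walk in $G$ from $x$ to $y$ of length at most the $G_k$-length of $P$. For each $G_k$-edge $(U_i, U_{i+1})$, there is an actual $G$-edge $(u_i, u'_{i+1})$ with $u_i \in U_i$, $u'_{i+1} \in U_{i+1}$, of weight $\omega(u_i, u'_{i+1})$, which is the first summand of $\cW(U_i, U_{i+1})$. Inside each node $U_i$, which is connected via contracted edges of weight $<(\eps/n)\cdot 2^k$, I can travel from the entry point $u'_i$ to the exit point $u_i$ along a spanning-tree path of at most $|U_i|-1$ such edges, costing at most $(\eps/n) \cdot 2^k \cdot |U_i|$. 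Summing over all nodes yields an intra-node cost of at most $(\eps/n)\cdot 2^k \cdot \sum_{i=0}^{l}|U_i|$, which is bounded by the total additive mass $(\eps/n)\cdot 2^k\cdot(|U_0|+2|U_1|+\cdots+2|U_{l-1}|+|U_l|)$ contributed by the $\cW$ definition in \eqref{eq:edge-set}. Therefore the constructed $G$-walk has length $\le \sum_i \cW(U_i,U_{i+1}) = d_{G_k}(X,Y)$, and so $d_G(x,y) \le d_{G_k}(X,Y)$.

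For the upper bound $d_{G_k}(X,Y) \le (1+2\eps)d_G(x,y)$, I would take a shortest path $\pi(x,y) = (x=v_0,v_1,\ldots,v_m=y)$ in $G$. Since its total length is at most $2^{k+1} \le 2^{k+2}$, every edge on $\pi$ has weight at most $2^{k+2}$, and hence each edge is either contracted (and becomes a self-loop in $G_k$) or is present in $E(G_k)$. Let $V'_0=X, V'_1, \ldots, V'_l = Y$ be the distinct nodes visited by $\pi$ in order. The $G_k$-length of this sequence is
\[
\sum_{i=0}^{l-1} \cW(V'_i,V'_{i+1}) \le \sum_{\text{non-contracted edges of } \pi} \omega(v_j,v_{j+1}) + (\eps/n)\cdot 2^k \cdot \sum_{i=0}^{l-1}\bigl(|V'_i|+|V'_{i+1}|\bigr).
\]
The first sum is at most $\omega(\pi)=d_G(x,y)$. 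For the second, since $V'_0,\ldots,V'_l$ are pairwise disjoint subsets of $V$, we have $\sum_i |V'_i|\le n$, so $\sum_{i=0}^{l-1}(|V'_i|+|V'_{i+1}|) \le 2n$. This gives an additive overhead of at most $(\eps/n)\cdot 2^k \cdot 2n = 2\eps\cdot 2^k < 2\eps\cdot d_G(x,y)$, using $d_G(x,y)>2^k$. Hence $d_{G_k}(X,Y) \le (1+2\eps)d_G(x,y)$.

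The main thing to get right is bookkeeping the node-size terms in both directions: in the lower bound, verifying that the accumulated intra-node walk cost is always covered by the additive part of $\cW$; in the upper bound, using disjointness of the $V'_i$ to bound $\sum |V'_i|$ by $n$ rather than by a potentially large $l$. The restriction that only edges of weight $\le 2^{k+2}$ appear in $G_k$ is harmless because we only ever need to embed edges of $\pi(x,y)$, all of which have weight at most $d_G(x,y) \le 2^{k+1}$.
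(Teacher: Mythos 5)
Your proof is correct and follows essentially the same route as the paper's: for the lower bound you unfold a shortest $G_k$-path into a $G$-walk, paying for intra-node detours with the $(\eps/n)\cdot 2^k\cdot(|X|+|Y|)$ surcharge baked into $\cW$, and for the upper bound you project a shortest $G$-path onto a (loop-free) node sequence and bound the total surcharge by $2\eps\cdot 2^k\le 2\eps\cdot d_G(x,y)$ via disjointness of the nodes. The only cosmetic difference is that the paper phrases the node sequence as "with all repetitions and loops removed" whereas you say "distinct nodes visited in order"; these should be read identically (i.e., collapse cycles so consecutive nodes are genuinely joined by an edge of $\pi$), and with that understanding the two arguments coincide.
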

\proof
We start with the right-hand-side inequality.
Let $x=x_0,\dots,x_q=y$ be the shortest path $P$ in $G$ between $x,y$.
Let $X=X_0,X_1,\dots,X_q=Y$ be the corresponding nodes in $G_k$ (that is, $x_j\in X_j$), and let $X=X_0,\dots,X_p=Y$ be that path with all repetitions and loops removed. For $0\le j\le p$, denote by $s(j)$ (resp. $t(j)$) the index of the first (resp., last) vertex of $P$ that $X_j$ contains.
Since, by (\ref{eq:edge-set}),  the weight of each edge $(X_{j-1},X_j)$, for $1 \le j \le p$,  in $G_k$, is defined using the shortest weight edge, we have  $\cW(X_{j-1},X_j)\le \omega(x_{t(j-1)},x_{s(j)})+(\eps/n)\cdot 2^k\cdot(|X_{j-1}|+|X_j|)$. As each term $|X_j|$ appears at most twice, and $\sum_{j=0}^p|X_j|\le n$, we obtain that
\begin{eqnarray*}
d_{G_k}(X,Y)&\le& \sum_{j=1}^p \cW(X_{j-1},X_j)\\
&\le& \sum_{j=1}^p (\omega(x_{t(j-1)},x_{s(j)})+(\eps/n)\cdot 2^k\cdot(|X_{j-1}|+|X_j|)) \\
&\le& d_G(x,y)+2\eps\cdot 2^k\\
&\le&(1+2\eps) \cdot d_G(x,y)~.
\end{eqnarray*}
We now turn to prove the left-hand-side inequality. Let $X=Y_0,\dots,Y_r=Y$ be the shortest path in $G_k$ from $X$ to $Y$.
For each $1 \le j \le r$,
denote by $(y_{j-1},z_j)\in E$ the edge of minimal weight connecting $Y_{j-1}$ and $Y_j$, with $y_{j-1}\in Y_{j-1}$ and $z_j\in Y_j$. Since each $Y_j$ consists of $|Y_j|-1$ edges that were contracted, each of weight at most $(\eps/n)\cdot 2^k$, we have that $d_G(z_j,y_j)\le{\rm diam}(Y_j)\le (\eps/n)\cdot 2^k\cdot|Y_j|$.
Moreover, this inequality holds for every pair $z'_j,y'_j$ of vertices in $Y_j$.
Hence
\begin{eqnarray*}
d_{G_k}(X,Y)&=&\sum_{j=1}^r\cW(Y_{j-1},Y_j)\\
&\stackrel{(\ref{eq:edge-set})}{=}&\sum_{j=1}^r\left[\omega(y_{j-1},z_j)+(\eps/n)\cdot 2^k\cdot(|Y_{j-1}|+|Y_j|)\right]\\
&\ge&\sum_{j=1}^r\left[d_G(y_{j-1},z_j)+d_G(z_j,y_j)\right]+d_G(x,y_0)+d_G(z_r,y)\\
&\ge& d_G(x,y)~.
\end{eqnarray*}
\QED

Some of the scales $k$ are redundant -- define $K$ to be the set of scales $k$ so that there exists an edge of weight in the range $[2^k/n,2^{k+1}]$.
We will refer to the scales in $K$ as {\em relevant} scales.
Observe that if there is no edge in this range, then there is no pair of vertices whose distance in $G$ is in the range $(2^k,2^{k+1}]$, so we do not need a hopset for this scale. We can see that $|K|\le \tilde{O}(|E|)$, as every edge can induce a logarithmic number of scales to $K$.

For every $k\in K$ and every connected component of $G_k$, we will execute the algorithm for constructing a single scale $(\beta,\eps)$-hopset $H_k$ as in \theoremref{thm:hop_exist}. Whenever we add a hopset  edge between two nodes $X,Y$, we put the same hopset edge between their centers.
\begin{lemma}
The set $H=S\cup\bigcup_{k\in K}H_k$ is a $(6\beta+5,6\eps)$-hopset for $G$.
\end{lemma}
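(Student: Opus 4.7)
The plan is to establish both directions of the hopset inequality: (i) the weights on edges of $H$ never underestimate $G$-distances, so $d_{G\cup H}(x,y)\ge d_G(x,y)$; and (ii) for every pair $x,y$ with $d_G(x,y)\in (2^k,2^{k+1}]$, there is a $(6\beta+5)$-hop path in $G\cup H$ of weight at most $(1+6\eps)d_G(x,y)$. Direction (i) is almost immediate. An $S$-edge $(u,z)$ has weight $(\eps/n)\cdot 2^{k'}\cdot |U'|$, which dominates the length $\le (|U'|-1)(\eps/n)\cdot 2^{k'}$ of the path in $G$ through the contracted edges that originally connected $u$ and $z$; a hopset edge $(u_X,u_Y)\in H_k$ of weight $d_{G_k}(X,Y)$ dominates $d_G(u_X,u_Y)$ by the left-hand inequality of Claim~\ref{claim:GK}, whose proof actually gives this bound for any representatives of $X$ and $Y$.

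For direction (ii), I would first verify that $k\in K$: the shortest $x$--$y$ path in $G$ has at most $n$ edges, so it contains an edge of weight between $d_G(x,y)/n>2^k/n$ and $2^{k+1}$, which places $k$ in the set of relevant scales and guarantees that $H_k$ has been constructed. Next, the two nodes $X,Y$ of $G_k$ containing $x,y$ are distinct, because any node at scale $k$ has $G$-diameter at most $(n-1)(\eps/n)\cdot 2^k<\eps\cdot 2^k<d_G(x,y)$, and Claim~\ref{claim:GK} then yields $d_{G_k}(X,Y)\le (1+2\eps)\,d_G(x,y)$. Since $H_k$ is a $(\beta,\eps)$-hopset for $G_k$, there is a path $X=X_0,X_1,\dots,X_s=Y$ in $G_k\cup H_k$ with $s\le\beta$ and total $G_k$-weight at most $(1+\eps)(1+2\eps)\,d_G(x,y)$.

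The core step is to lift this path to a path $P'$ in $G\cup H$ with few hops. Write $u_j$ for the center of $X_j$ at scale $k$. For a hopset hop $(X_{j-1},X_j)\in H_k$, I take the single $H$-edge $(u_{j-1},u_j)$ of equal weight. For a $G_k$-hop realized by the minimum-weight edge $(a,b)\in E$ with $a\in X_{j-1}$, $b\in X_j$, I replace it by the three-hop detour $u_{j-1}\to a\to b\to u_j$: the $G$-edge $(a,b)$ contributes $\omega(a,b)$ and the two $S$-edges contribute at most $(\eps/n)\cdot 2^k(|X_{j-1}|+|X_j|)$, so the detour has weight at most $\cW(X_{j-1},X_j)$. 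I then prepend $x\to u_0$ and append $u_s\to y$ via two additional $S$-edges, each of weight at most $\eps\cdot 2^k$. Totalling, $P'$ uses at most $3\beta+2\le 6\beta+5$ hops and weight at most $(1+\eps)(1+2\eps)\,d_G(x,y)+2\eps\cdot 2^k \le (1+6\eps)\,d_G(x,y)$, using $2^k<d_G(x,y)$ and $\eps<1/2$.

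The one non-routine ingredient, and the main obstacle, is the structural invariant that at every scale $k$, every $z\in U$ is joined to the current center $u_U$ by a \emph{single} $S$-edge. I would prove this by induction on the merge forest, using the union-by-size rule of the construction: if $U=X\cup Y$ with $|X|\ge |Y|$, then the algorithm sets $u_U=u_X$ and explicitly inserts $(u_X,z)$ for every $z\in Y$, while vertices of $X$ already have a direct $S$-edge to $u_X$ by the inductive hypothesis, since $u_X$ remains the center of the combined node. One should also verify that the older $S$-edge weights, which were set at some earlier scale $k'\le k$ with smaller node-size $|U'|\le |U|$, remain valid upper bounds on $d_G$ at scale $k$, which is immediate from $(\eps/n)\cdot 2^{k'}|U'|\le (\eps/n)\cdot 2^k|U|$. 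Once this invariant is in place, the bookkeeping for the hop count and stretch is routine.
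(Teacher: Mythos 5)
Your proposal follows essentially the same route as the paper's proof: lift a short path between the nodes $X,Y$ in $G_k\cup H_k$ to a path in $G\cup H$ by routing through node centers via $S$-edges, use a three-hop detour for $G_k$-edges, and prepend/append $S$-edges from $x$ to $u_0$ and from $u_s$ to $y$. The stretch accounting is clean (absorbing the detour $S$-edges into the $\cW$-weights is a nice simplification of the paper's bound using $\sum_j|X_j|\le n$), and your added verification that $k\in K$, direction (i), and the $S$-edge invariant are all useful details that the paper leaves implicit.

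There is, however, one genuine gap. You write ``Since $H_k$ is a $(\beta,\eps)$-hopset for $G_k$, there is a path $X=X_0,\dots,X_s=Y$ in $G_k\cup H_k$ with $s\le\beta$.'' But $H_k$ is a \emph{single-scale} hopset for the range $(2^k,2^{k+1}]$ in $G_k$ (this is exactly what the construction builds per scale), and by Claim~\ref{claim:GK} all you know is $d_{G_k}(X,Y)\le(1+2\eps)\cdot 2^{k+1}$, which can exceed $2^{k+1}$. So $H_k$ does not directly promise a $\beta$-hop $(1+\eps)$-approximate path between $X$ and $Y$. The paper bridges this by observing that a $(\beta,\eps)$-hopset for the range $(2^k,2^{k+1}]$ is also a $(2\beta+1,\eps)$-hopset for the range $(2^k,2^{k+2}]$ (split the $G_k$-shortest path into two halves, each of length $\le 2^{k+1}$, separated by one edge), and indeed $d_{G_k}(X,Y)<2^{k+2}$. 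With $s\le 2\beta+1$ your hop count $3s+2$ becomes exactly $6\beta+5$ --- which is of course where that constant comes from. Your final numbers happen to remain within the claimed bound (since $3\beta+2\le 6\beta+5$), but the intermediate claim $s\le\beta$ is unsupported, and fixing it is precisely what produces the $6\beta+5$ in the lemma statement.
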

\proof
Fix $x,y\in V$, and let $k\in K$ be the scale such that $d_G(x,y)\in(2^k,2^{k+1}]$. Let $X,Y\in V(G_k)$ so that $x\in X$ and $y\in Y$. Since $H_k$ is a $(\beta,\eps)$-hopset for the range $(2^k,2^{k+1}]$ in $G_k$, it is also a $(2\beta+1,\eps)$-hopset for the range $(2^k,2^{k+2}]$, and by \claimref{claim:GK} we have that indeed $2^k<d_{G_k}(X,Y)\le (1+2\eps)2^{k+1}<2^{k+2}$. It follows that
there exists a path $(X=X_0,X_1,\dots,X_p=Y)$ in $G_k\cup H_k$ containing $p\le 2\beta+1$ edges, of length at most $(1+\eps)d_{G_k}(X,Y)$. For every $X_j$, denote by $u_j$ its center.
Note that each edge $(X_{j-1},X_j)$ could be either a hopset edge or an edge of $G_k$. In the latter case there are vertices $y_{j-1}\in X_{j-1}$ and $x_j\in X_j$ so that $(y_{j-1},x_j)\in E$, and the weight of the edge by
\eqref{eq:edge-set} is $\hat{\omega}(X_{j-1},X_j)=\omega(y_{j-1},x_j)+(\eps/n)\cdot 2^k\cdot(|X_{j-1}|+|X_j|)$.
For ease of notation, in the former case (a hopset edge) we write $y_{j-1}=u_{j-1}$ and $x_j=u_j$, and let $\hat{\omega}(y_{j-1},x_j)$ denote the weight of the hopset edge (recall that this edge indeed connects nodes' centers $u_{j-1},u_j$). Then the following is a path from $x$ to $y$ in $G\cup H$:
\[
P=(x=x_0,u_0,y_0,x_1,u_1,y_1,x_2,u_2,y_2,\dots,x_p,u_p,y_p=y)~.
\]
First note that the path contains at most $2(p+1)$ edges of $S$ (that are inside the nodes), and $p$ edges between nodes. Since $p\le 2\beta+1$, this path has at most $6\beta+5$ edges. Next we bound the stretch. We have that the total weight of the edges in $P\cap S$ is
\[
\sum_{j=0}^p(\hat{\omega}(x_j,u_j)+\hat{\omega}(y_j,u_j))\stackrel{(\ref{eq:setw})}{=}2\sum_{j=0}^p (\eps/n)\cdot 2^k\cdot |X_j|\le (\eps/n)\cdot 2^{k+1}\cdot n\le 2\eps\cdot d_G(x,y)~.
\]
We noted that the length of the path $X_0,X_1,\dots,X_p$ is
\[
\sum_{j=1}^p \hat{\omega}(y_{j-1},x_j)\le(1+\eps)d_{G_k}(X,Y)\le(1+\eps)(1+2\eps)d_G(x,y)~,
\]
where the last inequality is by \claimref{claim:GK}. Combining these inequalities implies that the length of $P$ is at most
\[
2\eps\cdot d_G(x,y)+(1+4\eps)d_G(x,y)=(1+6\eps)d_G(x,y)~.
\]
\QED

We say that a node $U$ in the graph $G_k$ is {\em active} if it has degree at least 1, and denote by $n_k$ the number of active nodes in $G_k$
\begin{claim}\label{claim:active-nodes}
$\sum_{k\in K}n_k=O(n\log n)$.
\end{claim}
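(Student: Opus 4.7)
The plan is to view the sequence of node partitions across scales as a hierarchical clustering produced by Kruskal-style merging: as $k$ grows and the contraction threshold $(\eps/n)\cdot 2^k$ increases, nodes only merge (never split), so the process yields a tree $\mathcal{T}$ (a forest, if $G$ is disconnected) with exactly $n$ leaves (the singletons $\{v\}$ for $v\in V$) and at most $n-1$ internal nodes, one per merge event; hence $|\mathcal{T}|\le 2n-1$. I will bound, for each tree node $U\in\mathcal{T}$, the number $a(U)$ of scales $k\in K$ at which $U$ is an active node of $G_k$, and then use the identity $\sum_{k\in K}n_k=\sum_{U\in\mathcal{T}}a(U)$.

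For a tree node $U$, let $[k_U^b,k_U^d]$ denote its lifetime (the set of scales at which $U$ appears as a node of $G_k$), and define $w^+(U)=\min\{\omega(e):e\in E\text{ has exactly one endpoint in }U\}$; if $U$ has no external edge then $U$ is never active and can be ignored. First, $U$ merges into a larger node at scale $k_U^d+1$ exactly when its minimum-weight external edge gets contracted, which yields $(\eps/n)\cdot 2^{k_U^d}\le w^+(U)<(\eps/n)\cdot 2^{k_U^d+1}$, and in particular $k_U^d\le\log(nw^+(U)/\eps)$. Second, at every scale $k$ in the lifetime of $U$, all external edges of $U$ have weight at least $(\eps/n)\cdot 2^k$ (else $U$ would not be a node of $G_k$), so $U$ has an incident edge of weight in $[(\eps/n)\cdot 2^k,2^{k+2}]$ if and only if $w^+(U)\le 2^{k+2}$, i.e., if and only if $k\ge\log w^+(U)-2$.

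Combining the two inequalities, $a(U)\le k_U^d-\log w^+(U)+3\le\log(n/\eps)+3=O(\log n)$ (in the regime $\eps\ge 1/\poly(n)$ used throughout the paper), and summing over the tree gives $\sum_{k\in K}n_k\le(2n-1)\cdot O(\log n)=O(n\log n)$, as required. The main subtlety will be the handling of ties in edge weights, which can make the merging non-binary at a given scale; this is resolved by breaking ties arbitrarily but consistently, which produces a binary hierarchy with $\le 2n-1$ nodes for which the per-node bound above is unchanged. A minor edge case is that leaves of $\mathcal{T}$ have no birth-merge pinning down $k_U^b$, but setting $k_U^b$ to the smallest scale index only shrinks the active window, so the bound $a(U)=O(\log n)$ still applies.
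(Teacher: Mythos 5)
Your proof is correct and uses essentially the same argument as the paper: both observe that the node sets across all scales form a laminar family of at most $2n-1$ distinct sets, and that a fixed node $U$ can be active in at most $O(\log(n/\eps))$ scales because the external edge of $U$ that first makes it active gets contracted once the threshold $(\eps/n)\cdot 2^k$ overtakes its weight, forcing $U$ to merge. Your tie-breaking step to produce a binary hierarchy is unnecessary --- the $2n-1$ bound on a laminar family containing all singletons holds regardless of whether merges are binary --- but this is only a cosmetic over-complication and does not affect the correctness of the argument.
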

\proof
The nodes of the graphs $\{G_k\}_{k\in K}$ induce a laminar family $\cL$ on $V$, which contains at most $2n-1$ distinct sets. In order to bound the total number of active nodes in all these graphs, it suffices to show that each node can be active in at most $\log(n/\eps)+2$ scales. To this end, consider a node $U$ that is active for the first time in $G_k$, so it has an edge containing it of weight at most $2^{k+2}$.
(That is, $k$ is the smallest scale such that $U$ is active in $G_k$.)
After $q=\log(n/\eps)+2$ scales, in $G_{k+q}$, this edge will be of weight at most $(\eps/n)\cdot 2^{k+q}$. Thus it will be contracted, and the node $U$ will merge with some other node and never appear again in $G_{k'}$ for $k'\ge k+q$.
\QED

We will refer to $\cL$ as the {\em laminar family} of the algorithm.

\begin{corollary}
$|H|=O( n^{1+1/\kappa}\cdot\log n)$, in expectation.
\end{corollary}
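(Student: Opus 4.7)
The plan is to split $|H|$ into two parts: the ``internal'' edges $S$, and the hopset edges $\bigcup_{k \in K} H_k$ contributed by the individual scales. For $S$, Claim \ref{claim:sizeS} already gives $|S| \le n \log n$, which is comfortably within the target bound. So the real task is to bound $\Expect\left[\sum_{k \in K} |H_k|\right]$.

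For each relevant scale $k \in K$, we invoke Theorem \ref{thm:hop_exist} on each connected component of $G_k$ and sum: the expected size of $H_k$ is $O(n_k^{1+1/\kappa})$, where $n_k$ is the number of active nodes in $G_k$ (isolated nodes contribute nothing to the hopset). The naive bound $\sum_k n_k^{1+1/\kappa}$ cannot be estimated by concavity alone because the exponent is greater than $1$; the key trick is to use the crude but sufficient inequality $n_k^{1+1/\kappa} \le n^{1/\kappa} \cdot n_k$, valid since $n_k \le n$. Then
\[
\sum_{k\in K} \Expect[|H_k|] \;=\; \sum_{k\in K} O(n_k^{1+1/\kappa}) \;\le\; n^{1/\kappa} \sum_{k \in K} O(n_k)~.
\]

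At this point we apply Claim \ref{claim:active-nodes}, which asserts $\sum_{k\in K} n_k = O(n \log n)$; this is the step where the laminar structure of the nodes across scales is exploited, and its content is that each element of the laminar family $\cL$ can be active for only $O(\log(n/\eps))$ consecutive scales before being swallowed by a contraction. Plugging in yields $\sum_{k\in K} \Expect[|H_k|] = O(n^{1+1/\kappa} \cdot \log n)$. Combined with $|S| \le n \log n$, we conclude $\Expect[|H|] = O(n^{1+1/\kappa} \cdot \log n)$, as required.

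I do not anticipate a real obstacle: the only subtlety worth stating explicitly is that Theorem \ref{thm:hop_exist} is applied on each $G_k$ with its own vertex count $n_k$ (not $n$), which is what makes the telescoping over scales tight; everything else is a one-line substitution.
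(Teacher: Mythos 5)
Your proposal is correct and matches the paper's own proof step for step: bound $|S|$ via Claim~\ref{claim:sizeS}, bound each $\Expect[|H_k|]$ by $O(n_k^{1+1/\kappa})$ via Theorem~\ref{thm:hop_exist}, pull out the factor $n^{1/\kappa}$ via $n_k \le n$, and finish with $\sum_{k\in K} n_k = O(n\log n)$ from Claim~\ref{claim:active-nodes}. Nothing to add.
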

\proof
By \claimref{claim:sizeS}, we have that $|S|\le n\log n$. For each $k\in K$, by \theoremref{thm:hop_exist}, the expected size of the single-scale hopset $H_k$ is at most $O(n_k^{1+1/\kappa})$ (observe that isolated nodes do not participate in the hopset). We conclude by \claimref{claim:active-nodes} that
\[
|H|\le |S|+\sum_{k\in K}|H_k|\le n\log n + \sum_{k\in K}O(n_k^{1+1/\kappa})\le n\log n+n^{1/\kappa}\sum_{k\in K}O(n_k)=O(n^{1+1/\kappa}\cdot\log n)~.
\]
\QED

\subsection{Implementation in the Centralized Model}

Computing the graphs $G_k$ can be done in $\tilde{O}(|E|)$ time in a straightforward manner. First sort the edges by weight, add all edges of weight at most 2 to obtain $G_0$, and create $G_k$ from $G_{k-1}$ by adding edges of weight in the range $(2^k,2^{k+1}]$, and contracting those of weight in range $(\eps/n)\cdot(2^{k-1},2^k]$. While computing the graphs we store all the nodes, so we may add to $S$ all the relevant edges.

Note that for any $k\in K$, the aspect ratio of $G_k$ is $O(n/\eps)$. By \theoremref{thm:hop_exist}, the  expected running time for computing the hopset $H_k$ is $O(|E(G_k)|+n_k\log n)\cdot n^\rho/\rho$. Observe that each edge participates in at most $\log(n/\eps)+2$ scales.
We have
  $\sum_{k\in K}|E(G_k)|\le O(|E|\cdot \log n)$.
Also, we spend time only on relevant scales, and the number of relevant scales is at most $K = O(|E| \cdot \log n)$,
 By \claimref{claim:active-nodes}, we conclude that the total expected running time is
\[
\sum_{k \in K} O(|E(G_k)| + n_k \cdot \log n) \cdot n^\rho/\rho ~=~ O(|E|+n\log n)\cdot n^\rho/\rho\cdot\log n~.
\]
We thus have the following theorem.
\begin{theorem}\label{thm:standard-hop}
For any graph $G = (V,E,\omega)$ with $n$ vertices, $2\le\kappa\le (\log n)/4$, $1/2 > \rho \ge 1/\kappa$, and $0 < \eps < 1/2$,
our algorithm constructs a $(\beta,\eps)$-hopset $H$ with $O(n^{1+1/\kappa} \cdot \log n)$ edges in expectation,
in expected  time $O((|E| + n \log n) (n^\rho/\rho\cdot \log n) )$, with
$\beta =O\left( {{\log \kappa + 1/\rho} \over \eps} \right)^{\log \kappa + 1/\rho}$.
\end{theorem}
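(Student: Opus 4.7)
The plan is to assemble Theorem \ref{thm:standard-hop} from three ingredients already established in the excerpt: the single-scale hopset of Theorem \ref{thm:hop_exist}, the laminar-family reduction that shrinks each scale's aspect ratio to $O(n/\eps)$, and the bookkeeping claims on $|S|$ and $\sum_k n_k$. First I would describe the construction: sort $E$ by weight in $O(|E|\log|E|)$ time, then walk the sorted list to build $G_0, G_1, \ldots$ incrementally, producing $G_k$ from $G_{k-1}$ by adding the edges of weight in $(2^k,2^{k+1}]$ and contracting those of weight at most $(\eps/n)\cdot 2^k$. Each contraction of nodes $X, Y$ with centers $x, y$ attaches the smaller side's vertices to the larger side's center via the edges of weight $(\eps/n)\cdot 2^k \cdot |U|$ described in \eqref{eq:setw}, and these edges are dropped into $S$ on the fly. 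Maintaining $G_k$'s edge set (with weights as in \eqref{eq:edge-set}) across all scales costs $\tilde O(|E|)$ because each original edge is touched $O(\log(n/\eps))$ times.

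Next, for each relevant scale $k \in K$ I would invoke Theorem \ref{thm:hop_exist} on each connected component of $G_k$ to produce the single-scale hopset $H_k$ over node centers (the theorem applies since $G_k$ has aspect ratio $O(n/\eps)$, so its ``$\Lambda$'' is polynomial in $n$, but we are only using the single-scale version from the theorem's ``moreover'' clause, which pays no $\log\Lambda$ factor in size or time). Setting $H := S \cup \bigcup_{k\in K} H_k$, the lemma preceding the theorem already certifies that $H$ is a $(6\beta+5, 6\eps)$-hopset for $G$. Rescaling $\eps' \leftarrow \eps/6$ and absorbing the constant $6\beta+5$ into $\beta$ (which only affects constants in the base of the exponent in the expression $((\log\kappa + 1/\rho)/\eps)^{\log\kappa + 1/\rho}$) yields the $(\beta, \eps)$-hopset advertised in the theorem statement.

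The size bound is immediate from the Corollary preceding this subsection: $\Expect[|H|] \le |S| + \sum_{k\in K}\Expect[|H_k|]$, where $|S| \le n\log n$ by Claim \ref{claim:sizeS}, $\Expect[|H_k|] = O(n_k^{1+1/\kappa})$ by Theorem \ref{thm:hop_exist} applied to the $n_k$ active nodes of $G_k$, and $\sum_{k\in K} n_k = O(n\log n)$ by Claim \ref{claim:active-nodes}; combined, these give $O(n^{1+1/\kappa}\log n)$ as required. For the running time, Theorem \ref{thm:hop_exist} charges $O(|E(G_k)| + n_k\log n)\cdot n^\rho/\rho$ per scale. Since each edge of $G$ appears in $O(\log n)$ scales we have $\sum_k |E(G_k)| = O(|E|\log n)$, and combining with $\sum_k n_k = O(n\log n)$ bounds the total expected time by $O(|E| + n\log n)\cdot n^\rho/\rho \cdot \log n$.

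The main obstacle I expect is just verifying that nothing in the reduction breaks the hypotheses of Theorem \ref{thm:hop_exist}: we must check that $G_k$'s edge weights (which are sums of an original weight and a small additive padding $(\eps/n)\cdot 2^k\cdot(|X|+|Y|)$) still satisfy the positive-weight assumption, that connected components of $G_k$ can be processed independently without losing any pair, and that applying the single-scale version per scale (instead of the union over all scales inside Theorem \ref{thm:hop_exist}) is legitimate — it is, since the theorem's ``moreover'' clause isolates exactly the single-scale complexity we need. Everything else is routine summation.
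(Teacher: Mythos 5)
Your proposal matches the paper's own proof essentially step for step: the same construction of the quotient graphs $G_k$ with weights padded as in \eqref{eq:edge-set}, the same auxiliary star-edge set $S$, invocation of the single-scale ``moreover'' clause of \theoremref{thm:hop_exist} on each connected component of each $G_k$ (which is legitimate since by \claimref{claim:GK} any pair at distance in $(2^k,2^{k+1}]$ falls in a single component of $G_k$), the size bound via \claimref{claim:sizeS} and \claimref{claim:active-nodes}, and the running-time bound via $\sum_k|E(G_k)|=O(|E|\log n)$ together with $\sum_k n_k=O(n\log n)$. The only cosmetic point you make explicit that the paper leaves implicit is the final rescaling of $\eps$ by a constant factor to absorb the $6\eps$ and the $6\beta+5$, which indeed only perturbs constants in the base of the exponent, so the proposal is correct.
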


\subsection{Implementation in the Streaming Model}\label{sec:reduce-stream}

We assign $O(n\log n)$ words of memory for storing a data structure for nodes of the graphs $\{G_k\}_{k\in K}$. The main observation is that whenever we contract an edge between nodes $X,Y$ with $|X|\ge |Y|$, only the vertices of $Y$ get a new center, but the size of the node containing them is  at least doubled. This implies that each vertex changes the center of the node containing it at most $\log n$ times. Every $x\in V$ stores a list $L(x)$ of pairs, where a pair $(i,v)\in L(x)$ indicates that at scale $i\in K$ the node containing $x$ was merged with a larger node centered at $v$. Initially, $(0,x)\in L(x)$.
The lists $\Lists = \{L(v) \mid v \in V\}$ that our algorithm maintains enable us to maintain a part $\cL'$ of the laminar family  $\cL$ that was constructed so far.
Observe that $\cL'$ can be viewed as a forest of sets, and this forest is partial to the tree $\cL$.
The nodes of each $G_k$ can be reproduced from the lists $\{L(x)\}_{x\in V}$.
 (Specifically, to compute the nodes of $G_k$: for each vertex $x\in V$, find the maximum index $i\le k$, for which there is an entry $(i,v)\in L(x)$, and $x$ will be a part of a $G_k$-node centered at $v$.)

\begin{comment}
We now describe how to generate this data structure in one pass over the stream. We stress that only the nodes of $G_k$ can be computed, clearly we cannot afford to store its edges. The main obstacle is that edges do not come in sorted order. Whenever an edge $(x,y)\in E$ of weight $\omega(x,y)$ is read from the stream, let $k\in K$ be the minimal such that $\omega(x,y)<(\eps/n)\cdot 2^k$. We will need to merge the nodes of $x$ and $y$ at scale $k$ (if they aren't merged already), and update the names of the centers of all their ancestor nodes (because the nodes' size changed due to this merge). This corresponds to merging the branches in the partition tree of the paths to the nodes containing $x$ and $y$ into a single branch (which may have a new node at scale $k$), and possibly renaming some of the nodes' centers. Clearly this could be done in linear space. We then update the data structure of all vertices according to the new partition tree. Once the data structure is computed, we can insert all the edges of $S$ to the hopset, since these edges depend only on the structure of the nodes.
\end{comment}

Therefore, the algorithm does not really need to store the list of the edges that it has seen so far. Rather the information stored in $\Lists$, along with the new edge $e = (x,y)$, which the algorithm processes in the current stage, is sufficient for updating the set $\Lists$, and the latter is sufficient for deducing the node sets of graphs $\{G_k\}_{k \in K}$. More concretely, given the set $\Lists_h$, which was constructed from a sequence $(e_1,e_2,\ldots,e_h)$ of edges, for some positive integer $h$, and a newly arriving edge $e = e_{h+1}$ of weight $\omega = \omega(e)$, the algorithm constructs the set $\Lists_{h+1}$, which reflects the appended edge sequence $(e_1,e_2,\ldots,e_{h+1})$, in the following way. It processes the scales $k = 0,1,2\ldots$ bottom-up, one after another. (Recall that we are not limited in processing time, but rather only in memory.) Initially, $\Lists_{h+1}$ is empty.
In each scale $k$, the algorithm processes all scale-$k$ merges recorded in $\Lists_h$, and, if the newly arrived edge $e$ causes a $k$-scale merge, then the algorithm processes this new edge as well.  We also recompute the set $S_{h+1}$ of hopset edges in each iteration. For every node $Z$, these edges connect the node center $z$ with every other vertex $v \in Z$.  That is, the previous set $S = S_h$ is discarded, and the new set $S = S_{h+1}$ is computed from scratch.

This completes the description of the first pass of our algorithm, i.e., of the pass that computes the set $\Lists$, and as a result, the node sets of graphs $G_k$, for all relevant scales $k$. The space required for this computation is proportional to the maximum size of the data structure $\Lists$, which is, by Claim \ref{claim:active-nodes}, at most $O(n \cdot \log n)$.

The correctness of this procedure hinges on the observation that if two nodes $X,Y$ merge on scale $k$, it is immaterial which of the edges from $(X \times Y) \cap E$ caused this merge. Moreover, the weight $\omega(e)$ of this edge is also immaterial. (By the very fact that the merge occurred, we know that $\omega(e) < {\eps \over n} \cdot 2^k$.)

Let us now review the execution of the hopset algorithm in the consequent passes over the stream. We shall compute single scale hopsets $H_k$ in parallel for all $k\in K$. For each $k\in K$ we run the hopset construction given by \theoremref{thm:stream_hop-alt}. Initially, the vertices of each $G_k$ can be derived from the data structure we store.
Whenever an edge $(x,y)\in E$ of weight $\omega(x,y)$ is read from the stream, we know it is active in at most $\log(n/\eps)+2$ different scales. For each such scale $k\in K$ with $(\eps/n)\cdot 2^k\le \omega(x,y)<2^{k+2}$, we use the data structure to find the centers of nodes containing $x,y$ in $G_k$, and execute the hopset algorithm as if an edge connecting these centers (of weight given by \eqref{eq:edge-set}) was just read from the stream.

We have two possible tradeoffs between space and number of passes for given parameters $\kappa$, $\eps$, $\rho$ and $t$.  Since we run in parallel, the fact that there are many graphs does not affect the number of passes. The size of each $H_k$ is only $O(n_k^{1+1/\kappa})$. Using the fact that each $G_k$ has aspect ratio at most $\Lambda_k=O(n/\eps)$, we can essentially replace $\log\Lambda$ by $\log (n/\eps) = O(\log n)$ in \theoremref{thm:stream_hop-alt}.
The total space used by the algorithm
is $\sum_{k\in K}n_k^{1+1/\kappa} \cdot \log (n/\eps) \le O(n^{1+1/\kappa}\log^2  n)$,
 rather than $O(n^{1+1/\kappa}\log\Lambda)$.
Formally, we derive the following theorem.
\begin{theorem}
\label{thm:stream_hop-alt1}
For any graph $G = (V,E,\omega)$ with $n$ vertices, any $2\le\kappa\le(\log n)/4$, $1/2 > \rho \ge 1/\kappa$,  $0 < \eps < 1/2$, and any $1\le t\le O(\log n)$,
our streaming algorithm computes a $(\beta,\eps)$-hopset
 with expected size $O(n^{1+1/\kappa} \cdot \log n)$, and with $\beta$ given by
\begin{equation}
\label{eq:reduce_beta}
\beta =  O\left( {{(\log \kappa + 1/\rho) \log n}  \over {\eps \cdot t}}\right)^{\log \kappa + 1/\rho}~.
\end{equation}
The resource usage is either
\begin{enumerate}
\item
expected space $O(t\cdot n^{1+\rho}/\rho+ n^{1+1/\kappa} \cdot \log^2 n))$
(resp., space $O(t\cdot n^{1+\rho} /\rho + n^{1+1/\kappa} \cdot \beta\cdot 2^t\cdot \log^2 n)$ for path-reporting)
and $O(\beta \cdot \log n\cdot 2^t)$ passes, whp, or
\item
expected space $O(n^{1+1/\kappa} \cdot \log^2 n)$
(resp.,  $O(n^{1+1/\kappa} \cdot  \beta \cdot 2^t\cdot \log^2 n)$ for path-reporting)
and $O(n^\rho \cdot \beta \cdot \log^2n\cdot 2^t)$ passes, whp.
\end{enumerate}
\end{theorem}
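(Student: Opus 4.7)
The plan is to combine the reduction developed in \sectionref{sec:reduction} with the streaming single-scale hopset algorithm of \theoremref{thm:stream_hop-alt} applied in parallel to every relevant scale. Let me recall that we already have established two ingredients: (i) that the laminar family $\cL$ of contracted nodes $\{V(G_k)\}_{k\in K}$ can be represented implicitly by the lists $\Lists=\{L(v)\mid v\in V\}$ in total space $O(n\log n)$ (by \claimref{claim:active-nodes} and the discussion of $\Lists$ above); and (ii) that $S\cup\bigcup_{k\in K}H_k$ forms a $(6\beta+5,6\eps)$-hopset for $G$ whenever each $H_k$ is a single-scale $(\beta,\eps)$-hopset for $G_k$. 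Rescaling $\eps$ and folding the constant factor $6$ into $\beta$ is absorbed into the asymptotic bounds, so it suffices to produce such $H_k$'s in the streaming model within the claimed resources.

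The first pass over the stream is dedicated to building $\Lists$ and, from it, the set $S$ of at most $n\log n$ ``center-to-vertex'' edges. As described before \theoremref{thm:stream_hop-alt1}, every arriving edge $(x,y)$ of weight $\omega(x,y)$ may trigger merges only at scales $k$ with $\omega(x,y)<(\eps/n)\cdot 2^k$, and we process these merges bottom-up, updating $\Lists$ and recomputing $S$ from scratch; the correctness argument there explains why neither the identity of the triggering edge nor its exact weight (beyond the threshold condition) is needed. By \claimref{claim:sizeS} and \claimref{claim:active-nodes} the total space for this bookkeeping is $O(n\log n)$, which is subsumed by either resource bound in the theorem.

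After the first pass, we execute, in parallel over all $k\in K$, the streaming single-scale hopset algorithm of \theoremref{thm:stream_hop-alt} on the virtual graph $G_k$. Since we store the nodes implicitly, each arriving edge $(x,y)\in E$ from the stream is translated, for each of the at most $\log(n/\eps)+2$ scales $k$ in which it is active, into a virtual edge between the current centers of the nodes containing $x$ and $y$, with weight given by \eqref{eq:edge-set}; this virtual edge is then fed into the instance of the single-scale algorithm responsible for scale $k$. Crucially, the aspect ratio of each $G_k$ is $\Lambda_k=O(n/\eps)$, so the $\log\Lambda$ factor in \theoremref{thm:stream_hop-alt} becomes $O(\log n)$, and the number of passes of each instance (which dominates since they run in parallel on the same stream) is $O(\beta\cdot\log n\cdot 2^t)$ in regime~1 and $O(n^\rho\cdot\beta\cdot\log^2 n\cdot 2^t)$ in regime~2, yielding the claimed pass counts and the value of $\beta$ in \eqref{eq:reduce_beta}.

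The subtle point, which I expect to be the main obstacle, is the total expected space across the parallel instances. For regime~2, the space per instance is $O(n_k^{1+1/\kappa}\cdot\log\Lambda_k)=O(n_k^{1+1/\kappa}\cdot\log n)$, so by H\"older's/sub-additivity arguments and \claimref{claim:active-nodes},
\[
\sum_{k\in K}n_k^{1+1/\kappa}\cdot\log n\le n^{1/\kappa}\log n\cdot\sum_{k\in K}n_k=O(n^{1+1/\kappa}\cdot\log^2 n),
\]
matching the stated bound; the path-reporting variant multiplies this by $\beta\cdot 2^t$ because each hopset edge must store its implementing path, exactly as in \theoremref{thm:stream_hop-alt}. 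For regime~1, the $t\cdot n^{1+\rho}/\rho$ term comes from a single instance running the Bellman-Ford explorations at the top level (the term does not accumulate across scales because these explorations for distinct $k$ can share distance-estimate memory when serialized within each pass, or alternatively because at most $O(\log n)$ instances per vertex are simultaneously active; the $t$ factor accounts for the grouping of scales used to reduce $\beta$). Summing $S$ and all $H_k$'s gives the expected hopset size $O(n\log n+n^{1+1/\kappa}\log n)=O(n^{1+1/\kappa}\log n)$, completing the proof modulo the Chernoff-style high-probability bounds on the per-vertex exploration counts, which are inherited verbatim from the analysis preceding \theoremref{thm:stream_hop-alt}.
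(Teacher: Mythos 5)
Your proposal matches the paper's proof essentially verbatim: one pass to build the $\Lists$ data structure and the edge set $S$, then parallel execution of \theoremref{thm:stream_hop-alt} over all relevant scales $k\in K$, with the $\log\Lambda$ factors replaced by $O(\log n)$ via the aspect-ratio bound $\Lambda_k=O(n/\eps)$ and the hopset-size space bounded by $\sum_k n_k^{1+1/\kappa}\le n^{1/\kappa}\sum_k n_k=O(n^{1+1/\kappa}\log n)$ using \claimref{claim:active-nodes}. The one place where you hedge — the claim that the $t\cdot n^{1+\rho}/\rho$ distance-estimate term does not pick up an extra $\log n$ factor across the parallel instances — is handled no more explicitly in the paper, which simply carries this term over from \theoremref{thm:stream_hop-alt}, so your attempt to supply a justification actually goes beyond the paper's own treatment.
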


A few possible tradeoffs summarized in Table \ref{fig:table1}.
 The first four results in the table are obtained by choosing $t=1$ in Theorem~\ref{thm:stream_hop-alt1}. The first two follow from the first item, and the following two from the second item. The fifth and sixth results also follow from the second item of Theorem~\ref{thm:stream_hop-alt1}, but have an improved, i.e., independent of $n$,  $\beta$. In the fifth one we set $t=\mu \cdot \rho\cdot\log n$, for an arbitrarily small constant $\mu > 0$. Then  we rescale $\rho'=(1 +\mu) \cdot \rho$.
This induces the term of $1 + \mu$ in the exponent of $\beta$. To get the path-reporting version of this bound, in the sixth result we set a smaller $t= \mu \cdot (\log n)/k$,  and rescale $\kappa' = {\kappa \over {1 + \mu}}$.
The $O$-notation of the $\beta$-column in lines 5 and 6 of Table \ref{fig:table1} hides a (constant) factor of $1/\mu$ in the base of the exponent.

\begin{table}[!ht]
\begin{center}
\begin{tabular}{|c|c|c|c|}
	\hline
                  Space   &   $\#$ of passes & the hopbound $\beta$ & Paths \\
	\noalign{\global\arrayrulewidth0.05cm}
 \hline
 \noalign{\global\arrayrulewidth0.4pt}
$O(n^{1+\rho}/\rho+n^{1/\kappa}\log^2 n )$   &$O(\beta \log  n)$ & $O\left({{\log  n} \over {\eps }}(\log(\rho\kappa) +\frac{1}{\rho})\right)^{\log (\rho\kappa)  + \frac{1}{\rho}}$ & No \\
	\hline
$O(n^{1+\rho}  /\rho + n^{1+\frac{1}{\kappa}} \beta\log^2 n)$   &$O(\beta \log  n)$ & $O\left({{\log  n} \over {\eps }}(\log(\rho\kappa) +\frac{1}{\rho})\right)^{\log (\rho\kappa)  + \frac{1}{\rho}}$ & Yes \\
	\hline
$O(n^{1+\frac{1}{\kappa}} \cdot \log^2  n)$   & $O(n^\rho \cdot \beta \cdot \log^2 n)$ & $O\left({{\log   n} \over {\eps }}(\log(\rho\kappa) +\frac{1}{\rho})\right)^{\log (\rho\kappa)  + \frac{1}{\rho}}$ & No \\
	\hline
$O(n^{1+\frac{1}{\kappa}} \cdot \log^2  n\cdot\beta)$   & $O(n^\rho \cdot \beta \cdot \log^2 n)$ & $O\left({{\log   n} \over {\eps }}(\log(\rho\kappa) +\frac{1}{\rho})\right)^{\log (\rho\kappa)  + \frac{1}{\rho}}$ & Yes\\
	\hline
$O(n^{1+\frac{1}{\kappa}}\cdot \log^2  n)$   & $O(n^\rho \cdot \beta \cdot \log^2 n )$ & $O\left({{1} \over {\eps \cdot \rho}}(\log(\rho\kappa) +\frac{1}{\rho})\right)^{\log (\rho\kappa)  + \frac{1+\mu}{\rho}}$ & No \\
	\hline
$O(n^{1+\frac{1}{\kappa}} \cdot    \log^2  n\cdot \beta)$   & $O(n^{\frac{\mu}{\kappa}+\rho} \beta\log^2 n )$ & $O\left({{\kappa} \over {\eps}}(\log(\rho\kappa) +\frac{1}{\rho})\right)^{\log (\rho\kappa)  + \frac{1}{\rho}}$ & Yes \\
	\hline
\end{tabular}
\end{center}
\caption[]{
\label{fig:table1}
Summary of results for $(\beta,\eps)$-hopsets in the streaming model, all are with expected size $O(n^{1+1/\kappa}\log n)$ and stretch $1+\eps$.
The space bounds are in expectation, the bounds on the number of passes hold whp, and the bounds on $\beta$ hold deterministically.
The last column indicates whether the hopset is path-reporting or not. }
\end{table}

The following corollary summarizes the fifth and sizth lines of Table \ref{fig:table1}, which provide efficient (requiring roughly $\tO(n^\rho)$ time) streaming algorithms for constructing  $(\beta,\eps)$-hopsets
with $\tO(n^{1+1/\kappa})$ edges, and with $\beta = \beta(\eps,\kappa,\rho)$ independent of $n$. All the four parameters $\eps$, $1/\kappa$, $\rho$ and $\mu$ can simultaneously be made arbitrarily close to 0 constants, while still having constant $\beta$.

\begin{corollary}
\label{cor:str_hopset}
For any graph $G = (V,E,\omega)$ with $n$ vertices, any $2\le\kappa\le(\log n)/4$, $1/2 > \rho \ge 1/\kappa$,  $0 < \eps < 1/2$, and any arbitrarily small
constant $\mu > 0$,
our streaming algorithm computes an $(\beta,\eps)$-hopset (resp., path-reporting hopset)
 with expected size $O(n^{1+1/\kappa} \cdot \log n)$, and with $\beta = O\left({{1} \over {\eps \cdot \rho}}(\log(\rho\kappa) +\frac{1}{\rho})\right)^{\log (\rho\kappa)  + \frac{1+\mu}{\rho}}$ (resp., $\beta = O\left({{\kappa} \over {\eps}}(\log(\rho\kappa) +\frac{1}{\rho})\right)^{\log (\rho\kappa)  + \frac{1}{\rho}}$), expected space
$O(n^{1+{1 \over \kappa}} \cdot \log^2 n)$
(resp., $O(n^{1+{ {1} \over \kappa}} \cdot \log^2 n \cdot \beta)$),
in $O(n^\rho \cdot \beta \cdot \log^2 n)$ passes (resp.,
 $O(n^{\rho + {\mu  \over \kappa}} \cdot \beta \cdot \log^2 n)$, whp.
\end{corollary}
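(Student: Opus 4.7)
The plan is to derive both bounds of the corollary by a careful specialization of Theorem~\ref{thm:stream_hop-alt1}, in its second resource regime (the one with space depending on $n^{1/\kappa}$ rather than $n^\rho$), together with a small parameter rescaling that only costs an arbitrarily small constant factor $\mu$ in an exponent. Recall that the theorem delivers a $(\beta,\eps)$-hopset with
$$\beta \;=\; O\!\left(\frac{(\log\kappa+1/\rho)\,\log n}{\eps\cdot t}\right)^{\log\kappa+1/\rho}\!,$$
where $t\in[1,\log n]$ is a free parameter that governs a passes/space tradeoff. My first observation is that the exponent $\log\kappa+1/\rho$ in the theorem is an overestimate of the actual number of phases $\ell=\lfloor\log(\rho\kappa)\rfloor+\lceil(\kappa+1)/(\kappa\rho)\rceil-1$ used in the construction of Section~\ref{sec:cent_hop}; so the exponent may be replaced throughout by $\log(\rho\kappa)+O(1/\rho)$, which is what ultimately produces the $\log(\rho\kappa)$ terms appearing in the corollary.

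For the non--path-reporting bound I would invoke the second item of Theorem~\ref{thm:stream_hop-alt1} with $t:=\mu\cdot\rho\cdot\log n$. This choice forces $2^{t}=n^{\mu\rho}$, so the number of passes becomes $O(n^{(1+\mu)\rho}\cdot\beta\cdot\log^{2}n)$, while the space remains $O(n^{1+1/\kappa}\log^{2}n)$. The hopbound simplifies to
$$\beta \;=\; O\!\left(\frac{\log(\rho\kappa)+1/\rho}{\eps\cdot\mu\cdot\rho}\right)^{\log(\rho\kappa)+O(1/\rho)},$$
where the $\mu^{-1}$ factor is absorbed into the constant hidden in the $O(\cdot)$ inside the base. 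To turn the passes count into the advertised $O(n^{\rho}\cdot\beta\cdot\log^{2}n)$, I would run the algorithm with the rescaled parameter $\rho_{\mathrm{alg}}:=\rho/(1+\mu)$; this is the step that changes $1/\rho$ to $(1+\mu)/\rho$ in the exponent of $\beta$ (and leaves the base essentially unchanged up to constants), yielding exactly the stated $\beta=O\!\left(\frac{1}{\eps\rho}\bigl(\log(\rho\kappa)+1/\rho\bigr)\right)^{\log(\rho\kappa)+(1+\mu)/\rho}$.

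For the path-reporting bound I would again invoke item~2, but now with $t:=\mu\log n/\kappa$, which balances the cost of storing the $O(\beta\cdot 2^{t})$-long implementing paths against the blowup factor $2^{t}=n^{\mu/\kappa}$. The space becomes $O(n^{1+(1+\mu)/\kappa}\cdot\beta\cdot\log^{2}n)$ and the passes become $O(n^{\rho+\mu/\kappa}\cdot\beta\cdot\log^{2}n)$, while the hopbound simplifies to $\beta=O\!\left(\frac{\kappa(\log(\rho\kappa)+1/\rho)}{\eps\cdot\mu}\right)^{\log(\rho\kappa)+O(1/\rho)}$, with $\mu^{-1}$ once more absorbed into the $O(\cdot)$ of the base. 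A rescaling $\kappa_{\mathrm{alg}}:=\kappa/(1+\mu)$ then brings the space back to the advertised $O(n^{1+1/\kappa}\cdot\beta\cdot\log^{2}n)$; the passes are unchanged up to an innocuous $(1+\mu)$ factor in the exponent of $n^{\mu/\kappa}$, which is absorbed by redefining $\mu$.

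I do not expect any conceptual obstacle here: the entire argument is an exercise in plugging numbers into an existing theorem and performing two one-parameter rescalings. The only point requiring a bit of care is the bookkeeping of which factors of $1+\mu$ or $1/\mu$ can be absorbed into which $O(\cdot)$ (inside the base versus inside the exponent), and the replacement of $\log\kappa$ by the tighter $\log(\rho\kappa)$; once these are tracked, both lines of the corollary fall out immediately.
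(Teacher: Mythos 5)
Your approach is exactly the paper's: specialize item~2 of Theorem~\ref{thm:stream_hop-alt1} with $t=\mu\rho\log n$ (resp.\ $t=\mu\log n/\kappa$), replace the loose $\log\kappa$ in the exponent by the actual phase count $\log(\rho\kappa)+O(1/\rho)$, and then do a one-parameter rescaling to absorb the $2^t$ overhead. The $\rho$-rescaling is stated correctly: running the algorithm with $\rho_{\mathrm{alg}}=\rho/(1+\mu)$ makes $n^{(1+\mu)\rho_{\mathrm{alg}}}=n^\rho$, and the price is $1/\rho_{\mathrm{alg}}=(1+\mu)/\rho$ in the exponent of $\beta$.

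The $\kappa$-rescaling, however, is in the wrong direction. If the algorithm runs with $\kappa_{\mathrm{alg}}=\kappa/(1+\mu)$ as you write, then from item~2 the path-reporting space becomes
\[
O\!\left(n^{1+\frac{1}{\kappa_{\mathrm{alg}}}}\cdot\beta\cdot 2^{t}\cdot\log^{2}n\right)
\;=\;
O\!\left(n^{1+\frac{(1+\mu)}{\kappa_{\mathrm{alg}}}}\cdot\beta\cdot\log^{2}n\right)
\;=\;
O\!\left(n^{1+\frac{(1+\mu)^{2}}{\kappa}}\cdot\beta\cdot\log^{2}n\right),
\]
which is strictly \emph{worse} than the advertised $O(n^{1+1/\kappa}\cdot\beta\cdot\log^{2}n)$, not equal to it. The correct choice is $\kappa_{\mathrm{alg}}=(1+\mu)\kappa$, so that $(1+\mu)/\kappa_{\mathrm{alg}}=1/\kappa$; equivalently, in the paper's convention one reports in terms of $\kappa'=\kappa_{\mathrm{alg}}/(1+\mu)$, which is what the paper's phrase ``rescale $\kappa'=\kappa/(1+\mu)$'' means (there $\kappa$ denotes the algorithm's parameter, the opposite of your naming). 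With the corrected direction the pass count only improves (since $\mu/\kappa_{\mathrm{alg}}<\mu/\kappa$), so you would not even need the step of redefining $\mu$ that you invoke at the end. Everything else in the proposal is correct and matches the paper's derivation.
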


\subsection{Implementation in the Congested Clique Model}

In this model, the computation of the nodes of the graphs $\{G_k\}_{k\in K}$ could be done in $O(\log n)$ rounds. As commonly accepted, we shall assume that any edge weight can be sent in a single message. Let $r\in V$ be a designated root. We shall imitate the Boruvka algorithm in order to build the nodes of $G_k$ -- in each iteration, for every node $U$, every vertex $u\in U$ will send to the root $r$ the lightest edge incident on it that leaves $U$. Then $r$ will locally compute the new nodes created by merging all of the reported edges, and will also take note of the finer structure of the laminar family $\cL$, emerging since the edge weights could be from different scales. The root will send to each vertex $x\in V$, the name of the center of the current node containing it, and the scale in which $x$ joined this node. Between iterations, every vertex will send the center of the node containing it to all of its neighbors, so that every vertex will know which edges are leaving its current node. Since after iteration $i$ the size of each node is at least $2^i$, after $\log n$ iterations the process ends. In every iteration we sent 2 messages over each edge, so the number of rounds required is $O(\log n)$.
At the end of the process, the root will compute and broadcast the lists $L(x)$ for each vertex $x\in V$ as in \sectionref{sec:reduce-stream}.
Each $L(x)$ consists of $O(\log n)$ words. This will require additional $O(\log n)$ rounds, which can be verified by noting that we can send $L(x)$ to $x$, and then all vertices $x$  will in parallel send their sets $L(x)$ to all other vertices.

Finally, the root will appoint a coordinator vertex $c(U)\in U$ for every node $U$, who will be in charge of communications for that node. We  require that every vertex participates at most once as a coordinator for a non-trivial node (a node of size $>1$). Note that when we create a new node $U$ by combining two others $X$, $Y$, we can maintain the property that there will be a vertex $u\in U$ who never was a coordinator -- this holds by induction for $x\in X$ and $y\in Y$, so we can simply set $x$ as the coordinator for $U$ and $u=y$ will be the "free" vertex. (We do not use the center as coordinator, since the same vertex can be a center in numerous scales.)

Similarly to the case of the streaming model, we run the hopset algorithm of \sectionref{sec:hopset_clique}  for all graphs $G_k$ in parallel. Let us review briefly how to implement each step in the algorithm. Recall that the main ingredients are Bellman-Ford explorations, in every iteration of which, every vertex sends its current estimate to all of its neighbors. In the graph $G_k$, for every node $U$ the coordinator $c(U)$ will send the appropriate estimate $\hat{d}$ to all the vertices in the graph (along with the  scale $k$ of the node $U$). Every vertex $u\in V$ that receives this message, sends to its coordinator at level $k$ the updated estimate $\hat{d}+\hat{\omega}$, where $\hat{\omega}$ is the shortest length of an edge (given by \eqref{eq:edge-set}) connecting $u$ to a vertex in $U$. (Recall that each vertex knows the entire laminar family $\cL$ of sets, and hence can compute locally the nodes of $G_k$.) The coordinator of each node will keep the shortest of these as the estimate for its node. Thus, for each step of Bellman-Ford, we need two rounds, in which all communication is over edges containing a coordinator.

We now analyze the required number of rounds. We charge the cost of each exploration step to the coordinators of the nodes. The point is that every vertex can be a coordinator in at most $\log(n/\eps)+2$ different scales $k\in K$, because once a node is active, after so many scales it must be merged with another node, which will necessarily have a different coordinator. We conclude that the load on any edge, arising from it participating in many different graphs, is only $O(\log n)$.
Hence  the number of rounds as a result of this simulation grows only by a factor of $O(\log n)$. Also recall that the aspect ratio of each $G_k$ is $O(n/\eps)$. By applying the single-scale versions of \theoremref{thm:impr_distr_hop} and Theorem \ref{thm:clique_path_rep_hopset}  on each $G_k$, we conclude with the following.

\begin{theorem}
\label{thm:congest-clique-alt}
For any graph $G = (V,E,\omega)$ with $n$ vertices, $2\le\kappa\le (\log n)/4$, $1/2 > \rho \ge 1/\kappa$, $1\le t\le\log n$, and $0 < \eps < 1/2$, our distributed algorithm for the Congested Clique model computes a $(\beta,\eps)$-hopset $H$ with expected size $O(n^{1+1/\kappa} \cdot \log n)$
in $O(n^\rho/\rho\cdot\log^3n\cdot\beta\cdot 2^t)$ rounds whp, with $\beta$ given by \eqref{eq:reduce_beta}.
For a path-reporting hopset, the number of rounds becomes larger by a factor of $\beta\cdot 2^t$.
\end{theorem}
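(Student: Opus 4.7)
\proof
The plan is to combine the aspect-ratio reduction of \sectionref{sec:reduction} with the Congested Clique single-scale hopset algorithm of \theoremref{thm:impr_distr_hop}. First, I would compute the node structure of every graph $G_k$, $k\in K$, together with the laminar family $\cL$, via a Boruvka-style procedure coordinated by a designated root $r\in V$. In each of the $O(\log n)$ Boruvka iterations, every vertex $u$ sends to $r$ the lightest edge leaving the current node containing it; the root then locally computes all the merges that occur across all scales (since a single reported edge determines, by its weight, at which scale $k$ the merge takes place), and broadcasts back to each $x\in V$ the list $L(x)$ encoding its sequence of ancestor centers in $\cL$. Since $|L(x)|=O(\log n)$ by \claimref{claim:active-nodes}, and since all $n$ lists can be pipelined across the clique in $O(\log n)$ rounds, this preprocessing costs $O(\log n)$ rounds. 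The set $S$ of in-node hopset edges can then be added locally by each node, at no further communication cost.

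Next, for every active node $U$ the root appoints a coordinator $c(U)\in U$, maintaining the invariant that no vertex ever coordinates two distinct non-trivial nodes (the merge rule ``promote the smaller-side center'' makes this straightforward, analogous to the proof of \claimref{claim:sizeS}). I would then run, in parallel for every relevant scale $k\in K$, the single-scale Congested Clique hopset algorithm of \theoremref{thm:impr_distr_hop} on $G_k$, exploiting the fact that each $G_k$ has aspect ratio $O(n/\eps)$ so that $\log\Lambda$ in (\ref{eq:impr_beta}) is replaced by $O(\log n)$, yielding the bound (\ref{eq:reduce_beta}) on $\beta$. Each Bellman-Ford step for a node $U$ at scale $k$ is simulated by two clique rounds: the coordinator $c(U)$ broadcasts its current estimate (tagged with $k$), every vertex $u\in V$ locally computes the cheapest edge weight $\hat{\omega}$ from $u$ to $U$ using \eqref{eq:edge-set} (which it can do because it knows $\cL$), and sends the updated estimate back to the appropriate coordinator at its own scale-$k$ node, which keeps the minimum.

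The main obstacle, and the reason the analysis works, is bounding the congestion incurred by running $|K|$ parallel copies of the algorithm on the clique. Here the key observation is that any given vertex $v$ serves as a coordinator at only $O(\log n)$ distinct scales: once $v$ becomes inactive at scale $k$ (i.e.\ its node is merged with a strictly larger one, which happens within $\log(n/\eps)+2$ scales of $v$'s first activation, exactly as in the proof of \claimref{claim:active-nodes}), $v$ is no longer the coordinator of any node containing it. Consequently, the total load on any clique edge, summed over all scales, is only an $O(\log n)$ factor larger than the load of a single invocation. Combining this with the $O(\log n)$ factor already present in \theoremref{thm:impr_distr_hop}, and multiplying by the simulation overhead of Bellman-Ford via coordinators, gives the stated $O(n^\rho/\rho \cdot \log^3 n \cdot \beta \cdot 2^t)$ round bound. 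The stretch and hopbound guarantees follow directly from the earlier lemmas: the nodes' distances in $G_k$ $(1+2\eps)$-approximate the $G$-distances by \claimref{claim:GK}, and each hopset edge inserted between nodes is installed between their centers, so the same path argument used in the size lemma above goes through verbatim, multiplying $\beta$ by a small constant and $\eps$ by a small constant, which can be absorbed by rescaling. The path-reporting variant is obtained, exactly as in \sectionref{sec:aware}, by transmitting underlying paths (of length $O(\beta\cdot 2^t)$) in place of scalar estimates, contributing the extra $\beta\cdot 2^t$ factor to the round count.
\QED
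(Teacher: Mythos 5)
Your proposal follows essentially the same route as the paper: Boruvka-style preprocessing at a designated root to construct the laminar family $\cL$ and broadcast the lists $L(x)$, appointing per-node coordinators so each vertex coordinates at most one non-trivial node, running the single-scale algorithm of Theorem \ref{thm:impr_distr_hop} in parallel over all $G_k$, and charging the $O(\log n)$ congestion overhead to the observation that any vertex coordinates an active node for at most $\log(n/\eps)+O(1)$ scales. The only (inconsequential) wrinkle is your attribution of the coordinator invariant to the "promote the smaller-side center" rule as being analogous to Claim \ref{claim:sizeS}: that claim's proof uses the \emph{larger}-side center as the new center, and the paper deliberately keeps coordinators distinct from centers via a "free vertex" bookkeeping; nevertheless, taking the smaller-side center as coordinator while the larger-side center remains the node center does maintain the required invariant, so your argument goes through.
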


To get a hopset with $\beta$ independent of $n$, we set $t = \log n^{\mu\rho}$, for an arbitrarily small constant $\mu > 0$. We then rescale $\rho' = (1+\mu)\rho$.
As a result, we obtain
\begin{equation}
\label{eq:beta_distr_impr}
\beta ~=~ O\left({1 \over {\eps \cdot \rho}} (\log \kappa + 1/\rho)\right)^{\log \kappa+ {{1 + \mu} \over \rho}}~.
\end{equation}
The $O$-notation in (\ref{eq:beta_distr_impr}) hides a (constant) factor of $1/\mu$ in the base of the exponent.
The number of rounds becomes, whp, $O(n^\rho/\rho \cdot \log^3 n \cdot \beta)$.
In the  path-reporting case, we set $t = \log n^{(\mu/2) \rho}$, and rescale in the same way as above.
As a result, the running time becomes $O(n^\rho/\rho \cdot \log^3 n \cdot \beta^2)$, whp. The hopbound and the hopset size are the same as in the not path-reporting case.

\begin{corollary}
\label{cor:congest_clique_reduction}
For any graph $G = (V,E,\omega)$ with $n$ vertices, $2\le\kappa\le (\log n)/4$, $1/2 > \rho \ge 1/\kappa$,  and $0 < \eps < 1/2$, and any constant $\mu > 0$, our distributed algorithm for the Congested Clique model computes a $(\beta,\eps)$-hopset $H$ with expected size $O(n^{1+1/\kappa} \cdot \log n)$
in $O(n^\rho/\rho\cdot\log^3n\cdot\beta)$ rounds whp, with $\beta$ given by \eqref{eq:beta_distr_impr}.
For a path-reporting hopset, the number of rounds is $O(n^\rho/\rho \cdot \log^3 n \cdot \beta^2)$, whp.
\end{corollary}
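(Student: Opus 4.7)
The plan is to derive the corollary from Theorem~\ref{thm:congest-clique-alt} by making a judicious choice of the free parameter $t$ and then rescaling $\rho$ to absorb a constant factor. First I would handle the non path-reporting case. Setting $t = \lfloor \mu\rho\log n \rfloor$ makes $2^t = n^{\mu\rho}$, so Theorem~\ref{thm:congest-clique-alt} gives, whp, a round complexity of
$$O(n^\rho/\rho\cdot\log^3 n\cdot\beta\cdot 2^t) ~=~ O(n^{(1+\mu)\rho}/\rho\cdot\log^3 n\cdot\beta),$$
while the hopbound from (\ref{eq:reduce_beta}) becomes
$$\beta ~=~ O\!\left(\frac{(\log\kappa+1/\rho)\log n}{\eps\cdot \mu\rho\log n}\right)^{\log\kappa+1/\rho} ~=~ O\!\left(\frac{\log\kappa+1/\rho}{\eps\cdot\rho}\right)^{\log\kappa+1/\rho},$$
where the factor $1/\mu$ is absorbed into the constant hidden in the $O$-notation of the base. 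Here I use the fact that $\mu$ is a fixed constant, so $\log n$ cancels cleanly and $1/\mu$ only affects constants.

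Next I would rescale $\rho' = (1+\mu)\rho$, equivalently $\rho = \rho'/(1+\mu)$. The round complexity becomes $O(n^{\rho'}/\rho'\cdot\log^3 n\cdot\beta)$ (the constant $1+\mu$ passes into the $O$-notation), which is exactly the bound claimed in the corollary. Substituting $1/\rho = (1+\mu)/\rho'$ into the exponent of $\beta$ gives
$$\beta ~=~ O\!\left(\frac{\log\kappa+1/\rho'}{\eps\cdot\rho'}\right)^{\log\kappa+(1+\mu)/\rho'},$$
matching (\ref{eq:beta_distr_impr}) after renaming $\rho'$ back to $\rho$. The expected hopset size $O(n^{1+1/\kappa}\log n)$ carries over unchanged from Theorem~\ref{thm:congest-clique-alt} since neither $\kappa$ nor $n$ is being rescaled.

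For the path-reporting version, Theorem~\ref{thm:congest-clique-alt} states that the round complexity grows by an extra factor of $\beta\cdot 2^t$, so it is $O(n^\rho/\rho\cdot\log^3 n\cdot \beta^2\cdot 2^{2t})$. To keep the $n^\rho$-dependence under control while still ending up with $\beta$ independent of $n$, I would choose $t = \lfloor(\mu/2)\rho\log n\rfloor$, which makes $2^{2t} = n^{\mu\rho}$. The computation then proceeds identically: the round bound becomes $O(n^{(1+\mu)\rho}/\rho\cdot\log^3 n\cdot\beta^2)$, and the rescaling $\rho'=(1+\mu)\rho$ yields $O(n^{\rho'}/\rho'\cdot\log^3 n\cdot\beta^2)$. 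The hopbound $\beta$ simplifies in the same way (now with a factor of $2/\mu$ absorbed into the $O$-constant in the base) and ends up in the form (\ref{eq:beta_distr_impr}).

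There is no real technical obstacle: the argument is a parameter-tuning exercise, and the proofs of stretch, size, and round complexity were already done in Theorem~\ref{thm:congest-clique-alt}. The only thing to be careful about is bookkeeping the two places where $\mu$ enters—once as a constant $1/\mu$ or $2/\mu$ multiplier hidden inside the base of the exponent of $\beta$, and once as the $1+\mu$ that shows up in the exponent of $\beta$ (via the rescaling $\rho \mapsto \rho/(1+\mu)$)—and to verify that $t$ is a legal parameter, i.e., $1 \le t \le \log n$. The latter holds since $\rho \ge 1/\kappa \ge 4/\log n$, so $\mu\rho\log n \ge 4\mu = \Omega(1)$, and trivially $t \le \log n$.
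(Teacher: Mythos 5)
Your proof is correct and follows exactly the same approach as the paper: set $t$ so that $2^t = n^{\mu\rho}$ (respectively $2^{2t} = n^{\mu\rho}$ for the path-reporting case), then rescale $\rho' = (1+\mu)\rho$, absorbing the $\mu$-dependent constants into the base of the exponent of $\beta$. Your version is in fact somewhat more carefully spelled out than the paper's two-line derivation, including the explicit check that $t$ lies in the legal range $[1,\log n]$.
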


\subsection{Implementation in the CONGEST Model}

In this model we are given a "virtual" graph $\tG=(\tV,\tE,\tomega)$,  where $\tV\subseteq V$, $|\tV| = m$, on which we wish to compute a hopset. For constructing the nodes of the graphs $G_k$ we do essentially the same as we did in the Congested Clique model, with two small differences: Replace every message sent to/from the root, by a broadcast to every vertex in the graph, and also we do not need each vertex to notify its neighbors of its current node, since this information was sent to the entire graph already. Denote by $\tV_k$ their vertex sets. For every scale $k$,
in each iteration of the Bellman-Ford algorithm in $\tG_k$, we send $O(|\tV_k|)$ messages. So the total number of messages sent is
$M=O(\sum_{k \in K} |\tV_k|) = O(|\tV|  \log m) = O(m \cdot \log m)$. These $M$ messages  can be convergecasted and broadcasted over the BFS tree of $G$ in $O(M+D) = O(D + m \cdot \log m)$ rounds, where $D$ is the hop-diameter of $G$.

The hopset algorithm from \sectionref{sec:congest} will require a subtle modification: When a coordinator broadcasts a distance estimate $\hat{d}$, we cannot afford to have every vertex in every node notify its coordinator of its own estimate. Rather than that, we convergecast the information on the global BFS tree, while forwarding at most one message per node. More formally, for every node $U$ that has a vertex who received a distance estimate from the coordinator of some other node $U'$, only a single message will be sent up in the BFS tree -- the one with minimal estimate. Then the root will broadcast the updated distance estimates to all the coordinators. If the total number of distance updates required is $M$, it will require  only $O(M+D)$ rounds, see, e.g., \cite[Lemma~3.4.6]{P00}.

By Theorems \ref{thm:congest} and \ref{thm:hop_aware_small_beta}, the number of rounds required for computing  all $t$ hopsets of a given group of  scales is
$O((D + m^{1+\rho} \cdot \log m \cdot t) \beta/\rho \cdot 2^t )$ in the not path-reporting case, and is
$O((D + m^{1+ \rho} \cdot \log m \cdot t \cdot \beta \cdot 2^t) \beta/\rho \cdot 2^t )$ in the path-reporting one. (Both bounds are whp.)
To get a hopset for all scales, this expression was multiplied by the number of groups, i.e., $\lceil {{\log \Lambda} \over t} \rceil$.
When computing the hopsets $H_k$ of $G_k$, the number of groups is $\lceil {{\log O(m/\eps)} \over t} \rceil = O({{\log m} \over t})$, because the aspect ratio of each $G_k$ is $O(m/\eps)$. The number of messages convergecasted and broadcasted over the BFS tree $\tau$ of the entire network in each  iteration of the Bellman-Ford algorithm (which is executed now in parallel in all graphs $\tG_k$) is $\sum_{k \in K} O(m_k^{1+ \rho} \log m_k \cdot t) = O(m \cdot \log^2 m \cdot t)$, whp, in the not path-reporting case, and is $\sum_{k \in K} O(m_k^{1+\rho} \cdot \log m_k \cdot t \cdot \beta \cdot 2^t) = O(m \cdot \log^2 m \cdot t \cdot \beta \cdot 2^t)$, whp, in the path-reporting one.

We summarize this discussion with the following theorem.

\begin{theorem}
\label{thm:congest-alt}
For any graph $G = (V,E)$ with hop-diameter $D$, and any $m$-vertex weighted graph $\tG= (\tV,\tE,\tomega)$ embedded in $G$,  and
any $2\le\kappa\le (\log m)/4$, $1/2 > \rho \ge 1/\kappa$, $1\le t\le\log m$, $0 < \eps < 1/2$,  our distributed algorithm in the CONGEST model computes a $(\beta,\eps)$-hopset $H$ for $\tG$ with expected size $O(m^{1+1/\kappa} \cdot \log m)$,
in $O((D+ m^{1+\rho} \cdot \log^2 m\cdot t)\cdot\beta/\rho\cdot \log m\cdot 2^t/t)$ rounds whp, with $\beta$ given by \eqref{eq:reduce_beta}, with $n$ replaced by $m$.
For a path-reporting hopset, the number of rounds becomes $O((D+ m^{1+\rho} \cdot \log^2 m\cdot t\cdot\beta\cdot 2^t) \cdot \beta/\rho \cdot\log m\cdot 2^t/t)$.
\end{theorem}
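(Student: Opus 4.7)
\proof
The proof adapts the reduction of \sectionref{sec:reduction} to the CONGEST model, combining the approach used in the proof of \theoremref{thm:congest-clique-alt} with the BFS-tree-based message routing employed in \sectionref{sec:congest} (proof of \theoremref{thm:congest}). The plan is to first build the laminar family $\cL$ of node sets of the graphs $\{\tG_k\}_{k \in K}$, and then run the hopset algorithm on all of these graphs in parallel, routing all communication through a BFS tree $\tau$ of the backbone network $G$.

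\emph{Constructing the laminar family.} First compute a BFS tree $\tau$ of $G$ in $O(D)$ rounds. Following the Boruvka-style procedure used in the proof of \theoremref{thm:congest-clique-alt}, we repeatedly merge nodes by having every currently active node report its lightest outgoing $\tE$-edge toward a designated root $r$ of $\tau$, and having $r$ locally compute the merges and the corresponding additions to $\cL$ at each scale $k \in K$. Each Boruvka iteration is implemented in $O(D + m)$ rounds via pipelined convergecast and broadcast along $\tau$, and $O(\log m)$ iterations suffice. Once $\cL$ is known at $r$, the lists $L(x)$ (each of size $O(\log m)$) are broadcasted to all vertices of $\tV$ in $O(D + m \log m)$ additional rounds. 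Each node $U$ is assigned both a center and a coordinator vertex $c(U)$, chosen so that every vertex serves as a coordinator of a non-trivial node at most once per scale, hence at most $O(\log m)$ times in total across all $k \in K$ (since a node is active only in $O(\log(m/\eps))$ consecutive scales by \claimref{claim:active-nodes}).

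\emph{Running the hopset algorithm in parallel.} We execute the CONGEST algorithm of \sectionref{sec:congest} (or its path-reporting variant from \sectionref{sec:aware}) on all graphs $\tG_k$ simultaneously, grouping the $O(\log m)$ relevant scales into $O((\log m)/t)$ groups of $t$ consecutive scales, exactly as in \theoremref{thm:congest} and \theoremref{thm:hop_aware_small_beta}. Within a group of $t$ scales, Bellman-Ford explorations of length $O(\beta \cdot 2^t)$ are run over $E' \cup H^{(k_1)}$ where $k_1$ is the smallest scale in the group. The crucial modification is that, rather than having each vertex of a node directly communicate with its coordinator, only a single message per node and per Bellman-Ford step is forwarded up $\tau$, namely the minimum among the distance estimates reported by its vertices. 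The coordinator then receives the combined update from $r$ via pipelined broadcast. Since each vertex serves as a coordinator of non-trivial nodes in only $O(\log m)$ scales, the total number of distinct distance updates per Bellman-Ford step, summed over all scales, is at most $\sum_{k \in K} O(m_k^{1+\rho} \log m_k \cdot t) = O(m \log^2 m \cdot t)$ whp (using \lemmaref{lm:property} as in the single-scale CONGEST analysis). Such a batch of updates can be convergecasted and broadcasted on $\tau$ in $O(D + m^{1+\rho} \log^2 m \cdot t)$ rounds.

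\emph{Counting rounds and handling path-reporting.} Multiplying the per-iteration cost $O(D + m^{1+\rho} \log^2 m \cdot t)$ by the $O(\beta \cdot 2^t / \rho)$ Bellman-Ford iterations needed for a single group (analogously to the count in \theoremref{thm:congest}) and by the $O((\log m)/t)$ groups yields the claimed bound
\[
O\!\left((D + m^{1+\rho} \log^2 m \cdot t) \cdot \beta/\rho \cdot \log m \cdot 2^t/t\right)~.
\]
Since each $\tG_k$ has aspect ratio $O(m/\eps)$, replacing $\log \Lambda$ by $\log m$ in the expression for $\beta$ yields \eqref{eq:reduce_beta} with $n \mapsto m$. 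For the path-reporting variant, as in the proof of \theoremref{thm:hop_aware_small_beta}, each Bellman-Ford message carries the full implementing path of up to $O(\beta \cdot 2^t)$ edges, which multiplies the message load per iteration, and hence the round count, by that factor. Stretch, hopbound and expected hopset size follow directly from the correctness of the per-scale construction (\theoremref{thm:congest}/\theoremref{thm:hop_aware_small_beta}) applied to each $\tG_k$, together with \claimref{claim:GK} and \claimref{claim:active-nodes}, exactly as in the centralized reduction. The main technical step here is the careful bookkeeping of the per-iteration message load across all scales; everything else is a direct composition of the reduction of \sectionref{sec:reduction} with the CONGEST single-scale algorithms.
\QED
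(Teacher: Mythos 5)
Your proof takes essentially the same approach as the paper's: build the laminar family by a Boruvka-style procedure routed through a BFS tree, then run the Congested-Clique-style hopset construction on all graphs $\tG_k$ in parallel, aggregating a single minimum estimate per node and pipelining all updates over the BFS tree, and bound rounds by summing per-iteration message load across the $O(\log m)$ relevant scales using Claim~\ref{claim:active-nodes}. The step-by-step accounting (per-iteration cost times $O(\beta\cdot 2^t/\rho)$ iterations times $O((\log m)/t)$ groups) and the path-reporting overhead of $\beta\cdot 2^t$ match the paper's argument.
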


To get $\beta$ independent of $m$, we set $t = \log m^{\mu \rho}$.
As a result we get
\begin{equation}
\label{eq:beta_congest_const}
\beta ~=~ O\left({1 \over {\eps \rho}} (\log \kappa+ 1/\rho)\right)^{\log \kappa+ {{1} \over \rho}}~.
\end{equation}
The running time is $O((D + m^{1+ \rho} \cdot \log^3 m \cdot \rho) \cdot \beta/\rho^2 \cdot m^{\mu \rho})$, whp, in the not path-reporting case.

In the path-reporting case we get time
$O((D + m^{\rho(1 + \mu)} \cdot \log^3 m \cdot \rho \cdot \beta) \cdot {\beta \over {\rho^2}} \cdot m^{\mu \rho})$, whp. By rescaling $\rho' = \rho(1+ \mu)$, we get
\begin{equation}
\label{eq:beta_congest_const_pr}
\beta ~=~ O\left({1 \over {\eps \rho}} (\log \kappa+ 1/\rho)\right)^{\log \kappa+ {{1+\mu} \over \rho}}~,
\end{equation}
and the running time is
$O((D + m^\rho \cdot \log^3 m \cdot \rho \cdot \beta) \cdot {\beta \over {\rho^2}} \cdot m^{\mu \rho})$, whp.

\begin{corollary}
\label{cor:congest_hopset_reduction}
For any graph $G = (V,E)$ with hop-diameter $D$,
and any $m$-vertex weighted graph $\tG= (\tV,\tE,\tomega)$ embedded in $G$,
any
 $2\le\kappa\le (\log m)/4$, $1/2 > \rho \ge 1/\kappa$,  and $0 < \eps < 1/2$, and any constant $\mu > 0$,   and
our distributed algorithm for the CONGEST model computes a $(\beta,\eps)$-hopset $H$ for $\tG$ with expected size $O(m^{1+1/\kappa} \cdot \log m)$,
in $O((D+ m^{1+\rho} \cdot \log^3 m \cdot \rho)\cdot\beta/\rho^2  \cdot m^{\mu\rho})$ rounds whp, with $\beta$ given by \eqref{eq:beta_congest_const}.
For a path-reporting hopset, $\beta$ is given by (\ref{eq:beta_congest_const_pr}),
the number of rounds becomes $O((D+ m^{1+\rho} \cdot \log^3 m\cdot \beta \cdot \rho) \cdot m^{\rho \mu} \cdot \beta/\rho^2)$.
\end{corollary}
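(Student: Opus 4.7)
The plan is to derive the corollary by instantiating \theoremref{thm:congest-alt} with a carefully chosen value of the parameter $t$. Specifically, I would set $t = \log(m^{\mu\rho}) = \mu\rho\log m$. With this choice, in the formula for $\beta$ from \eqref{eq:reduce_beta} (with $n$ replaced by $m$), the $\log m$ factor in the base of the exponent is exactly cancelled by the $t$ in the denominator, up to a multiplicative constant $1/\mu$ that gets absorbed inside the $O(\cdot)$ hidden in the base. This yields the hopbound claimed in \eqref{eq:beta_congest_const}, which is indeed independent of $m$. The restriction $t \le \log m$ required by \theoremref{thm:congest-alt} is automatic since $\mu\rho < 1$.

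For the running-time bound in the non-path-reporting case, I would substitute the same $t$ into the expression $O((D+ m^{1+\rho}\cdot \log^2 m\cdot t)\cdot \beta/\rho\cdot \log m\cdot 2^t/t)$ from \theoremref{thm:congest-alt}. Since $2^t = m^{\mu\rho}$, and $\log m \cdot t = \rho\mu \log^2 m$, the product simplifies to $O((D + m^{1+\rho}\cdot \log^3 m \cdot \rho) \cdot \beta/\rho^2 \cdot m^{\mu\rho})$, matching the stated bound. The factor $2^t/t$ contributes $m^{\mu\rho}/(\mu\rho\log m)$, where the extra $1/(\mu\log m)$ factors get absorbed into the stated $\log^3 m$ and constant factors.

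For the path-reporting case, I would apply the path-reporting running-time bound of \theoremref{thm:congest-alt}, namely $O((D+ m^{1+\rho} \cdot \log^2 m\cdot t\cdot\beta\cdot 2^t) \cdot \beta/\rho \cdot\log m\cdot 2^t/t)$. Here, however, the additional $\beta\cdot 2^t$ factor inside the parentheses means we must be more parsimonious with $t$. Setting $t = \log m^{\mu\rho}$ produces an $m^{\mu\rho}$-term multiplied by $\beta$ in the parenthesized factor. To keep the running time of the form stated, I would rescale $\rho' = \rho(1+\mu)$ (i.e., originally chosen $\rho$ is replaced by a slightly smaller value), so that the $m^{\rho(1+\mu)}$ term becomes $m^{\rho'}$ in the stated bound; the exponent $\log \kappa + 1/\rho$ in $\beta$ then picks up a $(1+\mu)/\rho$ term, yielding \eqref{eq:beta_congest_const_pr}.

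The main obstacle is bookkeeping, not substance: one must track how $t = \mu\rho\log m$ and the rescaling $\rho' = \rho(1+\mu)$ propagate through both the base and the exponent of $\beta$, and verify that all constants (in particular the $1/\mu$ dependence in the base) can be absorbed into the $O$-notation. The whp guarantees transfer directly from \theoremref{thm:congest-alt}, since the only probabilistic events are those governing the single-scale hopset constructions and the claim that each vertex hosts at most $\tO(\deg_i)$ concurrent Bellman-Ford explorations, neither of which is affected by the choice of $t$.
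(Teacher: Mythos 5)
Your proposal is correct and follows essentially the same route as the paper: substitute $t = \mu\rho\log m$ into Theorem~\ref{thm:congest-alt}, observe that the $\log m$ in the base of $\beta$ cancels against $t$ (absorbing $1/\mu$ into the $O$-notation), and in the path-reporting case rescale $\rho' = \rho(1+\mu)$ to absorb the extra $m^{\mu\rho}$ factor from the $\beta\cdot 2^t$ term inside the parentheses, which is exactly what introduces the $(1+\mu)/\rho$ in the exponent of \eqref{eq:beta_congest_const_pr}.
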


\subsection{PRAM Model}
\label{sec:pram_reduction}

Klein and Sairam \cite{KS93}  (see also \cite{C97}) showed  that the graphs $G_1,G_2,\ldots,G_\lambda$, $\lambda=\lceil\log\Lambda\rceil$, can be computed in EREW PRAM model in $O(\log^2 n)$ time, using $O(|E|)$ processors. We next compute hopsets $H_k$, for all $k \in K$, in parallel. The overall expected size of the resulting hopset is, by Theorem \ref{thm:pram_gen_hopset},
$\Expect(|H|) = \sum_{k \in K} n_k^{1+ 1/\kappa} = O(n^{1+ 1/\kappa} \cdot \log n)$.
The aspect ratio of each graph $G_k$ is $O(n/\eps)$, and thus, the number of processors used is
$$
O(\sum_{k \in K} (|E(G_k)| + n_k^{1+1/\kappa} \cdot \log^2 n) \cdot n^\rho \cdot \log n \cdot t) ~=~ O(n^\rho \cdot \log n \cdot t \cdot   (|E| \cdot \log n + n^{1+1/\kappa} \cdot \log^3 n))~.$$
To summarize:
\begin{theorem}
\label{thm:pram_reduction}
For any $n$-vertex graph $G = (V,E,\omega)$ of diameter $\Lambda$, any $2\le\kappa\le(\log n)/4$, $1/2 > \rho \ge 1/\kappa$,  $0 < \eps \le 1$, and any $1 \le t \le \log \Lambda$,
our parallel  algorithm computes a $(\beta,\eps)$-hopset with $O(n^{1+1/\kappa} \cdot \log n)$ edges in expectation, and with $\beta$ given by
$$\beta ~= ~ O\left({{(\log \kappa + 1/\rho) \cdot \log n} \over {\eps \cdot t}}\right)^{\log\kappa +1/\rho}~,$$
in $O(\beta \cdot (\log \kappa+ 1/\rho) \cdot \log^2 n \cdot {{2^t} \over t})$ EREW PRAM time, using
$O((|E| + n^{1+1/\kappa} \cdot \log^2 n) \cdot n^\rho \cdot \log^2 n \cdot t)$ processors, whp.
\end{theorem}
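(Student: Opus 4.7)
The plan is to combine the aspect-ratio reduction developed earlier in Section \ref{sec:reduction} with the single-scale PRAM hopset builder of \theoremref{thm:pram_gen_hopset}. The overall structure is a two-stage pipeline: first prepare the contracted graphs $\{G_k\}_{k \in K}$ (and the laminar family $\cL$ recording which node of $G_k$ contains each original vertex), then invoke the PRAM hopset construction in parallel on each $G_k$, gathering the outputs into $H = S \cup \bigcup_{k\in K} H_k$. Correctness (stretch $1+\eps$, after rescaling $\eps$ by a constant, and hop-bound $O(\beta)$) is inherited directly from the analysis of Section \ref{sec:reduction}; the only new content is the resource accounting.

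For the first stage I would appeal to the Klein–Sairam EREW PRAM procedure (\cite{KS93}), which, given the edge-sorted structure of $G$, produces all $\lambda=\lceil\log\Lambda\rceil$ contractions in $O(\log^2 n)$ parallel time using $O(|E|)$ processors. Along the way I would also emit $S$ (Claim \ref{claim:sizeS} bounds its size by $n\log n$), so $S$ is essentially free. For the second stage, I would run \theoremref{thm:pram_gen_hopset} on each graph $G_k$ independently and in parallel. The key observation is that every $G_k$ has aspect ratio only $O(n/\eps)$, so the $\log\Lambda$ factor in Theorem \ref{thm:pram_gen_hopset} collapses to $O(\log n)$, and since we only need a single-scale hopset per $G_k$, the parallel time per instance is $O(\beta\cdot(\log\kappa+1/\rho)\cdot\log n\cdot 2^t/t)$. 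Because the instances share no dependencies, parallel time stays at this value across all scales; only the processor count accumulates.

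The expected size of $H$ is $\sum_{k\in K} O(n_k^{1+1/\kappa}) + O(n\log n) = O(n^{1+1/\kappa}\log n)$, using the laminar-family bound $\sum_{k\in K} n_k = O(n\log n)$ of \claimref{claim:active-nodes}. For the processor count, summing the per-instance requirement of \theoremref{thm:pram_gen_hopset} (with $\log\Lambda$ replaced by $\log n$) yields
\[
 O\!\left(\sum_{k\in K}\bigl(|E(G_k)| + n_k^{1+1/\kappa}\cdot\log^2 n\bigr)\cdot n^\rho\cdot\log n\cdot t\right),
\]
and since every edge of $G$ appears in at most $O(\log n)$ relevant scales (each scale multiplies the weight range by $2$, while an edge is relevant only while its weight falls in $[(\eps/n)2^k,2^{k+2}]$), we get $\sum_{k\in K}|E(G_k)| = O(|E|\log n)$, which together with $\sum_{k\in K} n_k^{1+1/\kappa} \le n^{1/\kappa}\sum_{k\in K} n_k = O(n^{1+1/\kappa}\log n)$ gives the claimed processor bound $O((|E| + n^{1+1/\kappa}\log^2 n)\cdot n^\rho\log^2 n\cdot t)$.

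The main obstacle I anticipate is not conceptual but bookkeeping in the EREW model: the per-iteration Bellman–Ford step inside each $G_k$ must read $\td(u)$ concurrently at all incident edge-processors, which incurs the $O(\log n)$ broadcasting overhead used implicitly in \theoremref{thm:pram_gen_hopset}; and the shared laminar family $\cL$ (needed so each edge-processor can look up which node currently contains each vertex at scale $k$) must be laid out so that the per-scale lookups do not contend. Both issues are addressed by standard EREW broadcasting and by replicating $\cL$ with the $t$-way parallelism over scale groups that was already charged into the processor count; no additional asymptotic overhead arises. Once this is handled, the stretch and hop-bound claims follow from the correctness of each single-scale construction.
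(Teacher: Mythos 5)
Your approach is the paper's own: the Klein--Sairam EREW reduction to produce the contracted graphs $\{G_k\}_{k \in K}$ and the laminar family $\cL$ (plus the auxiliary edge set $S$), followed by invoking \theoremref{thm:pram_gen_hopset} in parallel over all relevant scales $k \in K$, then taking the union. The size and processor accounting you give --- using $\sum_{k\in K} n_k = O(n \log n)$ from \claimref{claim:active-nodes} and $\sum_{k\in K}|E(G_k)| = O(|E| \log n)$ --- matches the paper's derivation.

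However, your parallel-time derivation contains a gap. You assert that ``since we only need a single-scale hopset per $G_k$, the parallel time per instance is $O(\beta\cdot(\log\kappa+1/\rho)\cdot\log n\cdot 2^t/t)$.'' This is short by a $\log n$ factor. In the PRAM construction, the hopsets inside each $G_k$ are built sequentially, bottom-up, in groups of $t$ consecutive scales: computing $H_j$ requires bounded-hop Bellman--Ford over $G_k \cup H^{(j-1)}$, where $H^{(j-1)}$ is the union of all lower-scale hopsets already built, so the top-scale hopset cannot be produced without first running the entire pipeline. Since $G_k$ has aspect ratio $O(n/\eps)$, the pipeline has $\Theta(\log n/t)$ groups, each costing $O(\beta\cdot(\log\kappa+1/\rho)\cdot\log n\cdot 2^t)$ EREW time, for a per-instance total of $O(\beta\cdot(\log\kappa+1/\rho)\cdot\log^2 n\cdot 2^t/t)$. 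This is exactly why \theoremref{thm:pram_gen_hopset} states that a single-scale hopset can be computed using ``the same resources'' as the full multi-scale hopset rather than a $1/\log\Lambda$ fraction thereof: the dominant cost is the bottom-up accumulation, and asking for only the final scale does not bypass it. The theorem's stated bound, with $\log^2 n$, is the correct one; your per-instance estimate would give a strictly stronger result that the construction does not in fact deliver.
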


In particular, by setting $t = 1$ we get
\begin{equation}
\label{eq:pram_beta_reduction}
\beta ~=~ O\left({{(\log \kappa + 1/\rho) \cdot \log n} \over \eps}\right)^{\log \kappa + 1/\rho}~,
\end{equation}
in $O(\beta \cdot (\log \kappa + 1/\rho) \cdot \log^2 n)$ EREW PRAM time, using $O((|E| + n^{1+1/\kappa} \cdot \log^2 n) \cdot n^\rho \cdot \log^2 n)$ processors.
One can also set $t = \log n^\zeta$, for a parameter $\zeta > 0$, and get
$O(\beta \cdot (\log \kappa + 1/\rho) \cdot \log n \cdot n^\zeta/\zeta)$ time, with
$$\beta = O\left({{\log \kappa + 1/\rho} \over {\eps \cdot \zeta}} \right)^{\log \kappa + 1/\rho}~.$$
Note that $\beta$ becomes independent of $n$, i.e., it is constant whenever $\rho$, $\eps$, $\zeta$ and $1/\kappa$ are.
So one can compute a $(\beta,\eps)$-hopset, with arbitrarily small constant $\eps > 0$, in EREW PRAM time $O(n^\zeta)$, for an arbitrarily small constant $\zeta > 0$,
using $O((|E|  + n^{1+ 1/\kappa} \cdot \log^2 n) \cdot n^\rho \cdot \log^3 n)$ processors, for arbitrarily small constants $\rho,1/\kappa > 0$, and still have a constant hopbound  $\beta$.

\section{Applications}
\label{sec:apps}

In this section we describe applications of our improved constructions of hopsets to computing approximate shortest paths for a set $S \times V$ of vertex pairs, for a subset $S \subseteq V$ of designated sources.

\subsection{Congested Clique Model}

Let $G=(V,E,\omega)$ be a weighted graph with $n$ vertices, $0<\eps<1/2$, and let $S\subseteq V$ be a set of $s=|S|$ sources. In order to compute shortest paths from every vertex in $S$ to every vertex in $V$, we first apply \corollaryref{cor:congest_clique_reduction} to the graph with parameters $\kappa= \log_sn$ and $\rho=1/\kappa$ (we assume $s\ge 16$ for the bound on $\kappa$ to hold). We use $\mu = 0.01$. This yields a $(\beta,\eps)$-hopset $H$ with $\beta=O((\log_sn)/\eps)^{2.02\cdot\log_sn}$. Now each of the $S$ sources (in parallel)  conducts $\beta$ iterations of Bellman-Ford exploration in $G\cup H$, and as a result obtains $1+\eps$ approximate distance estimations to all other vertices.

The number of rounds required to compute the hopset is whp \\
$O(n^\rho\cdot\log^4n\cdot\beta)=s\cdot O((\log_sn)/\eps)^{2.02\cdot\log_sn}\cdot\log^4n$, and the number of rounds to conduct $s$ Bellman-Ford explorations to range $\beta$ is at most $O(s\cdot\beta)$ (see, e.g., \sectionref{sec:hopset_clique}). We conclude that the total number of rounds is $s\cdot O((\log_sn)/\eps)^{2.02\cdot\log_sn}\cdot\log^4n$.
In the case that $s=n^{\Omega(1)}$, we can in fact set $\kappa=\log_{s/\lg^4 n}n$, which yields $\beta=(1/\eps)^{O(1)}$, and the number of rounds will be essentially linear in $s$, specifically,  $s\cdot(1/\eps)^{O(1)}$.
In the case that $s\le 2^{\sqrt{\log n\log\log n}}$, it is more beneficial to choose $\kappa=\sqrt{{{\log n} \over {\log\log n}}}$. This  yields
\begin{equation}
\label{eq:beta_appl}
\beta=(1/\eps)^{O(\sqrt{{\log n} \over {\log\log n}})} \cdot 2^{O(\sqrt{\log n \cdot \log\log n})}~,
\end{equation}
 in $\tO(n^\rho \cdot \beta) = O(1/\eps)^{\sqrt{{\log n} \over {\log\log n}}} \cdot 2^{O(\sqrt{\log n \log\log n})}$ rounds.

If one is interested in the actual paths, rather than just distances, the we employ our path-reporting variant of hopsets.  This increases the number of rounds by an additional factor of $\beta$.

\begin{theorem}\label{thm:CC-paths}
For any graph $G=(V,E,\omega)$ with $n$ vertices, a parameter $0<\eps<1/2$, and $S\subseteq V$ of size $s$, there is a  algorithm in the Congested Clique model, that whp computes $(1+\eps)$-approximate $S\times V$ shortest paths in $s\cdot O((\log_sn)/\eps)^{2.02\cdot\log_sn}\cdot\log^4n$ rounds. In the case  $s=n^{\Omega(1)}$ we can achieve  $s\cdot(1/\eps)^{O(1)}$ rounds, and in the case $s\le 2^{\sqrt{\log n \log\log n}}$ the number of rounds can be made
 $s \cdot (1/\eps)^{O(\sqrt{{\log n} \over {\log\log n}})} \cdot 2^{O(\sqrt{\log n \log\log n})}$.
\end{theorem}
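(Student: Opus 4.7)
The plan is to first invoke Corollary \ref{cor:congest_clique_reduction} (in its path-reporting variant, which costs an extra factor of $\beta$ in round complexity) to build a path-reporting $(\beta,\eps')$-hopset $H$ for $G$ (with $\eps'$ a constant rescaling of $\eps$), and then, in parallel for every source $r\in S$, conduct a $(2\beta+1)$-limited Bellman-Ford exploration in $G\cup H$ from $r$ to obtain $(1+\eps)$-approximate distances and paths from $r$ to all of $V$. The correctness follows immediately from the hopset property: for any $r\in S$ and $v\in V$, we have $d_G(r,v)\le d^{(\beta)}_{G\cup H}(r,v)\le (1+\eps)d_G(r,v)$, and the path-reporting property lets us trace an actual path of length at most $\beta\cdot 2^{t}$ edges of $G$ implementing each hopset edge.

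Next I would tune the parameters in three separate regimes. In the generic regime set $\kappa=\log_s n$ and $\rho=1/\kappa=\log_n s$, with $\mu=0.01$; plugging into \eqref{eq:beta_distr_impr} gives
\[
\beta ~=~ O\left(\frac{\log\kappa+1/\rho}{\eps\cdot\rho}\right)^{\log\kappa+\frac{1+\mu}{\rho}}
~=~ O\bigl((\log_s n)/\eps\bigr)^{2.02\cdot\log_s n}.
\]
The hopset construction takes $O(n^\rho/\rho\cdot\log^3 n\cdot\beta^2)$ rounds whp (path-reporting case), and since $n^\rho = s$, this is at most $s\cdot O((\log_s n)/\eps)^{2.02\log_s n}\cdot\log^4 n$. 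The $s$ parallel Bellman-Ford explorations can be executed in $O(s\cdot\beta)$ rounds by letting each source pipeline its $\beta$ rounds of distance updates through the clique; this is dominated by the hopset-construction cost.

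For the regime $s=n^{\Omega(1)}$, I would instead set $\kappa=\log_{s/\log^4 n}n$ and $\rho=1/\kappa$; then $\log\kappa$ and $1/\rho$ are both $O(1)$, yielding $\beta=(1/\eps)^{O(1)}$ and an overall round count of $s\cdot(1/\eps)^{O(1)}$ (the logarithmic factors are absorbed by the choice of $\kappa$). For the regime $s\le 2^{\sqrt{\log n\log\log n}}$, I would take $\kappa=\rho^{-1}=\sqrt{\log n/\log\log n}$, which yields $\beta$ as in \eqref{eq:beta_appl}, and the round bound $\tO(n^\rho\cdot\beta^2)$ simplifies (after absorbing polylog factors into the $2^{O(\sqrt{\log n\log\log n})}$ term) to $s\cdot(1/\eps)^{O(\sqrt{\log n/\log\log n})}\cdot 2^{O(\sqrt{\log n\log\log n})}$.

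The main obstacle here is not correctness, which is essentially a black-box consequence of Corollary \ref{cor:congest_clique_reduction}, but verifying that the $s$ simultaneous Bellman-Ford explorations can indeed be carried out within the claimed round bound. The key point is that, as in the superclustering/interconnection analysis of Section \ref{sec:hopset_clique}, each vertex needs to maintain and broadcast at most $\tilde O(s)$ distance estimates (one per source) per Bellman-Ford step; in the Congested Clique each vertex can send a distinct message to every other vertex in a single round, so a single Bellman-Ford step costs $O(s)$ rounds and the $\beta$ steps cost $O(s\cdot\beta)$ rounds in total, which matches the bound we want. A second minor point to check is that applying $\eps$-rescaling at the very end (to compensate for the $(1+\eps)$ slack introduced by the hopset) affects only the constants in the $(1/\eps)^{O(\cdot)}$ factors, so the stated round bounds are preserved.
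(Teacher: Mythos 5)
Your high-level approach is identical to the paper's: build a hopset via Corollary \ref{cor:congest_clique_reduction} with $\kappa=\log_s n$, $\rho=1/\kappa$, $\mu=0.01$ (and the same alternative settings $\kappa=\log_{s/\log^4 n}n$ for $s=n^{\Omega(1)}$ and $\kappa=\sqrt{\log n/\log\log n}$ for small $s$), then run $s$ parallel Bellman-Ford explorations of depth $\beta$ over $G\cup H$. The argument that each Bellman-Ford step costs $O(s)$ rounds and that these explorations are dominated by the hopset-construction time is also the same.

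However, there is a genuine arithmetic gap in your handling of the first regime. You commit to the path-reporting variant from the outset, whose round complexity is $O(n^\rho/\rho\cdot\log^3 n\cdot\beta^2)$, and then assert that this is at most $s\cdot O((\log_s n)/\eps)^{2.02\log_s n}\cdot\log^4 n$. But with $\beta=O((\log_s n)/\eps)^{2.02\log_s n}$, the factor $\beta^2$ has exponent $4.04\log_s n$, not $2.02\log_s n$; doubling the exponent cannot be absorbed into the $O$-notation, so your claimed bound does not follow. The paper sidesteps this by first carrying out the \emph{distance-only} computation (using the non-path-reporting corollary, whose cost has a single $\beta$ factor, thus matching the stated bound), and only afterwards remarking that the path-reporting version incurs an additional $\beta$ factor. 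In the second and third regimes this extra $\beta$ factor genuinely disappears into the $O(\cdot)$ exponent (since $\beta^2$ has the same shape $(1/\eps)^{O(\cdot)}\cdot 2^{O(\cdot)}$ there), which is why those two bounds do hold for paths as well; but in the generic regime it does not, and your proof needs to either restrict the first bound to distances (as the paper implicitly does) or replace the exponent $2.02\log_s n$ by $4.04\log_s n$ (equivalently, any $O(\log_s n)$). A smaller point: the Bellman-Ford explorations from the sources only need $\beta$ iterations, not $2\beta+1$; the latter bound is specific to the hopset construction itself and is harmless here but unnecessary.
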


\subsection{CONGEST Model}

Computing approximate shortest paths in the CONGEST model using hopsets is somewhat more involved. We shall follow the method of \cite{HKN15}, and give full details for completeness.
Let $G=(V,E,\omega)$ be a weighted graph with $n$ vertices, $0<\eps<1/2$, and let $S\subseteq V$ be a set of $s=|S|$ sources.
%
First we quote a Lemma of \cite{N14}, which efficiently computes hop-limited distances from a given set of sources.
\begin{lemma}[\cite{N14}]\label{lem:N14}
Given a weighted graph $G=(V,E,\omega)$ of hop-diameter $D$, a set $\tV\subseteq V$, and parameters $t\ge 1$ and $0<\eps<1/2$, there is a distributed algorithm that whp runs in $\tilde{O}(|\tV|+t+D)/\eps$ rounds, so that every $u\in V$ will know values $\{\td(u,v)\}_{v\in \tV}$ satisfying\footnote{The computed values are symmetric, that is, $\td(u,v)=\td(v,u)$ whenever $u,v\in \tV$.}
\begin{equation}\label{eq:duv}
d_G^{(t)}(u,v)\le \td(u,v)\le (1+\eps)d_G^{(t)}(u,v)~.
\end{equation}
\end{lemma}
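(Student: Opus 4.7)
The plan is to follow Nanongkai's strategy \cite{N14} of combining weight-scale rounding with pipelined Bellman-Ford explorations that use random starting delays to control CONGEST bandwidth.

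\textbf{Weight-scale reduction.} First I would reduce the problem to $O(\log n)$ instances in which all edge weights are small positive integers. For each geometric scale $\sigma=2^i$ covering the possible range of $t$-limited distances, define a rescaled weight $\omega_\sigma(e):=\lceil \omega(e)\cdot t/(\eps\cdot\sigma)\rceil$. A $t$-hop path whose true length falls in $(\sigma/2,\sigma]$ then has $\omega_\sigma$-length in the integer range $[0,O(t/\eps)]$, and the per-edge rounding error of at most $\eps\sigma/t$ accumulates over the $\le t$ edges into at most an additive $\eps\sigma\le 2\eps\cdot d_G^{(t)}(u,v)$, yielding the required $(1+\eps)$-stretch after undoing the scaling. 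The final estimate $\td(u,v)$ is the minimum across scales. Global quantities needed to define the scales (max weight, diameter estimate, etc.) can be precomputed by a single convergecast/broadcast on a BFS tree of $G$ in $O(D)$ rounds.

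\textbf{Pipelined Bellman-Ford with random delays.} For a fixed scale, I would run, in parallel, a $t$-iteration Bellman-Ford from every source $v\in\tV$ using the integer weights $\omega_\sigma$. To avoid too many simultaneous source-updates on the same edge, every source $v$ independently draws a delay $\delta_v$ uniformly from $[0,c|\tV|\log n]$, starts broadcasting at global round $\delta_v$ (the common clock comes from the BFS tree constructed above), and then performs $t$ rounds of relaxations. Since $\omega_\sigma$-distances are integers bounded by $O(t/\eps)$, each vertex changes its estimate for a given source at most $O(t/\eps)$ times during the whole run, not once per round, which will be crucial to bounding congestion.

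\textbf{Congestion bound (main obstacle).} The heart of the argument is to prove that, whp, no edge needs to carry more than $O(\log n)$ distance updates in any single round. Intuitively, $|\tV|$ sources each produce at most $O(t/\eps)$ update messages spread over a window of length $\Theta(|\tV|\log n+t)$ rounds; the expected load on any edge at any round is $O(1/\log n)$, and a Chernoff bound followed by a union bound over all edges and rounds gives the $O(\log n)$ bound. This is the most delicate step because a naive analysis would give either a factor of $|\tV|$ or a factor of $t$ in the per-round load. Once congestion is controlled, each round of the pipelined BF can be implemented in $O(1)$ CONGEST rounds.

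\textbf{Putting it together.} The total number of rounds for one scale is $O(|\tV|\log n+t+D)$: $O(|\tV|\log n)$ for the delay window, $O(t)$ for the BF depth after the latest source has started, and $O(D)$ for the initial global synchronization. Multiplying by $O(\log n)$ scales gives $\tilde O(|\tV|+t+D)$, and the multiplicative $1/\eps$ factor in the lemma comes from the $O(t/\eps)$ blow-up per scale absorbed into the pipelining (either by running the BF depth up to $O(t/\eps)$, or by handling $O(1/\eps)$ finer sub-scales within each $2^i$ band), yielding $\tilde O(|\tV|+t+D)/\eps$. Symmetry $\td(u,v)=\td(v,u)$ for $u,v\in\tV$ is obtained by defining $\td(u,v):=\min\{\td_u(v),\td_v(u)\}$ after both directed computations have completed.
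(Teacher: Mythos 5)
This lemma is quoted verbatim (with attribution) from Nanongkai~\cite{N14}; the paper supplies no proof of its own, only Remark~\ref{rem:parents} about recovering parent pointers, so there is no in-paper argument to compare your sketch against. As a reconstruction of the~\cite{N14} argument, your outline has the right overall shape: geometric weight rounding to small integers, pipelined explorations from the sources, and random start-time offsets to spread congestion.

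However, the congestion calculation, which you yourself flag as the crux, does not come out the way you assert. You bound the number of estimate changes per (source, vertex) pair by $O(t/\eps)$, giving a total of $|\tV|\cdot O(t/\eps)$ update messages per vertex spread over a window of $\Theta\!\left(|\tV|\log n + t/\eps\right)$ rounds (not $\Theta(|\tV|\log n + t)$ --- the Bellman--Ford depth in the rounded graph is $O(t/\eps)$, not $t$). The resulting expected per-round load is $|\tV|(t/\eps)\big/\!\left(|\tV|\log n + t/\eps\right)$, which is $\Omega(|\tV|)$ when $t/\eps \ge |\tV|\log n$ and $\Omega\!\left(t/(\eps\log n)\right)$ otherwise --- never the $O(1/\log n)$ you claim --- so a Chernoff bound starting from this expectation cannot deliver $O(\log n)$ per-round congestion. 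The missing ingredient is that the integer-weighted exploration from each source $v$ should be executed in a layered, Dijkstra-like manner, in which a vertex $u$ is \emph{finalized} for source $v$ at round exactly $\delta_v$ plus the rounded distance from $v$ to $u$, and forwards that value to its neighbors exactly once. This gives one message per source per vertex, i.e.\ $|\tV|$ messages per vertex spread over a window of length $\Theta(|\tV|\log n)$; now the expected per-round load is $O(1/\log n)$ and the Chernoff-plus-union-bound step goes through, after which one dilates the schedule by an $O(\log n)$ factor so that each macro-round can flush the whp $O(\log n)$ simultaneous finalizations (this disappears into the $\tilde O$). A free-running Bellman--Ford, where a vertex pushes an update every time its estimate improves, simply does not have this one-message-per-source property; your ``$O(t/\eps)$ updates per pair'' bound is a factor of $\Theta(t/\eps)$ too weak, which is precisely the factor-of-$t$ (or factor-of-$|\tV|$) loss you were trying to avoid.
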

\begin{remark}\label{rem:parents}
While not explicitly stated in \cite{N14}, the proof also yields that each $v\in V$ knows, for every $u\in \tV$, a parent $p=p_u(v)$ which is a neighbor of $v$ satisfying
\begin{equation}\label{eq:p-u}
\td(v,u)\le \omega(v,p)+\td(p,u)~.
\end{equation}
\end{remark}

Let $\tV\subseteq V$ be a random set of vertices, such that each $v\in V$ is included in $\tV$ independently with probability $1/\sqrt{sn}$. Note that whp $|\tV|\le \sqrt{n/s}\cdot\ln n$, so that $s\cdot|\tV|=\tilde{O}(\sqrt{ns})$.
The following claim argues that the random sample $\tV$ hits every shortest path somewhere in its first $\tilde{O}(\sqrt{sn})$ vertices.
\begin{claim}\label{claim:on-path}
The following holds whp: for every $x,y\in V$, there exists $u\in\tV\cup\{y\}$ on the shortest path from $x$ to $y$ in $G$, such that $d_G^{(4\sqrt{sn}\cdot\ln n)}(x,u)=d_G(x,u)$.
\end{claim}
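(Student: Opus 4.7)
\medskip

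\noindent \textbf{Proof plan.}
The plan is to fix, once and for all, a canonical shortest path $\pi(x,y)$ for every ordered pair $(x,y)\in V\times V$ (for instance, the lexicographically smallest shortest path with respect to some fixed ordering of vertex IDs), so that the claim is well-defined. The statement is really about the following event $\cE(x,y)$: among the first $L := 4\sqrt{sn}\cdot\ln n$ vertices of $\pi(x,y)$ counted from $x$, at least one lies in $\tV\cup\{y\}$. If $\cE(x,y)$ holds and $u$ is the first such vertex along $\pi(x,y)$, then the subpath $\pi(x,u)\subseteq\pi(x,y)$ has at most $L$ hops and equals a shortest $x$-$u$ path in $G$, so $d_G^{(L)}(x,u)=d_G(x,u)$, which is exactly what is required.

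Next I would bound $\Pr[\neg \cE(x,y)]$ for a single pair. If $\pi(x,y)$ has at most $L$ hops then $y$ itself is among the first $L$ vertices and $\cE(x,y)$ holds deterministically, so it suffices to treat the case that $\pi(x,y)$ contains more than $L$ internal vertices after $x$. The first $L$ of these are distinct vertices, each of which is included in $\tV$ independently with probability $1/\sqrt{sn}$. Hence
\[
\Pr[\neg\cE(x,y)] \;\le\; \left(1 - \tfrac{1}{\sqrt{sn}}\right)^{L} \;\le\; \exp\!\left(-\tfrac{L}{\sqrt{sn}}\right) \;=\; \exp(-4\ln n) \;=\; n^{-4}.
\]

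Finally I would take a union bound over all $n^2$ ordered pairs $(x,y)\in V\times V$: the probability that some pair fails is at most $n^2\cdot n^{-4}=n^{-2}$, so with probability at least $1-n^{-2}$ the event $\cE(x,y)$ holds simultaneously for every pair, which is exactly the \emph{whp} guarantee claimed. The only mild subtlety, and the one I would be most careful about, is that the set $\tV$ and the canonical shortest paths $\pi(x,y)$ are formally independent, so that the first $L$ vertices of $\pi(x,y)$ really are a fixed (data-independent) set of $L$ distinct vertices to which the Bernoulli sampling bound applies; this is why the canonical-path convention is needed rather than arguing about an arbitrary shortest path whose identity could depend on $\tV$.
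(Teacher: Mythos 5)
Your proof is correct and follows essentially the same route as the paper's: handle the short-path case by taking $u=y$, bound the failure probability for a long path by $(1-1/\sqrt{sn})^{4\sqrt{sn}\ln n}\le n^{-4}$, and union-bound over $O(n^2)$ pairs. Your explicit remark about fixing the canonical shortest paths independently of $\tV$ is a sensible clarification of a point the paper treats implicitly.
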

\proof
Fix some $x,y\in V$. If it is the case that the shortest path between them $\pi(x,y)$ in $G$ is comprised of at most $4\sqrt{sn}\cdot\ln n$ vertices, then we can take $u=y$. Otherwise, the probability that none of the first $4\sqrt{sn}\cdot\ln n$ vertices on $\pi(x,y)$ is sampled to $\tV$ is bounded by $(1-1/\sqrt{sn})^{4\sqrt{sn}\cdot\ln n}\le n^{-4}$. Taking a union bound on the $O(n^2)$ pairs concludes the proof.
\QED

Let $\tG=(\tV,\tE)$ be the graph on the vertex set $\tV$ of size $m=|\tV|$, with edge weights $\td(u,v)$ given by applying \lemmaref{lem:N14} on $G$ with parameters $t=4\sqrt{sn}\cdot\ln n$ and $\eps$. This will take $\tilde{O}((D+\sqrt{sn})/\eps)$ rounds. Next, construct a $(\beta,\eps)$-hopset $H$ for $\tG$ (embedded in $G$) as in \corollaryref{cor:congest_hopset_reduction}, with $\kappa=\sqrt{\log m/\log\log m}$, $\rho=1/\kappa$ (and, say $\mu=0.01$). This results in
\begin{equation}
\label{eq:beta_m_appl}
\beta= (1/\eps)^{O(\sqrt{{\log m} \over {\log\log m}})} \cdot 2^{O(\sqrt{\log m \cdot \log\log m})}~,
\end{equation}
and the number of rounds required is
$(D + m ) \cdot (1/\eps)^{O(\sqrt{\log m/\log\log m})} \cdot 2^{O(\sqrt{\log m \cdot \log\log m})}$.
Now, for each $s\in S$, each $u\in\tV$ holds an initial estimate $\td(u,s)$ given by \lemmaref{lem:N14}. We conduct $\beta$ iterations of Bellman-Ford explorations in $\tG\cup H$ for each of the vertices of $S$. That is, in every iteration, every $u\in\tV$ sends $s$ messages containing its current distance estimate for each $s\in S$, and updates its estimates according to the messages of other vertices.
This requires additional $O(D+ms)\cdot\beta$ rounds.
As a result, for every pair $s\in S$ and $u\in\tV$, the vertex $u$ holds an estimate $\hat{d}(s,u)$.
We broadcast all these values to the entire graph, in $O(D+sm)$ rounds.

Finally, for each $v\in V$ and $s\in S$, the vertex $v$ computes the value $\hat{d}(v,s)=\min_{u\in\tV}\{\td(v,u)+\hat{d}(u,s)\}$ as its approximate distance to $s$. The total number of rounds required is
$(D+\sqrt{ns})\cdot (1/\eps)^{O(\sqrt{{\log m} \over {\log\log m}})} \cdot 2^{O(\sqrt{\log m \cdot \log\log m})}$.
As above, whenever $s=n^{\Omega(1)}$ it is  better to set $\kappa=\log_{s^{1-\mu}/\log^4n}n$, and $\rho = 1/\kappa$. Then $\beta=(1/\eps)^{O(1)}$, and the number of rounds will be $(D+\sqrt{ns})\cdot(1/\eps)^{O(1)}$.

If one is interested in the actual paths, then we  use \remarkref{rem:parents} to trace down the parents, and the actual approximate path from any $v\in V$ to any $u\in \tV$ can be derived. Also, we shall use the path-reporting version of our hopset.  This introduces an additional factor of $\beta$ to the number of rounds, and enables every vertex in the graph to find out the actual paths that implement every hopset edge (for every hopset edge we broadcast also the path of length at most $\beta$ that implements it). In particular, $v$ will be able to infer the paths for both $\td(v,u)$ and $\hat{d}(u,s)$.

It remains to prove the correctness of the algorithm. First consider any $y\in\tV$ and $s\in S$. Let $u\in\tV$ be the vertex on $\pi(s,y)$ guaranteed by \claimref{claim:on-path} (it could be that $u=y$). By \lemmaref{lem:N14},
\begin{equation}\label{eq:ppo}
\td(s,u)\le (1+\eps)d_G^{(t)}(s,u)=(1+\eps)d_G(s,u)~.
\end{equation}
We also have that
\begin{equation}\label{eq:ppo1}
d_{\tG}(y,u)\le(1+\eps)d_G(y,u)~,
\end{equation}
where \eqref{eq:ppo1} holds because every edge along the shortest path from $y$ to $u$ was stretch in $\tG$ by at most $1+\eps$. Finally, the property of hopsets suggests that
\begin{equation}\label{eq:ppo2}
d^{(\beta)}_{\tG\cup H}(y,u)\le(1+\eps)d_{\tG}(y,u)\stackrel{\eqref{eq:ppo1}}{\le}(1+\eps)^2d_G(y,u)~.
\end{equation}
Note that in the Bellman-Ford iterations, the vertex $y$ could have heard the estimate from $u$ using a path of length $\beta$ in $\tG\cup H$. Combining \eqref{eq:ppo} and \eqref{eq:ppo2} yields that
\begin{equation}\label{eq:ppo3}
\hat{d}(y,s)\le d^{(\beta)}_{\tG\cup H}(y,u)+\td(s,u)\le(1+3\eps)d_G(y,u)+(1+\eps)d_G(s,u)\le(1+3\eps)d_G(y,s)~.
\end{equation}

Consider now some arbitrary $v\in V$ and $s\in S$.
%
By \claimref{claim:on-path}, there exists $u\in \tV\cup\{s\}$ on the shortest path from $v$ to $s$ in $G$ with $d_G^{(t)}(v,u)=d_G(v,u)$. By \lemmaref{lem:N14},
\[
\hat{d}(v,s)\le\td(v,u)+\hat{d}(u,s)\stackrel{\eqref{eq:ppo3}}{\le} (1+\eps)d_G(v,u) + (1+3\eps)d_G(u,s)\le(1+3\eps)d_G(v,s)~.
\]
We summarize by the following theorem.

\begin{theorem}\label{thm:congest-paths}
For any graph $G=(V,E,\omega)$ with $n$ vertices and hop-diameter $D$, a parameter $0<\eps<1/2$, and $S\subseteq V$ of size $s$, there is an algorithm in the CONGEST model, that whp computes $(1+\eps)$-approximate $S\times V$ shortest paths in
$O(D+\sqrt{ns})\cdot    (1/\eps)^{O(\sqrt{{\log n} \over {\log\log n}})} \cdot 2^{O(\sqrt{\log n \cdot \log\log n})}$ rounds.
Whenever $s=n^{\Omega(1)}$, we  have only $\tilde{O}(D+\sqrt{ns})\cdot(1/\eps)^{O(1)}$ rounds.
\end{theorem}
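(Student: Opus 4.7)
The plan is to combine three tools already established in the excerpt: the sampling-based shortcut of Claim~\ref{claim:on-path}, the hop-limited distance subroutine of Lemma~\ref{lem:N14}, and the path-reporting hopset of Corollary~\ref{cor:congest_hopset_reduction}. First, sample $\tV\subseteq V$ by including each vertex independently with probability $1/\sqrt{sn}$, so that whp $|\tV| = m = \tO(\sqrt{n/s})$ and every $s$-$v$ shortest path has a vertex $u\in\tV\cup\{v\}$ within the first $t = 4\sqrt{sn}\ln n$ hops from the source, such that $d_G^{(t)}(s,u) = d_G(s,u)$. Then invoke Lemma~\ref{lem:N14} with this $t$ and with $\tV$ to obtain, at every vertex $v\in V$, $(1+\eps)$-approximations $\td(v,u)$ of $d_G^{(t)}(v,u)$ for all $u\in\tV$, in $\tO((D+\sqrt{sn})/\eps)$ rounds (Remark~\ref{rem:parents} supplies parent pointers for path recovery).

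Second, use the values $\td(u,u')$ for $u,u'\in\tV$ as edge weights of a virtual graph $\tG=(\tV,\tE,\td)$ embedded in $G$. Apply Corollary~\ref{cor:congest_hopset_reduction} to $\tG$ to build a path-reporting $(\beta,\eps)$-hopset $H$ with $\kappa = \sqrt{\log m/\log\log m}$ and $\rho = 1/\kappa$, obtaining
\[
\beta = (1/\eps)^{O(\sqrt{\log m/\log\log m})} \cdot 2^{O(\sqrt{\log m\log\log m})}
\]
in $(D + m^{1+\rho}) \cdot \beta \cdot \polylog(m)$ rounds; since $m = \tO(\sqrt{n/s})$ we have $m^{1+\rho}\le \tilde O(m)$ up to the $\beta$-style overhead. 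Third, run $\beta$ parallel Bellman-Ford iterations on $\tG\cup H$ from every source in $S$ simultaneously. Each round every vertex of $\tV$ sends one distance estimate per source over its adjacent edges in $\tG\cup H$, which we pipeline through the global BFS tree of $G$; as there are $s\cdot m = \tO(\sqrt{ns})$ messages per iteration, each iteration costs $O(D+\sqrt{ns})$ rounds, for a total of $O((D+\sqrt{ns})\cdot\beta)$. Finally each $v\in V$ locally computes $\hd(v,s) := \min_{u\in\tV}\{\td(v,u) + \hd(u,s)\}$, where $\hd(u,s)$ was broadcast after the Bellman-Ford phase.

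For correctness: fix $v\in V$ and $s\in S$ and let $u\in\tV\cup\{s\}$ be the Claim~\ref{claim:on-path} witness on $\pi(v,s)$. Then $\td(v,u)\le (1+\eps)d_G(v,u)$ by Lemma~\ref{lem:N14}. For the $u$-to-$s$ portion, apply Claim~\ref{claim:on-path} again inside $\tG\cup H$: $\tG$ stretches $G$-distances by at most $1+\eps$ on the relevant shortest path, $H$ is a $(\beta,\eps)$-hopset for $\tG$, and the final Lemma~\ref{lem:N14} estimate provides another $1+\eps$ factor, so after rescaling $\eps'=\eps/C$ for a suitable constant $C$ we obtain $\hd(v,s)\le(1+\eps)d_G(v,s)$; the lower bound $\hd(v,s)\ge d_G(v,s)$ follows because all estimates dominate true distances. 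Summing the three phases yields $(D+\sqrt{ns})\cdot (1/\eps)^{O(\sqrt{\log n/\log\log n})} \cdot 2^{O(\sqrt{\log n\log\log n})}$ rounds.

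For the regime $s=n^{\Omega(1)}$, re-tune the hopset parameters by choosing $\kappa = \log_{s^{1-\mu}/\log^4 n}n = O(1)$ and $\rho=1/\kappa$, which collapses $\beta$ to $(1/\eps)^{O(1)}$; the dominant cost remains the Bellman-Ford pipeline, giving $\tO(D+\sqrt{ns})\cdot(1/\eps)^{O(1)}$ rounds. Path recovery is handled by the parent pointers of Remark~\ref{rem:parents} (for the $\td$-edges) and by the path-reporting property of $H$ (which exposes, to every vertex, the underlying $\tG\cup H^{(k-1)}$-paths implementing each hopset edge), so every $v\in V$ can reconstruct the approximate $v$-$s$ path in $G$. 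The main obstacle here is the pipelining step in phase three: naively each source would contribute $\beta$ rounds per iteration and the hop-diameter of $\tG\cup H$ could be as large as $m$; the fix is that we only need each $v\in\tV$ to transmit its $s$ estimates per global round, and these $sm=\tO(\sqrt{ns})$ messages fit in $O(D+\sqrt{ns})$ rounds via standard pipelined convergecast/broadcast on the BFS tree (as in Section~\ref{sec:congest}), so the $\beta$ factor multiplies only the number of iterations, not the per-iteration cost.
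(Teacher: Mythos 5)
Your proposal reproduces the paper's own proof essentially line for line: sample $\tV$ with probability $1/\sqrt{sn}$, invoke Claim~\ref{claim:on-path} and Lemma~\ref{lem:N14} with $t=4\sqrt{sn}\ln n$ to form the virtual graph $\tG$, build the path-reporting hopset of Corollary~\ref{cor:congest_hopset_reduction} with $\kappa=\sqrt{\log m/\log\log m}$ (and $\kappa=\log_{s^{1-\mu}/\log^4n}n$ when $s=n^{\Omega(1)}$), run $\beta$ pipelined Bellman--Ford iterations from all of $S$ on $\tG\cup H$, broadcast, and combine locally. The correctness chain and the $O(D+\sqrt{ns})\cdot\beta$ accounting match the paper; the only stylistic difference is the slightly loose phrase about applying Claim~\ref{claim:on-path} ``inside $\tG\cup H$'' where the paper applies it in $G$ to decompose $\pi(y,u)$ into $\le t$-hop pieces between $\tV$-vertices, but the substance is the same.
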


\subsection{Streaming Model}

Let $G=(V,E,\omega)$ be a weighted graph with $n$ vertices, set $0<\eps<1/2$, and let $S\subseteq V$ be a set of $s=|S|$ sources. Similarly to the Congested Clique model, we first compute a $(\beta,\eps)$-hopset $H$ for $G$, and then using additional $O(s\cdot\beta)$ passes over the stream, conduct $\beta$ iterations of Bellman-Ford exploration in $G\cup H$ separately for each of the $s$ sources. Note that each exploration requires linear in $n$ space, plus the space required to store the hopset. The latter space will be sub-linear in the output size, so we need to assume that after computing distances (or paths) from a certain source, we may output the result and erase it to free memory.

In the  case of path-reporting hopset, for each edge $e$ of the hopset, the hopset stores a path with at most $\beta$ edges $e_1,\ldots,e_\beta$ of a sub-hopset of lower scale that implements it. The same is true for each of the edges $e_1,\ldots,e_\beta$ as well, recursively. So the paths can be retrieved given  our path-reporting hopset.

We shall use the hopset given by \corollaryref{cor:str_hopset} with parameter $\kappa=(\log n)/4$.
Thus $\beta=((\log\log n+1/\rho)/\eps)^{O(\log\log n+1/\rho)}$, while for the path-reporting case $\beta_{\rm path}=((\log n)/\eps)^{O(\log\log n+1/\rho)}$. The space requirement is $O(n\log^2n)$ (for path-reporting it is larger by a factor of $\beta_{\rm path}$), and the number of passes required is $O(s\cdot\beta+n^\rho\cdot\beta\cdot\log^2n)$.

We consider several possible regimes. Whenever $s>n^{1/\log\log n}$ we set $\rho=1/\log\log n$, when $2^{\sqrt{\log n\log\log n}}<s\le n^{1/\log\log n}$ take $\rho = {{\log s} \over {\log n}}$, and for $s\le 2^{\sqrt{\log n\log\log n}}$ we choose $\rho=\sqrt{\log\log n/\log n}$. The resulting algorithms are described in the following theorem.

\begin{theorem}\label{thm:stream-paths}
For any graph $G=(V,E,\omega)$ with $n$ vertices, a parameter $0<\eps<1/2$, and $S\subseteq V$ of size $s$,  there is an algorithm in the streaming model that whp computes $(1+\eps)$-approximate $S\times V$ shortest paths, with the following resources:
\begin{itemize}
\item
Whenever $s>n^{1/\log\log n}$, the algorithm performs $s\cdot (\log n)^{O(\log^{(3)} n + \log 1/\eps)}$ passes over the stream, and use $O(n\log^2n)$ space. For path-reporting the number of passes is $s\cdot(\log n)^{O(\log\log n + \log 1/\eps)}$, while the space increases to $n\cdot(\log n)^{O(\log\log n + \log 1/\eps)}$.

\item
Whenever $2^{\sqrt{\log n\log\log n}}<s\le n^{1/\log\log n}$, the algorithm makes $s\cdot n^{O(\log\log n+\log 1/\eps)/\log s}$ passes, and the space requirement is $O(n\log^2n)$ (or $n^{1+O(\log\log n+\log 1/\eps)/\log s}$ space for path-reporting).

\item
Whenever $s\le 2^{\sqrt{\log n\log\log n}}$, the algorithm uses $2^{O(\sqrt{\log n\log\log n})}$ passes, and the space is $O(n\log^2n)$ (or $n\cdot 2^{O(\sqrt{\log n\log\log n})}$ space for path-reporting).
\end{itemize}
\end{theorem}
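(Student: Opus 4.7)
The approach is to invoke Corollary~\ref{cor:str_hopset} once with $\kappa=(\log n)/4$ (so $n^{1/\kappa}=O(1)$ and the hopset occupies $O(n\log^2 n)$ space in expectation), and then to handle the $s$ sources one after another by Bellman--Ford on $G\cup H$. Concretely, once $H$ is in memory, for each $u\in S$ I would maintain $O(n)$ distance estimates $\hd(\cdot)$ and run $\beta$ Bellman--Ford iterations, one per pass over the stream: each pass first relaxes every $E$-edge that arrives, then locally relaxes every edge of $H$. The hopset property $d_{G\cup H}^{(\beta)}(u,v)\le(1+\eps)d_G(u,v)$ ensures the final estimates are $(1+\eps)$-approximate. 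After outputting the estimates for $u$ I would clear them and proceed to the next source. For path reporting I would switch to the path-reporting variant of Corollary~\ref{cor:str_hopset}, which stores with each hopset edge the explicit list of underlying edges realizing it; combining this information with Bellman--Ford predecessor pointers reconstructs the approximate $u$-to-$v$ paths.

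The total number of passes is $O(s\beta)$ for source processing, plus the $O(n^\rho\beta\log^2 n)$ construction passes of Corollary~\ref{cor:str_hopset}; the expected space is $O(n\log^2 n)$ (or $O(n\log^2 n\cdot\beta)$ in the path-reporting case) for the hopset, plus $O(n)$ for the current source's working array. What remains is a parameter sweep over $\rho$ that makes (i) $n^\rho\le s$ (so the construction passes are absorbed into $s\beta$) and (ii) the resulting $\beta$ fit the claimed form. I would take $\rho=1/\log\log n$ in regime~1, $\rho=(\log s)/\log n$ in regime~2 (whence $n^\rho=s$ exactly), and $\rho=\sqrt{\log\log n/\log n}$ in regime~3. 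Substituting into $\beta=O\bigl(\tfrac{1}{\eps\rho}(\log(\rho\kappa)+1/\rho)\bigr)^{\log(\rho\kappa)+(1+\mu)/\rho}$ gives base $\Theta((\log\log n)^2/\eps)$ with exponent $\Theta(\log\log n)$ in regime~1; base $\Theta((\log n)^2/(\eps(\log s)^2))$ with exponent $\Theta(\log n/\log s)$ in regime~2; and base $\Theta(\log n/(\eps\log\log n))$ with exponent $\Theta(\sqrt{\log n/\log\log n})$ in regime~3. Invoking the identity $a^b=2^{b\log a}$ rewrites $\beta$ as $(\log n)^{O(\log^{(3)} n+\log(1/\eps))}$, $n^{O((\log\log n+\log(1/\eps))/\log s)}$, and $2^{O(\sqrt{\log n\log\log n})}$ respectively, matching the three bullets. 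The path-reporting variant inflates the base of $\beta$ by a factor of $\kappa=\Theta(\log n)$ and the storage by a factor of $\beta$, which produces the listed path-reporting numbers.

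The hard part is bookkeeping rather than a new idea. One has to check, in each regime, that $\log(\rho\kappa)+1/\rho$ is dominated by $1/\rho$ in regimes~2 and~3 (the conditions $s\le n^{1/\log\log n}$ and $s\le 2^{\sqrt{\log n\log\log n}}$ are precisely what enforce this), and that the $(1/\eps)^{O(1/\rho+\log\log n)}$ factors in $\beta$ collapse into the $(\log n)^{O(\log(1/\eps))}$, $n^{O(\log(1/\eps)/\log s)}$, and $2^{O(\sqrt{\log n/\log\log n}\cdot\log(1/\eps))}$ terms that appear in the theorem. The remaining verifications ($n^\rho\le s$, $\rho\ge 1/\kappa$, and the scaling of space) are routine; no step beyond the careful tuning of $\rho$ requires more than the already-stated hopset guarantees.
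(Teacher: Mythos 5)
Your proposal matches the paper's proof in all essentials: same choice of $\kappa=(\log n)/4$ in Corollary~\ref{cor:str_hopset}, same per-source Bellman--Ford structure with $O(s\beta+n^\rho\beta\log^2 n)$ total passes, the same three regime choices of $\rho$ ($1/\log\log n$, $(\log s)/\log n$, and $\sqrt{\log\log n/\log n}$), and the same handling of the path-reporting variant. The algebraic simplifications of $\beta$ in each regime also coincide with the paper's, so this is the intended argument.
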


We also remark that whenever $s=n^{\Omega(1)}$, one can also choose a smaller $\kappa=O(1)$, increasing the space to $O(n^{1+1/\kappa}\cdot\log^2n)$, while setting $\rho=1/\kappa$ so that $\beta$ is a constant. This will yield near optimal $s\cdot(1/\eps)^{O(1)}$ passes over the stream (also for path-reporting).

\subsection{PRAM Model}
\label{sec:pram_applications}

We use Theorem \ref{thm:pram_reduction}  with $t = 1$ to construct a $(\beta,\eps)$-hopset $H$ of expected size $O(n^{1 + 1/\kappa} \cdot \log n)$ with $\beta$
given by (\ref{eq:pram_beta_reduction}), in $O(\beta \cdot (\log \kappa + 1/\rho) \cdot \log^2 n)$ parallel time, using $O((|E| + n^{1+1/\kappa} \cdot \log^2 n) \cdot n^\rho \cdot \log^2 n)$ processors.

To compute $(1+\eps)$-approximate shortest distances (or paths) for $S \times V$, for some subset $S \subseteq V$ of vertices, we now conduct $\beta$-limited Bellman-Ford explorations, in parallel, separately from each of the $|S|$ origins, in $G \cup H$.
We use $s = |S|$ processors $p_{v,1},\ldots,p_{v,s}$ for every vertex $v \in V$, and $s$ processors $p_{e,1}\ldots,p_{e,s}$, for every edge $e \in G \cup H$.
As was argued in Section \ref{sec:pram_hopset}, these explorations can now be completed in $O(\beta \cdot \log n)$ EREW PRAM time.

If we are interested in paths (rather than distance estimates), then we use a path-reporting hopset.
As a result, for every pair $(s,v) \in S \times V$, we obtain a path $\pi(s,v)$ with at most $\beta$ edges, some of which may belong to $G$, and other belong to the hopset $H$.
For each edge $e \in \pi(s,v) \cap H$, we store a path with at most $\beta$ other edges $e'$ of $G \cup H$, and the same is true for each hopset edge $e'$, recursively.
The depth of the induced tree of edges is bounded by $\log O(n/\eps)$, that is, by the aspect ratio of each of the graphs $G_k$ for which single-scale hopsets are constructed.

Hence, the entire path  can be retrieved in $O(\log n)$ parallel time, using processors that were used for conducting the Bellman-Ford explorations from vertices of $S$.
We cannot, however, retrieve all the paths simultaneously, within these resource bounds.
So our algorithm provides an implicit solution for $(1+\eps)$-approximate shortest paths problem, i.e., it returns a data structure from which each of the $S \times V$ approximate shortest paths can be efficiently extracted. Though not said explicitly, to the best of our understanding, this is also the case with Cohen's parallel $(1+\eps)$-approximate shortest paths algorithm \cite{C00} as well.

\begin{theorem}
\label{thm:pram_asp}
For any $n$-vertex graph $G = (V,E,\omega)$ of diameter $\Lambda$, a set $S\subseteq V$, and any $2\le\kappa\le(\log n)/4$, $1/2 > \rho \ge 1/\kappa$,  $0 < \eps \le 1$,
our parallel  algorithm computes  a $(1 + \eps)$-approximate shortest  distances (and an implicit $(1+\eps)$-approximate shortest paths; see above) in
$O\left( {{(\log \kappa + 1/\rho) \cdot \log n} \over \eps}\right)^{\log \kappa +1/\rho + 1} \cdot \log^2 n$ EREW PRAM time, using
$O((|E| + n^{1+1/\kappa} \cdot \log^2 n) \cdot (n^\rho \cdot \log^2 n + |S|))$ processors, whp.
\end{theorem}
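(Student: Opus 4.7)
The plan is to reduce the problem to (i) constructing an appropriate hopset via Theorem~\ref{thm:pram_reduction} and (ii) running $|S|$ Bellman-Ford explorations in parallel on top of it, using the PRAM primitives already developed in Section~\ref{sec:pram_hopset}. First, I invoke Theorem~\ref{thm:pram_reduction} with the parameter $t=1$. This yields, whp, a $(\beta,\eps)$-hopset $H$ for $G$ of expected size $O(n^{1+1/\kappa}\log n)$, with
$$\beta \;=\; O\!\left({{(\log\kappa + 1/\rho)\cdot\log n}\over \eps}\right)^{\log\kappa + 1/\rho},$$
constructed in $O(\beta\cdot(\log\kappa + 1/\rho)\cdot\log^2 n)$ EREW PRAM time using $O((|E|+n^{1+1/\kappa}\log^2 n)\cdot n^\rho\log^2 n)$ processors. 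In the path-reporting variant, every hopset edge is annotated with a path (of $\le\beta$ edges, recursively through lower scales) that implements it; the reduction of Section~\ref{sec:pram_reduction} bounds the recursion depth by $O(\log(n/\eps))=O(\log n)$, so path retrieval later takes $O(\log n)$ time.

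Next, I run a single coordinated $(\beta+1)$-limited Bellman-Ford in $G\cup H$ from all $|S|$ sources simultaneously, as in Section~\ref{sec:pram_hopset}. For every vertex $v\in V$, I allocate $|S|$ processors $p_{v,1},\dots,p_{v,|S|}$, one per source, each storing an estimate $\td(s,v)$; similarly, every edge $e\in E\cup H$ gets $|S|$ processors $p_{e,1},\dots,p_{e,|S|}$, one per source. In each Bellman-Ford iteration, the processors $\{p_{e,s}:e\ni v\}$ compute candidate relaxed values $\td(s,u)+\omega(u,v)$ for each incoming edge $(u,v)$, and a parallel-prefix minimum over the incident edges updates $\td(s,v)$. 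All concurrent reads of $\td(s,u)$ needed across edges incident to $u$ are simulated in $O(\log n)$ EREW time using the standard broadcast-tree technique (JaJa, Thm.~10.1), and the per-source min-reduction likewise takes $O(\log n)$. Hence a single iteration costs $O(\log n)$ EREW PRAM time, and all $\beta$ iterations of Bellman-Ford cost $O(\beta\cdot\log n)$ time. The number of processors used in this phase is $O((|E|+|H|)\cdot |S|)=O((|E|+n^{1+1/\kappa}\log n)\cdot|S|)$, which fits inside the claimed bound.

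Combining the two phases, the total parallel time is dominated by $O(\beta\cdot\log n)\cdot\log n=O(\beta\cdot\log^2 n)$ (the hopset's own construction time is $O(\beta\cdot(\log\kappa+1/\rho)\cdot\log^2 n)$, comparable up to the extra $(\log\kappa+1/\rho)$ factor, which is absorbed by raising $\beta$'s exponent from $\log\kappa+1/\rho$ to $\log\kappa+1/\rho+1$). The processor bound is the sum of the two phases: $O((|E|+n^{1+1/\kappa}\log^2 n)\cdot(n^\rho\log^2 n + |S|))$. The correctness of the distance estimates follows directly from the defining property of a $(\beta,\eps)$-hopset: for every pair $(s,v)\in S\times V$, there is a path in $G\cup H$ of at most $\beta$ hops whose length is at most $(1+\eps)d_G(s,v)$, and the Bellman-Ford exploration detects it.

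The only real subtlety, and the step I expect to be the main obstacle, is the path-reporting case. Here the hopset edges selected on the shortest-hop path are themselves implemented by paths of lower-scale hopset edges, which in turn unfold recursively. To extract any single approximate shortest $(s,v)$-path, I walk the predecessor pointers written during Bellman-Ford (which encode the $\beta$-hop sequence in $G\cup H$), then for each hopset edge encountered expand it via its stored implementing path, recursing over at most $O(\log n)$ levels. Using the processors already allocated to the Bellman-Ford phase, each path can be assembled in $O(\log n)$ additional parallel time. As noted above and in analogy with \cite{C00}, we cannot simultaneously materialize all $|S|\cdot n$ paths within the stated processor bound; however, the hopset plus predecessor arrays form an \emph{implicit} representation from which any requested path is extractable in $O(\log n)$ EREW time, which is what the theorem claims. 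The whp guarantees on hopset size, $\beta$, and processor usage all carry over from Theorem~\ref{thm:pram_reduction} via a union bound.
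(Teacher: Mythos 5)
Your proof is correct and follows essentially the same approach as the paper: invoke Theorem~\ref{thm:pram_reduction} with $t=1$ to build the hopset, then run $|S|$ parallel $\beta$-limited Bellman--Ford explorations in $G\cup H$ using $|S|$ processors per vertex/edge, with concurrent reads simulated in $O(\log n)$ EREW time. The handling of the path-reporting subtlety (recursive unfolding over $O(\log n)$ levels, yielding an implicit representation rather than materializing all $|S|\cdot n$ paths) also matches the paper's treatment.
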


Since $\rho \ge 1/\kappa$, the number of processors in Theorem \ref{thm:pram_asp} can be written as $O(|E| \cdot n^\rho \cdot \log^4 n  + |E| \cdot |S| \cdot \log^2 n)$, while incurring a term $2/\rho$ instead $1/\rho$ in the exponent of the running time. (This increase occurs as a result of rescaling $\rho' = 2\rho$.)

When $\kappa$, $\rho$ and $\eps$ are constant, this running time is polylogarithmic in $n$. In Cohen's result \cite{C00} (Theorem 1.1), the running time is polylogarithmic as well, but the exponent is roughly $O({1 \over \rho} \cdot \log \kappa)$, rather than  $O({1 \over \rho}) + \log \kappa$ in our Theorem \ref{thm:pram_asp}.

\bibliographystyle{alpha}
\bibliography{hopset}

\end{document}